\documentclass[runningheads]{llncs}
\usepackage[T1]{fontenc}

\usepackage{amsmath}
\usepackage{amssymb}
\usepackage[english]{babel}
\usepackage[utf8]{inputenc}
\usepackage{float}
\usepackage{tabularx}
\usepackage{hyperref}
\usepackage{nameref,cleveref}
\usepackage{multicol}
\usepackage{wrapfig}

\usepackage[inline]{enumitem}

\usepackage[backend=biber, maxcitenames=2, maxbibnames=99, style=numeric]{biblatex} 
\AtEveryBibitem{
    \clearfield{urlyear}
    \clearfield{urlmonth}
}

\usepackage{aligned-overset}

\allowdisplaybreaks
\usepackage{multirow}
\usepackage{todonotes}
\usepackage{makecell}
\usepackage{booktabs}
\usepackage{subcaption}
\usepackage{thmtools,thm-restate}
\usepackage{tikz}
\usepackage{tcolorbox}
\usepackage{stmaryrd}
\usepackage{subcaption}

\usepackage{aligned-overset}
\usepackage{apptools}
\AtAppendix{\counterwithin{lemma}{section}}

\usepackage{aligned-overset}

\usepackage{tikz}
\usetikzlibrary{positioning,arrows,automata,calc,shapes}
\definecolor{darkgreen}{rgb}{0.0, 0.5, 0.0}
\definecolor{darkred}{rgb}{0.9, 0.0, 0.0}
\definecolor{darkblue}{rgb}{0.0, 0.0, 0.9}

\tikzset{
state/.style={rounded rectangle,draw=black,inner sep=1.1mm,minimum width=5mm,minimum height=3mm},
bullet/.style={circle,draw=black,fill=black,inner sep=0cm, minimum size=0.5mm},
ptran/.style={rounded corners, ->,>=stealth',auto},
}

\usetikzlibrary{shapes.geometric, arrows}
\usetikzlibrary{decorations.pathreplacing}
\usetikzlibrary{calc}
\usetikzlibrary{patterns}

\tikzstyle{product} = [rectangle, rounded corners, 
minimum width=3cm, 
minimum height=1cm,
text centered, 
text width=3cm, 
draw=black, 
fill=gray!70]

\tikzstyle{certificate} = [ellipse, 
minimum width=3cm, 
minimum height=1cm,
text centered, 
text width=3cm, 
draw=black, 
fill=gray!70]

\tikzstyle{io} = [trapezium, 
trapezium stretches=true, %
trapezium left angle=70, 
trapezium right angle=110, 
minimum width=3cm, 
minimum height=1cm, text centered, 
draw=black, fill=blue!30]

\tikzstyle{process} = [rectangle, 
minimum width=3cm, 
minimum height=1cm, 
text centered, 
text width=3cm, 
draw=black, 
fill=gray!20]

\tikzstyle{decision} = [diamond, 
minimum width=3cm, 
minimum height=1cm, 
text centered, 
draw=black, 
fill=green!30]
\tikzstyle{arrow} = [thick,->,>=stealth]

\newcommand{\union}{\cup}

\newcommand{\prob}{\mathsf{Pr}}
\newcommand{\vect}[1]{\mathbf{#1}}
\newcommand{\reals}{\mathbb{R}}
\newcommand{\realsnn}{\reals_{\geq 0}}
\DeclareMathOperator{\suppOp}{supp}
\newcommand{\supp}[2][]{\suppOp_{#1}(#2)}
\DeclareMathOperator{\stateSuppOp}{state-supp}
\newcommand{\stateSupp}[2][]{\stateSuppOp_{#1}(#2)}

\newcommand{\rationals}{\mathbb{Q}}
\newcommand{\card}[1]{{\lvert {#1} \rvert}}
\DeclareMathOperator{\distrOp}{Distr}
\newcommand{\distr}[1]{\distrOp(#1)}
\newcommand{\compBowtie}{\mathbin{\overline{\bowtie}}}

\newcommand{\plh}{\mathcal{F}}
\newcommand{\dqPoly}[2]{\plh_{#1, #2}}
\newcommand{\cqPoly}[2]{\mathcal{H}_{#1, #2}}
\newcommand{\boldLambda}{\pmb{\lambda}}
\newcommand{\boldMu}{\pmb{\mu}}
\newcommand{\boldDelta}{\pmb{\delta}}
\newcommand{\boldGamma}{\pmb{\gamma}}

\newcommand{\expectation}[3][]{\mathbb{E}_{#1}^{#2}[#3]}

\newcommand{\Conj}{\bigwedge}
\newcommand{\Disj}{\bigvee}

\newcommand{\eventually}{\lozenge}
\newcommand{\globally}{\square}

\newcommand{\targetSet}{G}
\newcommand{\mdp}{\mathcal{M}}
\newcommand{\dtmc}{\mathcal{D}}
\newcommand{\states}{S}
\newcommand{\state}{s}
\newcommand{\targets}{F}
\newcommand{\target}{f}
\newcommand{\actions}{Act}
\newcommand{\action}{a}
\newcommand{\init}{s_{in}}
\newcommand{\transMat}{\vect{P}}
\newcommand{\exit}{\bot}

\newcommand{\mdpRF}{(\states \union \targets, \actions, \init, \transMat)}
\newcommand{\SM}{\vect{A}}
\newcommand{\SA}{\mathcal{E}}
\newcommand{\TM}{\vect{T}}
\newcommand{\scheduler}{\sigma}
\newcommand{\schedulers}{\Sigma}
\newcommand{\mSchedulers}{\Sigma_{\textsf{M}}}

\DeclareMathOperator{\pathsOp}{Paths}
\DeclareMathOperator{\lastOp}{last}
\newcommand{\last}[1]{\lastOp({#1})}
\newcommand{\paths}{\pathsOp}
\newcommand{\pathsFin}{\pathsOp_{\mathsf{fin}}}
\newcommand{\Path}{\pi}
\newcommand{\initDistr}{\pmb{\delta}_{in}}

\newcommand{\freq}{\mathsf{freq}}

\newcommand{\mec}{C}
\newcommand{\allMecStates}{\states_{\mathsf{MEC}}}
\newcommand{\MECS}{\mathsf{MEC}}
\newcommand{\quotientState}{\state}

\newcommand{\mems}{\mathsf{M}}
\newcommand{\mem}{m}

\newcommand{\universalDQ}{\ensuremath{(\forall, \lor)}}
\newcommand{\universalCQ}{\ensuremath{(\forall, \land)}}
\newcommand{\existsDQ}{\ensuremath{(\exists, \lor)}}
\newcommand{\existsCQ}{\ensuremath{(\exists, \land)}}

\newcommand{\query}{\Psi}
\newcommand{\prop}[2]{\phi_{#1}^{#2}}
\newcommand{\propAlt}[2]{\psi_{#1}^{#2}}
\newcommand{\automaton}{\mathcal{A}}

\newcommand{\Reach}{\textsf{Reach}}
\newcommand{\ReachInv}{\textsf{ReachInv}}
\newcommand{\MP}{\textsf{MP}}

\DeclareMathOperator{\lrOp}{\mathsf{MP}}
\newcommand{\lrInf}[1]{\underline{\lrOp}(#1)}
\newcommand{\lrSup}[1]{\overline{\lrOp}(#1)}

\usepackage{graphicx}
\addbibresource{ref.bib}

\begin{document}
\title{Certificates and Witnesses for Multi-Objective Queries in Markov Decision Processes
\thanks{The authors were supported by the German Federal Ministry of Education and Research (BMBF) within the project SEMECO Q1 (03ZU1210AG) and by the German Research Foundation (DFG) through the Cluster of Excellence EXC 2050/1 (CeTI, project ID 390696704, as part of Germany’s Excellence Strategy) and the DFG Grant 389792660 as part of TRR 248 (Foundations of Perspicuous Software System).}}

\titlerunning{Certificates and Witnesses for Multi-Objective Queries} 

\author{Christel Baier\orcidID{0000-0002-5321-9343} \and Calvin Chau\orcidID{0000-0002-3437-0240} \and Sascha Klüppelholz\orcidID{0000-0003-1724-2586}}

\authorrunning{Baier et al.}

\institute{
Technische Universität Dresden, Dresden, Germany\\
\email{\{christel.baier, calvin.chau,\\ sascha.klueppelholz\}@tu-dresden.de}
}

\maketitle

\begin{abstract}
Certifying verification algorithms not only return whether a given property holds or not, but also provide an accompanying independently checkable certificate and a corresponding witness. The certificate can be used to easily validate the correctness of the result and the witness provides useful diagnostic information, e.g.\ for debugging purposes. Thus, certificates and witnesses substantially increase the trustworthiness and understandability of the verification process. In this work, we consider certificates and witnesses for \emph{multi-objective reachability-invariant} and \emph{mean-payoff} queries in Markov decision processes, that is conjunctions or disjunctions either of reachability and invariant or mean-payoff predicates, both universally and existentially quantified. Thereby, we generalize previous works on certificates and witnesses for single reachability and invariant constraints. To this end, we turn known linear programming techniques into certifying algorithms and show that witnesses in the form of schedulers and subsystems can be obtained. As a proof-of-concept, we report on implementations of certifying verification algorithms and experimental results.
\keywords{Certificates \and Markov decision process \and Multi-objective queries.}
\end{abstract}

\section{Introduction}
Probabilistic model checking (PMC) is a technique for analysing and formally verifying probabilistic models, inter alia, aiming to enable higher trustworthiness of correctness of systems. However, PMC tools have been observed to contain bugs themselves \cite{wimmer_demand_2008}, thereby diminishing trust in the verification results. The paradigm of \emph{certifying algorithms} \cite{kratsch_certifying_2006, mcconnell_certifying_2011} is a well-accepted way of addressing this issue. Instead of solely returning a result, a certifying algorithm is required to also provide an accompanying \emph{certificate}, which can be used to \emph{easily} check the correctness of the result in a mathematically rigorous manner. There is a plethora of certifying algorithms \cite{mehlhorn_checking_1999, mcconnell_certifying_2011, kratsch_certifying_2006} and certifying \emph{verification algorithms} \cite{debbi_counterexamples_2018, namjoshi_certifying_2001, peled_falsification_2001, cousot_induction_1982, kupferman_certifying_2021, kupferman_certifying_2021-1, kupferman_complementation_2005}.

Most relevant for this paper are the existing certification and explication techniques for probability or expectation constraints in Markovian models.
Early work towards the explication of PMC results introduced probabilistic counterexamples as sets of paths (see e.g. \cite{aljazzar_generation_2009, han_diagnosis_2009, han_counterexample_2009}) which however tend to be huge. This motivated the generation of more concise explications, including the generation of fault-trees from probabilistic counterexamples \cite{kuntz_probabilistic_2011}, causality-based explanations \cite{leitner-fischer_synergy_2013} and the concept of \emph{witnessing subsystems} \cite{jansen_hierarchical_2011, wimmer_minimal_2012, jansen_counterexamples_2015, quatmann_counterexamples_2015, funke_farkas_2020, jantsch_certificates_2022}. Witnessing subsystems are parts of a system that demonstrate the satisfaction of a property and provide useful insights into why a property is violated or satisfied.

\emph{Multi-objective queries} are existentially or universally quantified disjunctions or conjunctions of either multiple invariant and reachability or mean-payoff predicates, e.g.\ ``Is it possible to reach the goal with probability at least 0.9 and encounter an error with probability at most 0.2?''. Thus, in many settings they are useful for reasoning about multiple conflicting goals \cite{etessami_multi-objective_2008}. Reachability probabilities and expected mean-payoffs in Markov decision processes (MDPs) can be characterized as linear programs (LP), extensively studied in \cite{kallenberg_linear_1983, kallenberg_survey_1994}. The techniques for verifying existentially quantified multi-objective reachability \cite{etessami_multi-objective_2008, forejt_quantitative_2011} and multi-objective mean-payoff queries \cite{brzdil_two_2011, brazdil_markov_2014} are also based on LP characterizations. However, the authors have not considered the solutions of the LP in the context of certificates nor have witnesses in the form of subsystems been addressed. In \cite{forejt_pareto_2012} and its extension to mean-payoff \cite{quatmann_multi-objective_2021, quatmann_verification_2023}, the certificates for universally quantified queries are only implicitly considered and the connection to subsystems has not been studied. The verification of multi-objective queries is supported by \textsc{Prism} \cite{kwiatkowska_prism_2011}, \textsc{Multigain} \cite{brazdil_multigain_2015} and \textsc{Storm} \cite{hensel_probabilistic_2022}.

The work of \cite{funke_farkas_2020, jantsch_certificates_2022} considers certificates and witnesses based on Farkas lemma' and the LP characterizations \cite{kallenberg_linear_1983, kallenberg_survey_1994}, referred to as \emph{Farkas certificates}. The techniques for finding certificates and minimal witnessing subsystems are implemented in the tool \textsc{Switss} \cite{jantsch_switss_2020}. However, certificates and witnesses have only been considered for \emph{single reachability and invariant probabilities}. Further, the computation of subsystems for invariants is not supported by \textsc{Switss}.

The purpose of this paper is to study \emph{certificates and witnesses} in the context of \emph{multi-objective queries} in MDPs. Building on the characterization considered in \cite{etessami_multi-objective_2008, kallenberg_linear_1983, brzdil_two_2011, brazdil_markov_2014}, we derive certificates using Farkas' lemma and show that they can be used to identify \emph{witnesses}, both in the form of schedulers and subsystems, generalizing the results from \cite{funke_farkas_2020, jantsch_certificates_2022}. In particular, we show how to devise witnesses in the form of schedulers with \emph{stochastic memory updates} as introduced in \cite{brazdil_markov_2014}. Lastly, we present an implementation of our techniques and experimentally evaluate it on several benchmarks. All omitted proofs are in the Appendix.

\medskip

\noindent\textbf{Contributions.}
\begin{itemize}
\item We present the foundations of \emph{Farkas certificates} for \emph{existentially} and \emph{universally} quantified multi-objective \emph{reachability-invariant} (\Cref{section:farkas-and-witnesses}) and \emph{mean-payoff} queries (\Cref{section:mean-payoff}) in MDPs.
\item Farkas certificates for multi-objective queries are shown to have a direct correspondence to witnessing subsystems and enable the computation of minimal witnessing subsystems. We hereby generalize prior work \cite{funke_farkas_2020, jantsch_certificates_2022} on single-objective reachability and invariant constraints.
\item We show that witnesses in the form of schedulers can first be computed for the MEC quotient \cite{de_alfaro_formal_1997} (see \Cref{section:preliminaries}) and then transferred to the underlying MDP, using schedulers with \emph{stochastic memory updates} to traverse the end components of the MDP.
\item An implementation of our techniques with experimental results on several case studies is presented.
\end{itemize}

\section{Preliminaries}
\label{section:preliminaries}
\noindent\textbf{Notation and Farkas' lemma.}
We write $[k]$ to denote the set $\{1, \dots, k\}$. Let $\states = \{\state_0, \dots, \state_n \}$ be a finite set. In this work, vectors and matrices are written in boldface, e.g. $\vect{x}$ and $\SM$. Instead of writing $\vect{x} \in \reals^\card{\states}$, we write $\vect{x} \in \reals^\states$ and $\vect{x}(\state_i)$ to denote the $i$th entry of $\vect{x}$. Matrices are treated similarly. The support of a vector $\vect{x}$ is defined as $\supp{\vect{x}} = \{\state \in \states \mid \vect{x}(\state) > 0\}$. Throughout this work we consider $\bowtie \ \in \{<, \leq, >, \geq\}$, $\gtrsim \ \in \{>, \geq\}$ and $\lesssim \ \in \{<, \leq \}$. \emph{Farkas' lemma} is a fundamental result of linear algebra and linear programming. Essentially, it relates the solvability of a linear system with the unsolvability of another one.
\begin{lemma}[Farkas' lemma (e.g. Proposition 6.4.3 in \cite{matousek_understanding_2007})]
\label{lemma:farkas}
\newline Let $\vect{A} \in \reals^{m \times n}$ and $\vect{b} \in \reals^{m}$, then the following holds:
\begin{enumerate}[label={(\roman*)}, itemsep=0mm, topsep=0.8mm, parsep=0mm]
\item $\exists \vect{x} \in \realsnn^n \centerdot \vect{A} \vect{x} \leq \vect{b} \iff \neg \exists \vect{y} \in \realsnn^m \centerdot \vect{A}^\top \vect{y} \geq 0 \land \vect{b}^\top \vect{y} < 0$ \label{lemma:farkas-1}
\item $\exists \vect{x} \in \reals^n \centerdot \vect{A} \vect{x} = \vect{b} \iff \neg \exists \vect{y}\in \reals^m \centerdot \vect{A}^\top \vect{y} = 0 \land \vect{b}^\top \vect{y} \neq 0$ \label{lemma:farkas-2}
\end{enumerate}
\end{lemma}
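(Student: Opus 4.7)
The plan is to derive both parts from a single core fact: the finitely generated cone $C = \{\vect{A}\vect{z} : \vect{z} \in \realsnn^n\}$ is a closed convex subset of $\reals^m$. Once that is in hand, the separating hyperplane theorem produces the desired dual witness $\vect{y}$ whenever the primal $\vect{x}$ does not exist. The easy directions of both parts are one-line calculations: for (i), if $\vect{x} \geq 0$ with $\vect{A}\vect{x} \leq \vect{b}$ and $\vect{y} \geq 0$ with $\vect{A}^\top \vect{y} \geq 0$, then $0 \leq \vect{x}^\top (\vect{A}^\top \vect{y}) = (\vect{A}\vect{x})^\top \vect{y} \leq \vect{b}^\top \vect{y}$; for (ii), $\vect{A}\vect{x} = \vect{b}$ and $\vect{A}^\top \vect{y} = 0$ give $\vect{b}^\top \vect{y} = \vect{x}^\top \vect{A}^\top \vect{y} = 0$. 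In each case the conclusion contradicts the stated properties of $\vect{y}$.

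For the nontrivial direction of (ii), I would argue purely via linear algebra: $\vect{A}\vect{x} = \vect{b}$ has no solution iff $\vect{b}$ lies outside the column space of $\vect{A}$, iff there is some $\vect{y}$ in the orthogonal complement $\ker(\vect{A}^\top)$ with $\vect{b}^\top \vect{y} \neq 0$ (take $\vect{y}$ to be the orthogonal projection of $\vect{b}$ onto $\ker(\vect{A}^\top)$, which is nonzero by assumption). No convexity is needed for this part.

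For the nontrivial direction of (i), I would reduce to the standard nonnegative-equality Farkas via slack variables: $\exists \vect{x} \geq 0 : \vect{A}\vect{x} \leq \vect{b}$ iff $\exists (\vect{x}, \vect{s}) \geq 0$ with $[\vect{A} \mid \vect{I}] (\vect{x}, \vect{s})^\top = \vect{b}$. Applying the equality Farkas to the augmented matrix $[\vect{A} \mid \vect{I}]$, the dual condition $[\vect{A} \mid \vect{I}]^\top \vect{y} \geq 0$ unfolds precisely into the conjunction $\vect{A}^\top \vect{y} \geq 0$ and $\vect{y} \geq 0$, which is exactly the right-hand side of (i).

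The main obstacle is establishing closedness of the cone $C$, since the separating hyperplane theorem is only effective for closed convex sets. I would discharge this either by Fourier--Motzkin elimination, which produces a description of $C$ as the intersection of finitely many closed halfspaces, or by appealing to the Minkowski--Weyl theorem, which asserts that every finitely generated cone is polyhedral and hence closed. Once closedness is secured, if $\vect{b} \notin C$ the separation theorem delivers $\vect{y} \in \reals^m$ with $\vect{y}^\top \vect{b} < 0 \leq \vect{y}^\top (\vect{A}\vect{z})$ for all $\vect{z} \geq 0$; specialising $\vect{z}$ to the standard unit vectors yields $\vect{A}^\top \vect{y} \geq 0$, closing the argument.
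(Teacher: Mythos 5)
The paper does not prove this lemma: it is quoted as a known result (attributed to Proposition~6.4.3 of the cited linear-programming textbook) and used as a black box throughout, so there is no in-paper proof to compare against. Your proposal is a correct and standard self-contained argument. The overall architecture is sound: the trivial directions of both parts by the transposition identity $\vect{x}^\top\vect{A}^\top\vect{y} = (\vect{A}\vect{x})^\top\vect{y}$; part~(ii) by orthogonal decomposition of $\vect{b}$ against the column space of $\vect{A}$, taking $\vect{y}$ to be the component of $\vect{b}$ in $\ker(\vect{A}^\top)$; and part~(i) by first adding slack variables to reduce $\exists\vect{x}\geq 0\,.\,\vect{A}\vect{x}\leq\vect{b}$ to $\exists(\vect{x},\vect{s})\geq 0\,.\,[\vect{A}\mid\vect{I}](\vect{x},\vect{s})^\top=\vect{b}$, and then invoking the equality-with-nonnegativity form of Farkas, itself established by separating $\vect{b}$ from the finitely generated cone of the augmented matrix. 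You correctly identify closedness of the cone as the crux and correctly note that Fourier--Motzkin or Minkowski--Weyl discharges it. Two small presentational remarks. First, your opening sentence claims both parts will be derived from the closed-cone fact, but you then, rightly, prove part~(ii) by pure linear algebra with no convexity; the framing sentence should be softened to say the cone fact underlies only part~(i). Second, the ``standard nonnegative-equality Farkas'' you invoke ($\exists\vect{x}\geq 0\,.\,\vect{B}\vect{x}=\vect{b}$ iff $\neg\exists\vect{y}\,.\,\vect{B}^\top\vect{y}\geq 0\land\vect{b}^\top\vect{y}<0$) is a third variant distinct from both (i) and (ii); it would be cleaner to state it explicitly as an intermediate claim, with its own proof via the separation argument, rather than leaving it implicit in prose, since part~(i) of the statement is derived from it and not vice versa.
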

\noindent\textbf{Markov decision processes.}
A \emph{Markov decision process} (MDP) \cite{puterman_markov_1994} $\mdp$ is a tuple $(\states, \actions, \boldDelta, \transMat)$ where $\states$ is a finite set of \emph{states}, $\actions$ a finite set of \emph{actions}, $\boldDelta \in [0, 1]^\states$ an \emph{initial distribution} and $\transMat \colon \states \times \actions \to \distr{S}$ a \emph{partial transition function}, where $\distr{S}$ denotes the set of all probability distributions over $S$. We often write $\transMat(\state, \action, \state')$ instead of $\transMat(\state, \action)(\state')$.
A state-action pair $(\state, \action) \in \states \times \actions$ is said to be \emph{enabled} if $\transMat(\state, \action)$ is defined and we write $\SA_\mdp$ to denote the set of all enabled pairs. The set of enabled actions in $\state$ is defined by $\actions(\state) = \{ \action \in \actions \mid (\state, \action) \in \SA_\mdp\}$. A state $\state$ is \emph{absorbing} if $\transMat(\state, \action, \state) = 1$ for all $\action \in \actions(\state)$. We write $(\states, \actions, \init, \transMat)$ for the MDP $(\states, \actions, \initDistr, \transMat)$ where $\initDistr$ is \emph{Dirac} in $\init$. A \emph{path} $\Path$ in $\mdp$ is a sequence $\Path = \state_0 \action_0 \state_1 \action_1 \ldots$ where $\transMat(\state_i, \action_i, \state_{i+1}) > 0$ for all $i$. $\last{\Path}$ refers to the last state of a finite path $\Path$ and $\paths(\mdp)$ ($\pathsFin(\mdp)$) is the set of all infinite (finite) paths starting in $\init$.

An \emph{end component} (EC) of $\mdp$ is a set $\emptyset \subset \mec \subseteq \SA_\mdp$ such that the induced sub-MDP is strongly connected. The states of $\mec$ are denoted with $\states(\mec)$ and we refer to $(\state, \action) \in \mec$ as \emph{internal}. An EC $\mec$ is a \emph{maximal} EC (MEC) if there is no another EC $\mec'$ such that $\mec \subset \mec'$. $\MECS(\mdp)$ denotes the set of MECs of $\mdp$ and $\allMecStates \subseteq \states$ the set of states contained in a MEC. We consider the \emph{MEC quotient} from \cite{baier_foundations_2022}, akin to the quotient from \cite{de_alfaro_formal_1997}. W.l.o.g.\ we assume that the actions of the states are pairwise disjoint. The MEC quotient of $\mdp$ is then given by $\hat{\mdp} = (\hat{\states} \union \{ \exit_\mec \mid \mec \in \MECS(\mdp)\}, \actions \union \{\tau\}, \hat{\state}_{in}, \hat{\transMat})$ where $\hat{\states} = (\states \union \{\quotientState_\mec \mid \mec \in \MECS(\mdp) \}) \setminus \allMecStates$. We define $\iota \colon \states \to \hat{\states}$ as $\iota(\state) = \state$ if $\state \in \states \setminus \allMecStates$ and $\iota(\state) = \quotientState_\mec$ if $\state \in \states(\mec)$. The initial state is given by $\hat{\state}_{in} = \iota(\init)$. For states $\state \in \states \setminus \allMecStates$ we define $\hat{\transMat}(\state, \action, \state') = \transMat(\state, \action, \iota^{-1}(\state'))$ for all $\state' \in \hat{\states}$. For all MECs $\mec$, $\state \in \states(\mec)$ and $\action \in \actions(\state)$, we set $\hat{\transMat}(\quotientState_\mec, \action, \state') = \transMat(\state, \action, \iota^{-1}(\state'))$ for all $\state' \in \hat{\states}$ if $(\state, \action) \not \in \mec$ and set $\hat{\transMat}(\quotientState_\mec, \tau, \exit_\mec) = 1$, i.e. taking $\tau$ corresponds to staying in $\mec$ forever.

We consider a \emph{discrete-time Markov chain} (DTMC) $\dtmc$ to be an MDP with a single action that is enabled in all states. Thus, we omit the actions when speaking of paths in DTMCs and write $(\states, \boldDelta, \transMat)$ instead of $(\states, \actions, \boldDelta, \transMat)$.

\medskip

\noindent\textbf{Schedulers and probability measure.} A \emph{scheduler} $\scheduler$ maps a finite path in an MDP $\mdp$ to a distribution over the available actions, i.e. $\scheduler \colon \pathsFin(\mdp) \to \distr{\actions}$ with $\supp{\scheduler(\Path)} \subseteq \actions(\last{\Path})$, and is \emph{memoryless} if it can be seen as a function of the form $\scheduler \colon \states \to \distr{\actions}$. Let $\schedulers^\mdp$ and $\mSchedulers^\mdp$ denote the set of unrestricted and memoryless schedulers of $\mdp$. A scheduler $\scheduler$ can also be represented as a tuple $(\alpha_\mathsf{update}, \alpha_\mathsf{next}, \mems, \boldDelta_\mems)$ where $\mems$ is a set of memory locations\footnote{Infinitely many memory locations might be needed.}, $\boldDelta_\mems \in \distr{\mems}$ an initial memory distribution, $\alpha_{\mathsf{update}} \colon \actions \times \states \times \mems \to \distr{\mems}$ a \emph{stochastic memory update} and $\alpha_{\mathsf{next}} \colon \states \times \mems \to \distr{\actions}$ the \emph{next move} function \cite{brazdil_markov_2014, randour_percentile_2015}. The update $\alpha_\mathsf{update}$ takes an action $\action$ that has \emph{lead to} state $\state$ and current memory location $\mem$ to update the memory location. Depending on the current location $\mem$, $\alpha_{\mathsf{next}}$ schedules the available actions in $\state$.

We consider the standard probability measure $\prob_{\mdp}^\scheduler$ \cite{baier_principles_2008}. For $\targetSet \subseteq \states$, we write $\prob_{\mdp, \state}^\scheduler(\eventually \targetSet)$ and $\prob_{\mdp, \state}^\scheduler(\globally \targetSet)$ to denote the probability of eventually reaching $\targetSet$ and only visiting $\targetSet$ under $\scheduler$ when starting in $\state$, respectively. We define $\freq_\mdp^\scheduler(\state, \action) = \sum_{t=0}^{\infty} \prob_{\mdp, \state}^\scheduler\{\state_0 \action_0 \state_1 \action_1 \ldots \mid (\state_t, \action_t)= (\state, \action)\}$ for all $\state \in \states$ and $\action \in \actions(\state)$ if the value exists \cite{baier_foundations_2022}. $\freq_\mdp^\scheduler(\state, \action)$ describes the \emph{expected frequency} of playing state action pair $(\state, \action)$ under $\scheduler$.
For a given \emph{reward vector} $\vect{r}  \in \rationals^{\SA_\mdp}$ and a path $\Path = \state_0 \action_0 \state_1 \action_1 \ldots \in \paths(\mdp)$, the \emph{mean-payoff} is defined as $\lrInf{\vect{r}}(\Path) = \liminf_{n \to \infty} \frac{1}{n} \sum_{t=0}^{n-1} \vect{r}(\state_t, \action_t)$ and $\lrSup{\vect{r}} \coloneqq -\lrInf{-\vect{r}}$. The \emph{expected mean-payoff} is then defined as $\expectation[\mdp, \state]{\scheduler}{\lrInf{\vect{r}}}$ for $\state \in \states$ and $\scheduler \in \schedulers_\mdp$. Whenever we omit the subscript $\state$, we refer to the probability and expectation in $\init$. %

\medskip

\noindent\textbf{Subsystems.} A \emph{subsystem} of an MDP $\mdp = (\states, \actions, \init, \transMat)$ is an MDP $\mdp' = (\states' \union \{\exit\}, \actions, \init, \transMat')$ if $\init \in \states' \subseteq \states$, $\exit$ is absorbing and for all $\state, \state' \in \states'$ and $\action \in \actions$ we have $\transMat'(\state, \action, \state') \in \{0, \transMat(\state, \action, \state')\}$ \cite{jantsch_certificates_2022}. Further, an action $\action$ is enabled in $\state$ in $\mdp'$ if and only if $\action$ is enabled in $\state$ in $\mdp$. Additionally, for reward vectors $\vect{r} \in \rationals^{\SA_\mdp}$ for $\mdp$, we consider the corresponding reward vector $\vect{r}' \in \rationals^{\SA_{\mdp'}}$ where $\vect{r}'(\state, \action) = \vect{r}(\state, \action)$ for all $\state \in \states'$ and $\action \in \actions(\state)$ and $\vect{r}'(\exit, \action) = \min_{(\state', \action') \in \SA} \vect{r}(\state', \action')$ for all $\action \in \actions$. Intuitively, once $\exit$ is reached the smallest possible reward is collected. A subsystem $\mdp_{\states'}$ is said to be \emph{induced} by a set $\states' \subseteq \states$ if it consists of the states $\states' \union \{\exit\}$ and all transitions leading to $\states \setminus \states'$ are redirected to $\exit$ \cite{jantsch_certificates_2022}. More precisely, for all $\state, \state' \in \states'$ and $\action \in \actions$ we have $\transMat'(\state, \action, \state') = \transMat(\state, \action, \state')$ and $\transMat'(\state, \action, \exit) = \sum_{\state' \in \states \setminus \states'} \transMat(\state, \action, \state')$. 

\medskip

\noindent\textbf{Multi-objective queries.}
A \emph{reachability}, \emph{invariant} or \emph{mean-payoff predicate} is an expression of the form 
$\prob_{\mdp}^\scheduler(\eventually \targetSet) {\bowtie} \lambda$, $\prob_{\mdp}^\scheduler(\globally \targetSet) {\bowtie} \lambda$ or $\expectation[\mdp]{\scheduler}{\lrInf{\vect{r}}} \allowbreak {\bowtie} \lambda$
where $\lambda \in \reals$. A \emph{multi-objective reachability, reachability-invariant or mean-payoff property} $\prop{}{\scheduler}(\boldLambda)$ is then a conjunction or disjunction of reachability, reachability and invariant or mean-payoff predicates where $\boldLambda = (\lambda_1, \dots, \lambda_k)^\top$ contains the bounds. We refer to the former as \emph{conjunctive} and the latter as \emph{disjunctive} property and write $\prop{\bowtie}{\scheduler}$ to refer to a property where all predicates have $\bowtie$ as comparison operator. A \emph{multi-objective query} $\query$ is then an existentially or universally quantified property, i.e. $\exists \scheduler {\in} \schedulers \centerdot \prop{}{\scheduler}(\boldLambda)$ or $\forall \scheduler {\in} \schedulers \centerdot \prop{}{\scheduler}(\boldLambda)$. We distinguish between reachability (\Reach), reachability-invariant (\ReachInv) and mean-payoff (\MP) queries and use the quantifier and logical connective to refer to the query type, e.g. $\existsCQ$ to refer to existentially-quantified conjunctive queries.

We omit the super- and subscript $\mdp$ and term ``multi-objective'' whenever it is clear from the context.

\section{Certificates and Witnesses for \ReachInv-Queries}
\label{section:farkas-and-witnesses}
Now we consider certificates and witnesses for \ReachInv-queries. An overview of our approach is shown in \Cref{fig:overview-approach}. 
\begin{figure}[!t]
\centering
\scalebox{0.6}{
\begin{tikzpicture}[node distance=1.8cm]
\node (mdp1) [process] {MDP $\mathcal{N}$\\ \ReachInv-query $\query_{\mathcal{N}}$};
\node (mdp2) [process, right of=mdp1, xshift=2cm] {Product $\mdp = \mathcal{N} {\times} \mathcal{A}$\\ \ReachInv-query $\query_\mdp$};
\node (mdp3) [process, right of=mdp2, xshift=2cm] {MEC Quotient $\hat{\mdp}$\\ \Reach-query $\query_{\hat{\mdp}}$};
\node (certificate) [product, below right of=mdp3, xshift=2.5cm, yshift=0.4cm] {Farkas certificates for $\hat{\mdp} \models \query_{\hat{\mdp}}$};
\node (witness3) [product, below left of=certificate, xshift=-2.5cm, yshift=0.4cm] {Witnesses for\\$\hat{\mdp} \models \query_{\hat{\mdp}}$};
\node (witness2) [product, left of=witness3, xshift=-2cm] {Witnesses for\\$\mdp \models \query_\mdp$};
\node (witness1) [product, left of=witness2, xshift=-2cm] {Witnesses for\\$\mathcal{N} \models \query_\mathcal{N}$};

\node[below of=certificate, yshift=0.65cm] (labelFarkas) {\Cref{subsection:farkas-certificates}};
\draw[arrow, thin] (labelFarkas) -- +(0, 1.66em);
\node[below of=witness2, yshift=0.65cm] (labelTransfer) {\Cref{subsection:transfer-witnesses}};
\node[below of=witness3, yshift=0.65cm] (labelWitness) {\Cref{subsection:farkas-certificates}};
\draw[arrow, thin] (labelWitness) -- +(0, 1.66em);

\draw [arrow] (mdp1) -- (mdp2);
\draw [arrow] (mdp2) -- (mdp3);
\draw [arrow, dotted] (mdp3) to[bend left=15] (certificate);
\draw [arrow, dotted] (certificate) to[bend left=15] node (CertificateToWitness) [midway, below] {} (witness3);
\draw [arrow, dashed] (witness3) -- (witness2) node (QuotientToProduct) [midway, below] {};
\draw [arrow, dashed] (witness2) -- (witness1) node (ProductToMdp) [midway, below] {};

\draw [arrow, thin] (labelTransfer.east) -| (QuotientToProduct.south);
\draw [arrow, thin] (labelTransfer.west) -| (ProductToMdp.south);
\draw[arrow, thin] (labelTransfer) -- +(0, 1.66em);
\draw[arrow, thin] (labelWitness) -| (CertificateToWitness);
\end{tikzpicture}
}
\caption{Overview of our approach for \ReachInv-queries.}
\label{fig:overview-approach}
\end{figure}
We start from an arbitrary MDP $\mathcal{N}$ and a \ReachInv-query $\query_{\mathcal{N}}$ containing only lower bounds. Note that every \ReachInv-query can be rephrased to a \ReachInv-query containing only lower bounds\footnote{$\prob^\scheduler(\eventually \targetSet) {\lesssim} \lambda \Leftrightarrow \prob^\scheduler(\globally (\states\setminus\targetSet)) {\gtrsim} 1{-}\lambda$ and $\prob^\scheduler(\globally \targetSet) {\lesssim} \lambda \Leftrightarrow \prob^\scheduler(\eventually (\states\setminus\targetSet)) {\gtrsim} 1{-}\lambda$}. Then, we construct the \emph{product MDP} $\mdp {=} \mathcal{N} {\times} \automaton$ and corresponding \ReachInv-query $\query_\mdp$. The automaton $\automaton$ keeps track of the state sets that have already been visited (see e.g. \cite[Proposition~2]{forejt_pareto_2012}). This is necessary because schedulers generally require exponential memory for such queries \cite{randour_percentile_2015}. 
Motivated by the fact that the computation of the MECs can be made \emph{certifying} \cite{jantsch_certificates_2022}, we then consider the MEC quotient $\hat{\mdp}$. Crucially, this allows us to rephrase $\query_{\mathcal{N}}$ to a \Reach-query $\query_{\hat{\mathcal{M}}}$ containing \emph{only lower bounds}. More precisely, invariant predicates $\prob_{\mathcal{N}}^\scheduler(\globally \targetSet) {\gtrsim} \lambda$ can be rephrased to reachability predicates of the form $\prob_{\hat{\mdp}}^\scheduler(\eventually T) {\gtrsim} \lambda$ where $T \subseteq \{\exit_1, \dots, \exit_\ell\}$. Note that the quotient $\hat{\mdp}$ is in \emph{reachability form} \cite{jantsch_certificates_2022}, i.e.\ its target states are absorbing. This allows us to restrict our attention to ``simple'' certificates for \Reach-queries and MDPs in reachability form, instead of tackling certificates for $\mathcal{N}$ and $\query_{\mathcal{N}}$ directly. The reduction from $\mathcal{N}$ to $\hat{\mdp}$ (upper part in \Cref{fig:overview-approach}) uses well-known methods from the literature. Since the reduction is simple and can be made certifying, we use the certificates for $\hat{\mdp}$ to act as certificates for $\mathcal{N}$ and $\mdp$. We detail the reduction in \Cref{appendix:transfer}.

In \Cref{subsection:farkas-certificates} we then derive \emph{Farkas certificates} for \Reach-queries and MDPs in reachability form from known techniques for multi-objective model checking \cite{etessami_multi-objective_2008, forejt_pareto_2012}, which have not been considered through the lens of certifying algorithms yet. We make the certificates explicit and show that they yield \emph{witnesses} for $\hat{\mdp}$, both in the form of \emph{witnessing schedulers} and \emph{witnessing subsystems}. Conceptually, a scheduler describes how to control the MDP, whereas a subsystem highlights relevant parts of the MDP. Depending on the use case, one or the other may be more appropriate, and we enable a more flexible perspective.

Lastly, we present new techniques for transferring witnesses from $\hat{\mdp}$ to $\mathcal{N}$ in \Cref{subsection:transfer-witnesses} (lower part in \Cref{fig:overview-approach}). We discuss how schedulers and subsystems for $\mathcal{N}$ can be constructed from their respective counterpart in $\hat{\mdp}$. If $\mdp$ contains many large MECs, this allows us to tackle each MEC individually, resulting in smaller and potentially more tractable subproblems.

\subsection{Farkas Certificates and Witnesses for \Reach-Queries}
\label{subsection:farkas-certificates}
For the remainder of this subsection, we fix an MDP $\mdp = \mdpRF$ with absorbing states $\targets$ and consider reachability properties $\prop{\bowtie}{\scheduler}(\boldLambda)$ with predicates $\allowbreak\prob^\scheduler(\eventually \targetSet_1)\allowbreak{\bowtie} \lambda_1,\allowbreak \dots,\allowbreak \prob^\scheduler(\eventually \targetSet_k) {\bowtie} \lambda_k$ where $\targetSet_1, \dots, \targetSet_k \subseteq \targets$. W.l.o.g.\ we assume that for every $\state \in \states$ there exists $\scheduler$ such that $\prob^\scheduler_\state(\eventually \targets) > 0$ \cite{etessami_multi-objective_2008}. We say that $\mdp$ is in \emph{reachability form} \cite{jantsch_certificates_2022} and exclude state-action pairs of states in $F$ from $\SA$, i.e. $\SA = \{(\state, \action) \mid \state \in \states, \action \in \actions(\state) \}$. For a concise presentation, we define $\SM \in \reals^{\SA \times \states}$ where $\SM((\state, \action), \state') {=} 1 \allowbreak {-} \transMat(\state, \action, \state')$ $\text{if } \state {=} \state'$ and $\SM((\state, \action), \state') {=} {-}\transMat(\state, \action, \state')$ otherwise for all $(\state, \action) {\in} \SA$ and $\state' {\in} \states$ \cite{funke_farkas_2020, jantsch_certificates_2022}. Let $\TM \in \realsnn^{\SA \times [k]}$ be defined as $\TM((s,a), i) \allowbreak = \sum_{\state' \in \targetSet_i} \transMat(\state, \action, \state')$ for all $(s,a) \in \SA$ and $i \in [k]$. $\mdp$ is said to be \emph{EC-free} if its only ECs are formed by states in $\targets$.

Farkas certificates are vectors that satisfy linear inequalities derived from LP-characterizations for MDPs \cite{kallenberg_linear_1983, kallenberg_survey_1994} and Farkas' lemma. Given a certificate, we can \emph{easily validate} whether a query is indeed satisfied, by checking whether the certificate satisfies the inequalities. In contrast, if a user is given a scheduler, they need to compute the probability in the induced Markov chain to validate the result, which is not as straightforward. For \existsCQ-queries, certificates have been considered in \cite{etessami_multi-objective_2008} and we summarize their results in our notation and setting.
\begin{restatable}[Certificates for \existsCQ-queries]{lemma}{certificatesExistsCq}
\label{lemma:exist-CP}
For a conjunctive reachability property $\prop{\gtrsim}{\scheduler}(\boldLambda)$ we have:
\begin{enumerate}[label={(\roman*)}, align=left, leftmargin=*, itemsep=0mm, topsep=0.8mm, parsep=0mm]
\item $\exists \scheduler \in \schedulers \centerdot \prop{\gtrsim}{\scheduler}(\boldLambda) \iff
\exists \vect{y} \in \realsnn^\SA \centerdot \SM^\top \vect{y} \leq \initDistr \land \TM^\top \vect{y} \gtrsim \boldLambda$ \label{lemma:exist-CP-lb}
\end{enumerate}
and if $\mdp$ is EC-free we also have:
\begin{enumerate}[label={(\roman*)}, align=left, leftmargin=*, itemsep=0mm, topsep=0.8mm, parsep=0mm]
\setcounter{enumi}{1}
\item $\exists \scheduler \in \schedulers \centerdot \prop{\lesssim}{\scheduler}(\boldLambda) \iff
\exists \vect{y} \in \realsnn^\SA \centerdot \SM^\top \vect{y} \geq \initDistr \land \TM^\top \vect{y} \lesssim \boldLambda$ \label{lemma:exist-CP-ub}
\end{enumerate}
\end{restatable}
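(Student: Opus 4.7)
The claim is the standard flow/frequency LP characterisation of multi-objective reachability (as in Kallenberg and Etessami et al.), cast so that $\vect{y}$ plays the role of expected state–action frequencies. My plan is to prove both parts by exhibiting two translations: scheduler $\to$ frequency vector, and frequency vector $\to$ memoryless randomised scheduler, and then showing that each translation preserves the two relevant inequalities. The EC-freeness assumption in part (ii) is what turns the loose balance inequality $\SM^\top \vect{y} \leq \initDistr$ into an equality, and is what lets the translation work in both bound directions.

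\textbf{Forward direction of (i).} Given a scheduler $\scheduler$ with $\prob^\scheduler(\eventually \targetSet_i) \gtrsim \lambda_i$ for all $i$, I set $\vect{y}(\state,\action) = \freq^\scheduler_\mdp(\state,\action)$; if this is infinite on some pair (possible because $\mdp$ need not be EC-free), I first replace $\scheduler$ by a scheduler that almost surely leaves every non-target EC toward $\targets$ without lowering the reachability probabilities, which can be done since $\mdp$ is in reachability form and all bounds are lower bounds. A flow-balance argument at each non-target $\state$ — expected visits to $\state$ equals $\initDistr(\state)$ plus expected incoming transitions, minus the probability mass that stays at $\state$ without ever returning — yields $(\SM^\top \vect{y})(\state) \le \initDistr(\state)$. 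Since $\targets$ is absorbing, $\prob^\scheduler(\eventually \targetSet_i) = \sum_{(\state,\action) \in \SA} \TM((\state,\action),i)\, \vect{y}(\state,\action) = (\TM^\top \vect{y})(i)$, so $\TM^\top \vect{y} \gtrsim \boldLambda$.

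\textbf{Backward direction of (i), and both directions of (ii).} Given $\vect{y} \in \realsnn^\SA$ satisfying the constraints, define the memoryless randomised scheduler by $\scheduler(\state)(\action) = \vect{y}(\state,\action)/\sum_{\action' \in \actions(\state)} \vect{y}(\state,\action')$ whenever the denominator is positive, and arbitrary otherwise. Using the induced DTMC one checks that $\freq^\scheduler_\mdp$ solves the same flow system, and the inequality $\SM^\top \vect{y} \leq \initDistr$ combined with $\vect{y} \geq 0$ implies $\freq^\scheduler_\mdp(\state,\action) \leq \vect{y}(\state,\action)$ on $\supp{\vect{y}}$; the probability of absorption into $\targetSet_i$ satisfies $\prob^\scheduler(\eventually \targetSet_i) \geq (\TM^\top \vect{y})(i) \gtrsim \lambda_i$. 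For part (ii), EC-freeness guarantees that under every scheduler, $\targets$ is reached almost surely, so the frequencies on non-target state–action pairs are finite and the balance becomes the equality $\SM^\top \freq^\scheduler_\mdp = \initDistr$; reversing the inequality directions yields the upper-bound version symmetrically, with $\SM^\top \vect{y} \geq \initDistr$ forcing $\vect{y}$ to upper-bound the true frequencies and $\prob^\scheduler(\eventually \targetSet_i) = (\TM^\top \vect{y})(i) \lesssim \lambda_i$.

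\textbf{Main obstacle.} The delicate step is the forward direction of (i) when $\scheduler$ has infinite expected visits to some non-target state (a scheduler looping inside an EC). Without EC-freeness the naive definition $\vect{y} = \freq^\scheduler_\mdp$ need not even lie in $\realsnn^\SA$. The fix is a preprocessing step that replaces $\scheduler$ by an equivalent (for the purpose of lower reachability bounds) scheduler whose support of EC-visits has finite expectation — concretely, inside each non-target EC we add a small geometric decision to exit toward $\targets$, which does not decrease $\prob^\scheduler(\eventually \targetSet_i)$ but makes all frequencies finite. In part (ii) this issue vanishes because EC-freeness outright forbids trapping. Modulo this preprocessing, both equivalences follow from the flow-balance accounting and the scheduler/vector translation described above.
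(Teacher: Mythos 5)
Your proposal attempts to re-derive the scheduler/frequency equivalence from scratch, whereas the paper reduces both parts to the \emph{equality} form $\SM^\top\vect{y} = \initDistr$ and then cites known results: Theorem~3.2 of Etessami et al.\ for the equality form of~(i), and Lemma~3.17 of Jantsch's thesis for the EC-free case~(ii). The reduction for~(i) adds to every state a fresh action $\tau$ that jumps to $\exit$; a $\vect{y}$ with slack $\SM^\top\vect{y} \le \initDistr$ then lifts to a tight pair $(\vect{y},\vect{z})$ with $\SM^\top\vect{y}+\vect{z}=\initDistr$ in the extended MDP, which has the same lower-bounded reachability values as the original.

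Your from-scratch route has a genuine gap in the backward direction of~(i). You claim that $\SM^\top\vect{y} \le \initDistr$ and $\vect{y}\ge 0$ imply $\freq_\mdp^\scheduler(\state,\action) \le \vect{y}(\state,\action)$ on $\supp{\vect{y}}$, and then conclude $\prob^\scheduler(\eventually\targetSet_i) \ge (\TM^\top\vect{y})(i)$. These two statements are inconsistent: $\freq_\mdp^\scheduler \le \vect{y}$ pointwise would give $\TM^\top\freq_\mdp^\scheduler \le \TM^\top\vect{y}$, i.e.\ $\prob^\scheduler(\eventually\targetSet_i) \le (\TM^\top\vect{y})(i)$, the wrong direction for a lower bound. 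The inequality you actually want is $\freq_\mdp^\scheduler \ge \vect{y}$ (a one-state example with a single action to the target, $\initDistr(\state)=1$ and $\vect{y}(\state,\action)=0.5$, already has $\freq(\state,\action)=1 > 0.5$), but even that is not a straightforward pointwise consequence of $\SM^\top\vect{y}\le\initDistr$: the normalisation $\scheduler(\state)(\action)=\vect{y}(\state,\action)/\sum_{\action'}\vect{y}(\state,\action')$ can trap the induced scheduler inside an end component contained in $\supp{\vect{y}}$ (so $\freq_\mdp^\scheduler$ may be infinite), and at states with $\sum_{\action'}\vect{y}(\state,\action')=0$ you choose the action arbitrarily and thus lose control of the induced flow. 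A rigorous version of this argument would essentially re-prove the cited Theorem~3.2. Your symmetric claim for~(ii), that $\SM^\top\vect{y}\ge\initDistr$ ``forces $\vect{y}$ to upper-bound the true frequencies,'' has the same missing-monotonicity problem and is moreover immediately replaced by the equality $\prob^\scheduler(\eventually\targetSet_i)=(\TM^\top\vect{y})(i)$, which presupposes $\freq_\mdp^\scheduler=\vect{y}$, a strictly stronger claim. Your forward direction for~(i)—perturbing the scheduler so it escapes non-target ECs and then taking frequencies—is fine in spirit and matches the standard intuition, but the backward direction needs either a careful proof of the frequency comparison or, more economically, the paper's reduction to the equality LP.
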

\begin{proof}[Sketch]
Follows from \cite[Theorem~3.2]{etessami_multi-objective_2008} and \cite[Lemma~3.17]{jantsch_certificates_2022}.
\end{proof}
The value $\vect{y}(\state, \action)$ can be interpreted as the frequency of playing $(\state, \action)$ under a scheduler that reaches $\targets$ almost surely \cite{jantsch_certificates_2022}. To satisfy queries with upper bounds in MDPs with ECs, it might be required to reach $F$ with probability smaller than $1$. Hence, the restriction to EC-free MDPs. The certificates for \existsDQ-queries can be derived by using the distributivity of the existential quantifier and applying the results from the single-objective setting to each disjunct \cite{funke_farkas_2020, jantsch_certificates_2022}.
Likewise, the certificates for \universalCQ-queries can be derived.

To the best of our knowledge, no explicit characterization of the certificates for \universalDQ-queries that also enable finding witnessing subsystems has been discussed yet. The works \cite{etessami_multi-objective_2008, forejt_quantitative_2011, forejt_pareto_2012} are mainly interested in checking the query and do so by considering the dual \existsCQ-query. The following lemma provides an explicit presentation of the certificates. An overview of certificates for all query types can be found in the Appendix, including \existsDQ- and \universalCQ-queries.
\begin{restatable}[Farkas certificates for \universalDQ-queries]{lemma}{certificatesUniversalDq}
\label{lemma:farkas-universal-DQs}
For a disjunctive reachability property $\prop{\bowtie}{\scheduler}(\boldLambda)$ we have:
\begin{enumerate}[label={(\roman*)}, align=left, leftmargin=*, itemsep=0mm, topsep=0.8mm, parsep=0mm]
\item $\forall \scheduler \in \schedulers \centerdot \prop{\leq}{\scheduler}(\boldLambda) \iff \exists \vect{x} \in \reals^\states \centerdot \exists \vect{z} \in \realsnn^{[k]} \setminus \{\vect{0}\} \centerdot \SM \vect{x} \geq \TM \vect{z} \land \vect{x}(\init) \leq \pmb{\lambda}^\top \vect{z}$ \label{lemma:universal-DQ-non-strict-ub}
\item $\forall \scheduler \in \schedulers \centerdot \prop{<}{\scheduler}(\boldLambda) \iff \exists \vect{x} \in \reals^\states \centerdot \exists \vect{z} \in \realsnn^{[k]} \centerdot \SM \vect{x} \geq \TM \vect{z} \land \vect{x}(\init) < \pmb{\lambda}^\top \vect{z}$ \label{lemma:universal-DQ-strict-ub}
\end{enumerate}
and if $\mdp$ is EC-free we also have:
\begin{enumerate}[label={(\roman*)}, align=left, leftmargin=*, itemsep=0mm, topsep=0.8mm, parsep=0mm]
\setcounter{enumi}{2}
\item $\forall \scheduler \in \schedulers \centerdot \prop{\geq}{\scheduler}(\boldLambda) \iff \exists \vect{x} \in \reals^\states \centerdot \exists \vect{z} \in \realsnn^{[k]} \setminus \{\vect{0}\} \centerdot \SM \vect{x} \leq \TM \vect{z} \land \vect{x}(\init) \geq \pmb{\lambda}^\top \vect{z}$ \label{lemma:universal-DQ-non-strict-lb}
\item $\forall \scheduler \in \schedulers \centerdot \prop{>}{\scheduler}(\boldLambda) \iff \exists \vect{x} \in \reals^\states \centerdot \exists \vect{z} \in \realsnn^{[k]} \centerdot \SM \vect{x} \leq \TM \vect{z} \land \vect{x}(\init) > \pmb{\lambda}^\top \vect{z}$ \label{lemma:universal-DQ-strict-lb}
\end{enumerate}
\end{restatable}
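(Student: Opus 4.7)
The plan is to derive each of the four characterisations by dualising the $\universalDQ$-query into the negation of an $\existsCQ$-query and then invoking Farkas' lemma (\Cref{lemma:farkas}). The identity $\forall \scheduler \in \schedulers \centerdot \prop{\bowtie}{\scheduler}(\boldLambda) \iff \neg \exists \scheduler \in \schedulers \centerdot \prop{\overline{\bowtie}}{\scheduler}(\boldLambda)$, where $\overline{\bowtie}$ denotes the negation of $\bowtie$, turns the disjunction over predicates into a conjunction under De Morgan. I can then call \Cref{lemma:exist-CP} to replace this $\existsCQ$-query by an LP-feasibility statement in $\vect{y}$, and obtain the dual witness $(\vect{x},\vect{z})$ from the infeasibility of that system. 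Items (i) and (ii) start from upper-bound predicates and, after dualisation, invoke Lemma~\ref{lemma:exist-CP}\ref{lemma:exist-CP-lb} (no EC-freeness required); items (iii) and (iv) start from lower bounds and invoke \ref{lemma:exist-CP-ub}, which accounts for the EC-freeness assumption appearing in those items.

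For the non-strict items (ii) and (iv), the negated feasibility system contains only non-strict inequalities, so \Cref{lemma:farkas}\ref{lemma:farkas-1} applies directly. For (ii), stacking $\SM^\top \vect{y} \le \initDistr$ and $-\TM^\top \vect{y} \le -\boldLambda$ into a single matrix and applying Farkas yields dual variables $\vect{x},\vect{z}\ge 0$ with $\SM\vect{x} \ge \TM\vect{z}$ and $\vect{x}(\init) - \boldLambda^\top\vect{z} < 0$; the strict bound on $\vect{x}(\init)$ is produced by Farkas itself, and $\vect{z}$ is not forced to be non-zero, matching the unrestricted $\vect{z}\in\realsnn^{[k]}$ in the statement. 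Item (iv) is the EC-free mirror obtained by the same calculation after swapping the direction of the $\SM^\top\vect{y}$ inequality.

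The strict items (i) and (iii) are harder, since the dualised system contains a strict block $\TM^\top\vect{y} > \boldLambda$ (respectively $<$) that Farkas does not treat directly. I would remove the strictness by a standard slack-variable trick: for (i), introduce $t \in \reals$ and consider the LP $\max\{\, t \mid \SM^\top\vect{y}\le\initDistr,\; \TM^\top\vect{y}\ge\boldLambda + t\vect{1},\; \vect{y}\ge 0\,\}$. The original strict system is feasible iff this LP has strictly positive optimum, so its infeasibility is equivalent to the optimum being $\le 0$. Strong LP duality then produces a dual in $(\vect{x},\vect{z})$ with $\SM\vect{x}\ge\TM\vect{z}$, $\vect{1}^\top\vect{z}=1$, $\vect{x},\vect{z}\ge 0$, and objective value $\vect{x}(\init)-\boldLambda^\top\vect{z}\le 0$, i.e.\ the \emph{non-strict} bound from (i). The normalisation $\vect{1}^\top\vect{z}=1$ forces $\vect{z}\ne\vect{0}$, which by rescaling $(\vect{x},\vect{z})$ can be relaxed to the stated $\vect{z}\in\realsnn^{[k]}\setminus\{\vect{0}\}$. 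Item (iii) is the EC-free mirror.

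A final clean-up is needed because Farkas and LP duality yield $\vect{x}\ge 0$ whereas the statement allows arbitrary $\vect{x}\in\reals^\states$. The $\Rightarrow$ direction is immediate since $\realsnn^\states \subseteq \reals^\states$, and for $\Leftarrow$ I would run the standard Bellman-supersolution argument: $\SM\vect{x}\ge\TM\vect{z}$ with $\vect{z}\ge 0$ implies $\vect{x}(\init)\ge\vect{z}^\top\vect{p}^\scheduler$ for every scheduler $\scheduler$, where $\vect{p}^\scheduler$ collects the reachability probabilities $\prob^\scheduler(\eventually\targetSet_i)$; the bound on $\vect{x}(\init)$, together with $\vect{z}\ge 0$ (and $\vect{z}\ne\vect{0}$ in the non-strict cases), then forces some $i$ with $p_i^\scheduler\bowtie\lambda_i$. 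The main obstacle will be the handling of the strict block in (i) and (iii); the slack-variable LP above is the cleanest route, with Motzkin's transposition theorem as an equivalent alternative.
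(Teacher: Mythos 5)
Your overall route -- dualise the $\universalDQ$-query to the negation of an $\existsCQ$-query, invoke the LP characterisation of the latter, then apply Farkas (introducing a slack variable when the dualised system is strict) -- is the same as the paper's. Your slack-variable LP with strong duality for items \ref{lemma:universal-DQ-non-strict-ub} and \ref{lemma:universal-DQ-non-strict-lb} is a clean variant of the paper's $\varepsilon$-slack plus $\gamma$-scaling argument, and the dual you write out (constraints $\SM\vect{x}\geq\TM\vect{z}$, normalisation $\vect{1}^\top\vect{z}=1$, objective $\vect{x}(\init)-\boldLambda^\top\vect{z}$) is correct.

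The one place where you genuinely deviate, and where the proposal has a gap, is your choice of existential characterisation. You invoke \Cref{lemma:exist-CP}, whose flow constraint is the one-sided $\SM^\top\vect{y}\leq\initDistr$. The paper instead works with the \emph{equality} version $\SM^\top\vect{y}=\initDistr$ (an auxiliary lemma in the appendix, lifted from Etessami et al.\ Thm.~3.2). This is not cosmetic: with the equality form, Farkas produces two nonnegative dual blocks whose difference $\vect{x}=\vect{x}_1-\vect{x}_2\in\reals^{\states}$ is unrestricted, so the statement's quantifier over $\vect{x}\in\reals^\states$ falls out of the transposition with nothing left to do. With your inequality form you obtain $\vect{x}\geq 0$ from Farkas, and you then owe the $\Leftarrow$ direction of the stated equivalence for certificates whose entries may be negative. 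You hand this off to ``the standard Bellman-supersolution argument'', but that argument -- unrolling $\SM\vect{x}\geq\TM\vect{z}$ to
$\vect{x}(\init)\geq\expectation[\mdp]{\scheduler}{\textstyle\sum_{t<n}(\TM\vect{z})(X_t,A_t)} + \expectation[\mdp]{\scheduler}{\vect{x}(X_n)\cdot[X_n\notin\targets]}$
-- needs $\vect{x}\geq 0$ to discard the tail term on non-target states, which is exactly what you do not yet have; and items \ref{lemma:universal-DQ-non-strict-ub}--\ref{lemma:universal-DQ-strict-ub} do \emph{not} assume EC-freeness, so that tail term need not vanish. The gap is closeable: one can show $\SM\vect{x}\geq 0$ together with the reachability-form assumption already forces $\vect{x}\geq 0$ (if the minimum of $\vect{x}$ were negative, the set of minimising states would have to be closed under every enabled action with no probability mass entering $\targets$, contradicting that every state reaches $\targets$ with positive probability under some scheduler). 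But this is a genuine missing lemma in your write-up, not a ``standard'' step, and it is precisely what the paper sidesteps by starting from the equality-form LP.
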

\begin{proof}[Sketch]
Application of Farkas' lemma (\Cref{lemma:farkas}) to \Cref{lemma:exist-CP}.
\end{proof}
We can now devise a simple \emph{certifying verification algorithm} based on \Cref{lemma:exist-CP} and \Cref{lemma:farkas-universal-DQs}. Given a query $\query$, the algorithm tries to find a certificate for $\query$ and if it cannot find such certificate, it computes a certificate for $\neg \query$. Note that certificates can be computed via linear programming in polynomial time \cite{matousek_understanding_2007}.
\begin{remark}
\label{remark:pareto-connection}
The decision algorithm for \existsCQ-queries in \cite[Algorithm~1]{forejt_pareto_2012} checks satisfaction by computing a sequence of vectors $\vect{w}$ that are akin to the vector $\vect{z}$ in \Cref{lemma:farkas-universal-DQs}. However, $\vect{x}$ is not computed nor characterized. We note that it is not obvious how to turn it into a certifying algorithm, because no \emph{easily checkable certificate} arises from the computations when the query holds.
\end{remark}
The certificates from Lemma \ref{lemma:exist-CP} and \ref{lemma:farkas-universal-DQs} are related to witnessing schedulers and subsystems.
The relation to schedulers is well-known and we summarize existing results. For subsystems this is less obvious and we now generalize \cite{funke_farkas_2020, jantsch_certificates_2022}.

\medskip

\noindent \textbf{Witnessing schedulers and Farkas certificates.} For $\existsCQ$-queries, the correspondence between the certificates $\vect{y}$ and memoryless schedulers in $\mdp$ is well-known \cite{kallenberg_linear_1983, etessami_multi-objective_2008, jantsch_certificates_2022}. The memoryless scheduler $\scheduler$, defined by $\scheduler(\state)(\action) = \vect{y}(\state, \action) \mathbin{/} \sum_{\action' \in \actions(\state)} \vect{y}(\state, \action')$ if $\sum_{\action' \in \actions(\state)} \vect{y}(\state, \action') > 0$ and any distribution over the available actions otherwise for all $(\state, \action) \in \SA$, is known to satisfy the query.

A $\universalDQ$-query asks a property to hold under all schedulers and it is less clear how to obtain a single scheduler demonstrating the satisfaction. Let $\prop{\bowtie}{\scheduler}(\boldLambda)$ be a \emph{disjunctive} and $\propAlt{\bowtie}{\scheduler}(\boldLambda)$ be a \emph{conjunctive} property with the same predicates and let $\mathsf{Ach} {=} \{\boldLambda' {\in} [0, 1]^k \mid \exists \scheduler \centerdot \propAlt{\not\bowtie}{\scheduler}(\boldLambda')\}$. Observe that $\forall \scheduler \centerdot \prop{\bowtie}{\scheduler}(\boldLambda)$ if and only if $\boldLambda \notin \mathsf{Ach}$. In \cite{forejt_pareto_2012}, this relation is used to determine a vector $\vect{z}$ (as described in \Cref{lemma:farkas-universal-DQs}) that separates $\boldLambda$ from $\mathsf{Ach}$, i.e.\ $\vect{z}^\top \boldLambda > \vect{z}^\top \boldLambda'$ for all $\boldLambda' \in \mathsf{Ach}$. This amounts to finding a scheduler $\scheduler^*$ such that $\sum_{i=1}^k \vect{z}(i) {\cdot} \prob^{\scheduler^*}(\eventually \targetSet_i) \eqqcolon \gamma^*$ is maximal. If $\vect{z}^\top \boldLambda \gtrsim \gamma^*$ holds, we can then conclude that $\boldLambda \notin \mathsf{Ach}$ and consequently $\scheduler^*$ can then serve as witness for the satisfaction of $\forall \scheduler \centerdot \prop{\bowtie}{\scheduler}(\boldLambda)$.

\medskip

\noindent \textbf{Witnessing subsystems and Farkas certificates.}
To consider witnesses in the form of \emph{subsystems}, we first show that the satisfaction of a lower-bounded query (not necessarily \Reach-query) in a subsystem implies that the query is also satisfied in the original MDP. Crucially, this allows us to use a subsystem as a witness for the satisfaction in the original MDP.
\begin{restatable}[Monotonicity]{theorem}{subsystemsLowerBounds}
\label{theorem:subsystems-and-lower-bounds}
Let $\mathcal{N}$ be an arbitrary MDP and $\mathcal{N}'$ be a subsystem of $\mathcal{N}$. Further, let $\prop{\gtrsim}{\scheduler}(\boldLambda)$ be a multi-objective property. Then we have:
\begin{enumerate}[label={(\roman*)}, align=left, leftmargin=*, itemsep=0mm, topsep=0.8mm, parsep=0mm]
\item  $\exists \scheduler' \in \schedulers^{\mathcal{N}'} \centerdot \prop{\gtrsim}{\scheduler'}(\boldLambda) \implies \exists \scheduler \in \schedulers^\mathcal{N} \centerdot \prop{\gtrsim}{\scheduler}(\boldLambda)$ \label{theorem:subsystems-and-lower-bounds-exists}
\item  $\forall \scheduler' \in \schedulers^{\mathcal{N}'} \centerdot \prop{\gtrsim}{\scheduler'}(\boldLambda) \implies \forall \scheduler \in \schedulers^\mathcal{N} \centerdot \prop{\gtrsim}{\scheduler}(\boldLambda)$ \label{theorem:subsystems-and-lower-bounds-universal}
\end{enumerate}
\end{restatable}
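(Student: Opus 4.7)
\textbf{The plan} is to reduce both items to a single per-predicate monotonicity fact: whenever a scheduler $\scheduler$ of $\mathcal{N}$ and a scheduler $\scheduler'$ of $\mathcal{N}'$ are linked so that they agree on every finite path whose states lie entirely in $\states'$ (call such a pair \emph{corresponding}), each reachability, invariant, or mean-payoff predicate achieves a value under $\scheduler$ in $\mathcal{N}$ at least as large as the one under $\scheduler'$ in $\mathcal{N}'$. Given such a fact, I prove (i) by lifting the given witness $\scheduler'$ to a corresponding $\scheduler$, mimicking $\scheduler'$ while the path stays inside $\states'$ and picking any enabled action thereafter; I prove (ii) by projecting an arbitrary $\scheduler$ on $\mathcal{N}$ to a corresponding $\scheduler'$ by the symmetric construction, noting that the choices after reaching $\exit$ are irrelevant because $\exit$ is absorbing.

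\textbf{Cylinder identity.} The technical ingredient is that $\transMat'(s,a,s') = \transMat(s,a,s')$ whenever $s, s' \in \states'$, and that corresponding schedulers agree on prefixes contained in $\states'$. Hence for any finite path $\pi$ whose states all lie in $\states'$, we have $\prob_{\mathcal{N}}^{\scheduler}(\cyl{\pi}) = \prob_{\mathcal{N}'}^{\scheduler'}(\cyl{\pi})$. Letting $\tau$ denote the first time a path leaves $\states'$ (equivalently, in $\mathcal{N}'$, the first entry into $\exit$), this identity extends to the joint law of path prefixes up to time $\tau$, furnishing a natural coupling of the two processes.

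\textbf{Per-predicate inequalities.} For a reachability predicate $\prob^{\scheduler'}_{\mathcal{N}'}(\eventually G)$ (using the convention $\exit \notin G$), every $\mathcal{N}'$-path reaching $G$ must stay in $\states'$ until the hitting time, and under the coupling it maps to an $\mathcal{N}$-path that reaches $G$ with equal probability; thus $\prob^{\scheduler}_{\mathcal{N}}(\eventually G) \geq \prob^{\scheduler'}_{\mathcal{N}'}(\eventually G)$. The invariant case is analogous: $\mathcal{N}'$-paths in $\globally G$ never leave $G \cap \states'$ and map to $\mathcal{N}$-paths in $\globally G$ of the same mass. For mean-payoff, set $r_{\min} = \min_{(s,a) \in \SA} \vect{r}(s,a)$, so $\vect{r}'(\exit, \cdot) = r_{\min}$. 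On $\{\tau = \infty\}$ the coupled paths coincide and share the same $\lrInf{\vect{r}}$. On $\{\tau < \infty\}$ the $\mathcal{N}'$-path is absorbed in $\exit$, collecting $r_{\min}$ forever and therefore $\lrInf{\vect{r}'} = r_{\min}$, while the corresponding $\mathcal{N}$-continuation satisfies $\lrInf{\vect{r}} \geq r_{\min}$ since $\vect{r} \geq r_{\min}$ pointwise. Integrating yields $\expectation[\mathcal{N}]{\scheduler}{\lrInf{\vect{r}}} \geq \expectation[\mathcal{N}']{\scheduler'}{\lrInf{\vect{r}'}}$.

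\textbf{Assembly and main obstacle.} For (i), the lifted $\scheduler$ makes every predicate of $\prop{\gtrsim}{\scheduler'}(\boldLambda)$ hold with a value that can only increase, so $\prop{\gtrsim}{\scheduler}(\boldLambda)$ holds whether $\prop{}{}$ is conjunctive or disjunctive. For (ii), the projected $\scheduler'$ satisfies $\prop{\gtrsim}{\scheduler'}(\boldLambda)$ by hypothesis, and per-predicate monotonicity transfers this to $\scheduler$; as $\scheduler$ was arbitrary, the universal conclusion follows. The main technical subtlety is the mean-payoff case, where the coupling at the stopping time $\tau$ must be used to carefully account for the absorbed tail of the $\mathcal{N}'$-process; the reachability and invariant cases are essentially immediate consequences of the cylinder identity.
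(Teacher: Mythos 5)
Your proposal is correct and follows essentially the same route as the paper: the core is a per-predicate monotonicity lemma obtained by pairing a scheduler of $\mathcal{N}'$ with a scheduler of $\mathcal{N}$ that agrees on all finite prefixes contained in $\states'$, and then applying this lemma in both directions (lifting a witness for the existential case, projecting an arbitrary scheduler for the universal case). The only cosmetic difference is that the paper proves the universal implication by contraposition (reducing it to the existential case for the complementary property), whereas you argue it directly; the two are logically interchangeable and rely on the same construction and the same per-predicate inequalities.
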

\Cref{theorem:subsystems-and-lower-bounds} is precisely the reason for considering lower-bounded \ReachInv-queries instead of \Reach-queries with mixed bounds. For the latter, monotonicity does not hold in general, as adding states might result in surpassing a threshold (also see \cite[Section~4.4]{jantsch_certificates_2022}).
It has been shown that there is a correspondence between witnessing subsystems and Farkas certificates for \emph{single-objective} reachability \cite{funke_farkas_2020, jantsch_certificates_2022}. Now we generalize the previous results to \emph{multi-objective} reachability.
Let $\cqPoly{\mdp}{\gtrsim}(\boldLambda)$ be the polyhedron formed by the conditions in \Cref{lemma:exist-CP} for \existsCQ-queries and $\dqPoly{\mdp}{\gtrsim}(\pmb{\lambda})$ the polyhedron formed by the conditions in \Cref{lemma:farkas-universal-DQs}. 
Let $\stateSupp{\vect{y}} = \{\state \in \states \mid \exists \action \in \actions(\state) \centerdot \vect{y}(\state, \action) > 0\}$ \cite{jantsch_certificates_2022}.
\begin{restatable}{theorem}{farkasSupportSubsystem}
\label{theorem:farkas-support-witnessing-subsystems}
Let $\prop{\gtrsim}{\scheduler}(\boldLambda)$ be a disjunctive and $\propAlt{\gtrsim}{\scheduler}(\boldLambda)$ be a conjunctive reachability property and $\states' \subseteq \states$. Then we have:
\begin{enumerate}[label={(\roman*)}, align=left, leftmargin=*, itemsep=0mm, topsep=0.8mm, parsep=0mm]
\item $\exists \vect{y} \in \cqPoly{\mdp}{\gtrsim}(\pmb{\lambda}) \centerdot \stateSupp{\vect{y}} \subseteq \states' \iff \exists \scheduler' \in \schedulers^{\mdp_{\states'}} \centerdot \propAlt{\gtrsim}{\scheduler'}(\boldLambda)$ \label{theorem:farkas-support-witnessing-subsystems-cq}
\end{enumerate}
and if $\mdp$ is EC-free we also have:
\begin{enumerate}[label={(\roman*)}, align=left, leftmargin=*, itemsep=0mm, topsep=0.8mm, parsep=0mm]
\setcounter{enumi}{1}
\item $\exists (\vect{x}, \vect{z}) \in \dqPoly{\mdp}{\gtrsim}(\pmb{\lambda}) \centerdot \supp{\vect{x}} \subseteq \states' \land \vect{x} \geq 0 \iff \forall \scheduler' \in \schedulers^{\mdp_{\states'}} \centerdot \prop{\gtrsim}{\scheduler'}(\boldLambda)$ \label{theorem:farkas-support-witnessing-subsystems-dq}
\end{enumerate}
\end{restatable}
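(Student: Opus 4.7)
The plan is to prove each direction of each equivalence by transferring Farkas certificates between $\mdp$ and its subsystem $\mdp_{\states'}$. The key observation is that transitions within $\states'$ and into the target states are identical in both MDPs, while the ``missing'' transitions from $\mdp$ to $\states \setminus \states'$ are simply re-routed to the trap $\exit$ in $\mdp_{\states'}$. Combined with \Cref{lemma:exist-CP} and \Cref{lemma:farkas-universal-DQs}, which connect the two polyhedra to scheduler-based satisfaction in the respective MDPs, this reduces both items to purely linear-algebraic restriction/extension arguments.

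For part (i), in the forward direction I would restrict $\vect{y} \in \cqPoly{\mdp}{\gtrsim}(\boldLambda)$ with $\stateSupp{\vect{y}} \subseteq \states'$ to a vector $\vect{y}'$ on $\SA_{\mdp_{\states'}}$ by keeping the entries at states in $\states'$ and setting $\vect{y}'(\exit, \action) = 0$. The inequalities $\SM_{\mdp_{\states'}}^\top \vect{y}' \leq \initDistr$ and $\TM_{\mdp_{\states'}}^\top \vect{y}' \geq \boldLambda$ then reduce to those already satisfied by $\vect{y}$: all outflows with nonzero $\vect{y}$-mass originate in $\states'$ by the support assumption, transitions within $\states'$ and into target states are preserved in $\mdp_{\states'}$, and at $\exit$ the outflow is zero while $\initDistr(\exit) = 0$. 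Applying \Cref{lemma:exist-CP} in $\mdp_{\states'}$ then produces the witnessing scheduler. The converse is symmetric: given a witnessing scheduler in $\mdp_{\states'}$, \Cref{lemma:exist-CP} yields a certificate $\vect{y}'$ in the subsystem which I zero-extend to $\SA_\mdp$. Flow balance on $\states \setminus \states'$ is then trivial since the outflow is zero, and on $\states'$ it coincides with the balance in the subsystem.

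For part (ii), the forward direction proceeds analogously with the dual certificate: given $(\vect{x}, \vect{z}) \in \dqPoly{\mdp}{\gtrsim}(\boldLambda)$ with $\supp{\vect{x}} \subseteq \states'$ and $\vect{x} \geq 0$, I would restrict $\vect{x}$ to $\states'$ and set $\vect{x}'(\exit) = 0$. The identity $(\SM_{\mdp_{\states'}} \vect{x}')(\state, \action) = (\SM \vect{x})(\state, \action)$ for $\state \in \states'$ uses precisely that $\vect{x}$ vanishes on $\states \setminus \states'$, that $\vect{x}'(\exit) = 0$, and that transitions inside $\states'$ are preserved; at $(\exit, \action)$ both sides of $\SM_{\mdp_{\states'}} \vect{x}' \leq \TM_{\mdp_{\states'}} \vect{z}$ are zero. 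Then \Cref{lemma:farkas-universal-DQs} finishes this direction. The backward direction is the main obstacle, because an arbitrary certificate in $\dqPoly{\mdp_{\states'}}{\gtrsim}(\boldLambda)$ need not be nonnegative, whereas the extension back to $\mdp$ must give $\vect{x} \geq 0$. I plan to side-step this by picking a specific $\vect{x}'$, namely the optimal value function $\vect{x}'(\state) = \min_{\scheduler' \in \schedulers^{\mdp_{\states'}}} \sum_{i=1}^k \vect{z}(i) \prob_{\mdp_{\states'}, \state}^{\scheduler'}(\eventually \targetSet_i)$ of the weighted reachability problem induced by $\vect{z}$. This vector is manifestly nonnegative and, by Bellman's optimality principle (for which the EC-freeness assumption is essential to pin down the value function), lies in $\dqPoly{\mdp_{\states'}}{\gtrsim}(\boldLambda)$; in particular $\vect{x}'(\init)$ dominates the previously obtained certificate value at $\init$ and hence satisfies $\vect{x}'(\init) \gtrsim \boldLambda^\top \vect{z}$. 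Zero-extending $\vect{x}'$ from $\states'$ to all of $\states$ preserves every $\dqPoly$-inequality (the contributions outside $\states'$ vanish on the left and remain nonnegative on the right) and yields the required certificate with $\supp{\vect{x}} \subseteq \states'$ and $\vect{x} \geq 0$.
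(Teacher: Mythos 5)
Your restriction/extension strategy matches the paper's in all four directions, so the overall approach is the same: use \Cref{lemma:exist-CP} and \Cref{lemma:farkas-universal-DQs} to swap between schedulers and certificates, and then show that restricting a certificate supported in $\states'$ to $\mdp_{\states'}$ (resp.\ zero-extending one back to $\mdp$) preserves the polyhedral conditions. The one place you diverge substantively is the $\Leftarrow$ direction of part~(ii), where you correctly flag that a certificate $(\vect{x}', \vect{z})$ obtained from \Cref{lemma:farkas-universal-DQs} in $\mdp_{\states'}$ need not satisfy $\vect{x}' \geq 0$, which the zero-extension then requires. You repair this by replacing $\vect{x}'$ with the optimal weighted-reachability value function and invoking Bellman optimality and fixed-point uniqueness under EC-freeness. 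That does work, but it is heavier machinery than needed, and it also requires you to check that the uniqueness/convergence argument tolerates the absorbing non-target sink $\exit$ of $\mdp_{\states'}$. The paper instead disposes of the issue with a one-line observation made up front: for any $(\vect{x}', \vect{z}) \in \dqPoly{\mdp'}{\gtrsim}(\boldLambda)$ the positive part $\vect{x}^+$ with $\vect{x}^+(\state) = \max\{0, \vect{x}'(\state)\}$ again satisfies $\SM \vect{x}^+ \leq \TM \vect{z}$ — if $\vect{x}'(\state) \geq 0$ the row is dominated by the original inequality since $\vect{x}^+ \geq \vect{x}'$, and if $\vect{x}'(\state) < 0$ the left-hand side of that row is $\leq 0 \leq (\TM\vect{z})(\state,\action)$ — while $\vect{x}^+(\init) \geq \vect{x}'(\init) \gtrsim \boldLambda^\top \vect{z}$. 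This yields a nonnegative certificate purely algebraically, after which the zero-extension reasoning you already give goes through verbatim. So your proof is correct, but swapping the value-function detour for the positive-part trick would bring it in line with the simpler argument in the paper.
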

In essence, the subsystems that are induced by the support of the certificates satisfy the query. Consequently, finding witnessing subsystems with a small number of states corresponds to finding certificates with a small support. For the single-objective setting, this observation has been made in \cite{funke_farkas_2020, jantsch_certificates_2022}, where mixed-integer LPs are used to find minimal certificates. Note that for downstream tasks, e.g. for manual inspection \cite{korn_pmc-vis_2023}, it can be desirable to obtain \emph{minimal witnessing subsystems}. 
Based on \Cref{theorem:farkas-support-witnessing-subsystems}, we also use MILPs to compute certificates with a minimal support, thereby yielding minimal witnessing subsystems. The MILPs in \Cref{fig:reachability-milps} use a \emph{Big}-$M$ encoding (see e.g.\ \cite{griva_linear_2008}) where $M$ is a sufficiently large constant. We refer to the \Cref{appendix:big-m} for a discussion on the choice of $M$.
\subsection{Transferring Witnesses}
\label{subsection:transfer-witnesses}
Recall that we reduce a \ReachInv-query $\query_\mathcal{N}$ in an MDP $\mathcal{N}$ to a corresponding query $\query_\mdp$ in the product MDP $\mdp$ and then to a \Reach-query $\query_{\hat{\mdp}}$ in the MEC quotient $\hat{\mdp}$, allowing us to restrict our attention to certificates and witnesses for \Reach-queries for MDPs in reachability form. Now we describe the transfer of witnesses for $\query_{\hat{\mdp}}$ in $\hat{\mdp}$ to witnesses for $\query_\mathcal{N}$ in $\mathcal{N}$ (lower part in \Cref{fig:overview-approach}).

\medskip

\noindent\textbf{Transferring witnessing subsystems.}
\label{subsection:transfer-witnessing-subsystem}
We can easily obtain a witnessing subsystem for $\mathcal{N}$ from a witnessing subsystem of $\hat{\mdp}$. Essentially, every state $\hat{\state}$ of $\hat{\mdp}$ corresponds to a set of states of $\mathcal{N}$. For an arbitrary subsystem $\hat{\mdp}'$ induced by $\hat{\states}' \subseteq \hat{\states}$, we consider the corresponding set of states $\states_{\mathcal{N}}' \subseteq \states_{\mathcal{N}}$. Let $\mathcal{N}'$ be the subsystem induced by $\states_{\mathcal{N}}'$. Then the following holds:
\begin{restatable}{lemma}{transferSubsystem}
If $\hat{\mdp}'$ satisfies $\query_{\hat{\mdp}}$, then $\mathcal{N}'$ satisfies $\query_{\mathcal{N}}$.
\label{lemma:transferring-subsystems}
\end{restatable}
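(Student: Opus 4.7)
The plan is to decompose the transfer through an intermediate subsystem $\mdp'$ of the product MDP $\mdp$, matching the two reductions $\mathcal{N} \rightsquigarrow \mdp \rightsquigarrow \hat{\mdp}$ used to build $\hat{\mdp}$. Given $\hat{\mdp}'$ induced by $\hat{\states}' \subseteq \hat{\states}$, define $\states_\mdp' \subseteq \states_\mdp$ to contain every state $\state \in \states_\mdp \setminus \allMecStates$ with $\iota(\state) \in \hat{\states}'$, together with every state of a MEC $\mec$ satisfying $\quotientState_\mec \in \hat{\states}'$, and let $\mdp'$ be the subsystem of $\mdp$ induced by $\states_\mdp'$. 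Because this choice does not split any MEC of $\mdp$, the MEC quotient of $\mdp'$ coincides with $\hat{\mdp}'$. Applying the soundness of the MEC-quotient reduction, the same equivalence already used in the upper part of \Cref{fig:overview-approach} and detailed in \Cref{appendix:transfer}, to $\mdp'$ and $\hat{\mdp}'$ then gives $\mdp' \models \query_\mdp$.

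Next I would transfer from $\mdp'$ to $\mathcal{N}'$. Since $\states_\mathcal{N}'$ is the projection of $\states_\mdp'$ onto the $\mathcal{N}$-component, every state of $\mdp'$ is also a state of $\mathcal{N}' \times \automaton$; both inherit their transitions from $\mdp$, so $\mdp'$ is a subsystem of $\mathcal{N}' \times \automaton$ in the sense of \Cref{theorem:subsystems-and-lower-bounds}. The lower-bounded property $\query_\mdp$ then lifts from $\mdp'$ to $\mathcal{N}' \times \automaton$ by monotonicity (\Cref{theorem:subsystems-and-lower-bounds}), valid in both the existential and the universal case. Finally, the standard soundness of the product-automaton reduction, $\mathcal{N}' \times \automaton \models \query_\mdp \iff \mathcal{N}' \models \query_\mathcal{N}$, yields $\mathcal{N}' \models \query_\mathcal{N}$, completing the chain.

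The main obstacle will be verifying the two reduction equivalences cleanly when restricted to subsystems, and in particular the MEC-quotient step: one has to show that the MECs of $\mdp'$ are exactly the MECs of $\mdp$ fully contained in $\states_\mdp'$ (so no new ECs are created and none are split), that each exit state $\exit_\mec$ appearing in $\hat{\mdp}'$ still faithfully encodes the invariant probability of $\mec$ inside $\mdp'$, and that edges redirected to the subsystem sink $\exit$ are treated consistently under both views. These are technical checks on the interplay of product, MEC quotient, and induced subsystem, but once they are in place, the lemma reduces uniformly to the already-available reduction soundness plus \Cref{theorem:subsystems-and-lower-bounds}, handling existential and universal \ReachInv-queries alike.
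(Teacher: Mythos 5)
Your proposal matches the paper's proof in its essential structure: both pull back $\hat{\states}'$ to a set $\states_\mdp'$ of product states that is closed under MECs, observe that the induced subsystem $\mdp'$ has $\hat{\mdp}'$ as its MEC quotient, and then transfer satisfaction from $\mdp'$ down to $\mathcal{N}'$ via the observation that paths of $\mdp'$ project to paths of $\mathcal{N}'$. The only cosmetic difference is that you factor the last transfer through the intermediate product $\mathcal{N}' \times \automaton$, invoking \Cref{theorem:subsystems-and-lower-bounds} and then the product-automaton soundness, whereas the paper's proof collapses these two steps into a single scheduler-translation argument (which is in any case exactly what \Cref{theorem:subsystems-and-lower-bounds} encapsulates); the technical points you flag — that no MEC of $\mdp$ is split, that the $\exit_\mec$ targets survive, and that redirections to the subsystem sink are consistent — are precisely what the paper leans on implicitly when asserting that $\hat{\mdp}'$ coincides with the MEC quotient of $\mdp'$.
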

Recall that if $\mathcal{N}'$ satisfies $\query_{\mathcal{N}}$, then so does $\mathcal{N}$ (\Cref{theorem:subsystems-and-lower-bounds}). While the minimality of a subsystem for $\hat{\mdp}$ is generally not preserved when transferring the subsystem to $\mathcal{N}$, we can weight states of the MEC quotient by the number of states of $\mathcal{N}$ they represent in the MILPs, resulting in small subsystems for $\mathcal{N}$.
\begin{figure}[t]
\centering
\scriptsize{
\begin{subfigure}[t]{0.475\textwidth}
\centering
$\begin{aligned}
&\text{min } \sum\nolimits_{\state \in \states} \boldGamma(\state) \text{  subject to:} \\
&\boldGamma \in \{0, 1\}^\states \text{ and } \vect{y} \in \cqPoly{\mdp}{\geq}(\boldLambda)\\
&\forall (\state, \action) \in \SA \centerdot \vect{y}(\state, \action) \leq \boldGamma(\state) \cdot M
\end{aligned}$
\caption{\scriptsize{MILP for $\existsCQ$-queries}}
\label{subfig:reachability-milps-exists}
\end{subfigure}%
\hfill
\begin{subfigure}[t]{0.475\textwidth}
\centering
$\begin{aligned}
&\text{min } \sum\nolimits_{\state \in \state} \boldGamma(\state)\text{  subject to:}\\
&\boldGamma \in \{0, 1\}^\states \text{ and } (\vect{x}, \vect{z}) \in \dqPoly{\mdp}{\geq}(\boldLambda)\\
&\forall \state \in \states \centerdot \vect{x}(\state) \leq \boldGamma(\state) \cdot M \land \vect{x}(s) \geq 0
\end{aligned}$
\caption{\scriptsize{MILP for $\universalDQ$-queries}}
\label{subfig:reachability-milps-universal}
\end{subfigure}%
}
\caption{MILPs for finding minimal witnessing subsystems for \Reach-queries.}
\label{fig:reachability-milps}
\end{figure}
\begin{figure}[t]
\centering
\begin{subfigure}{0.19\textwidth}
\centering
\begin{tikzpicture}[x=18mm,y=15mm,font=\scriptsize]
          \node[state] (s0) {$s_0$};
          \node[state] (s2) [opacity=0.5,below = 0.8cm of s0]  {$s_3$};
          \node[state] (s3) [right = 0.2cm  of s0] {$s_1$};
          \node[state] (s1) [right = 0.3cm of s3]  {$s_2$};
          \node[state] (s4) [opacity=0.5,below = 0.8cm of s1]  {$s_4$};

          \node[bullet] (s0s2s3) [below = .2cm of s0] {};
          
                    \node (init) [above = 0.3cm of s0] {};
          \draw (init) edge[ptran] (s0);

			{\color{black}
			
				\draw (s1) edge[ptran] (s3);
			}
			
			{\color{orange!90}
			
				\draw (s3) edge[ptran, bend left=18] node[above,pos=.5]{$c$} (s1);
			}
			
			{\color{green!50!black}
				\draw (s0) -- node[left,pos=.8]{$b$} (s0s2s3);
				\draw (s0s2s3) edge[ptran] coordinate[pos=.3] (bs0s2) node[right,pos=.5]{\tiny{$0.5$}} (s2);
				\draw (s0s2s3) edge[ptran, bend right, out=0.2] coordinate[pos=.08] (bs0s3) node[above,pos=.5, yshift=-0.3mm]{\tiny{$0.5$}} (s3);
				\draw (bs0s2) to[bend right] (bs0s3);
			}
			
			{\color{black}
			
				\draw (s1) edge[ptran, loop above, looseness=6] (s1);
			}
			
			{\color{black}
			
				\draw (s3) edge[ptran, loop above, looseness=6] (s3);
			}
			
			{\color{black}
			
				\draw [opacity = 0.5](s2) edge[ptran, bend right=9] (s4);
			}
			
			{\color{black}
			
				\draw [opacity = 0.5](s4) edge[ptran, bend right=9] (s2);
			}
			
			{\color{blue!50!black}
			
				\draw[opacity = 0.5] (s4) edge[ptran] node[left,pos=.2]{$a$} (s1);
			}
			
			{\color{purple}
			
				\draw[opacity = 0.5] (s2) edge[ptran] node[above,pos=.5]{$d$} (s1);
			}
			
\end{tikzpicture}
\caption{\scriptsize{MDP $\mathcal{N}$}}
\label{subfig:mdp}
\end{subfigure}
\hfill
\begin{subfigure}{0.32\textwidth}
\centering
\begin{tikzpicture}[x=18mm,y=15mm,font=\scriptsize]

          \node[state] (s0) {$s_0, q_0$};
          \node[state] (s2) [opacity = 0.5, below = 0.8cm of s0]  {$s_3, q_1$};
          \node[state] (s3) [right = 0.4cm  of s0] {$s_1, q_0$};
          \node[state] (s1) [right = 0.4cm of s3]  {$s_2, q_2$};
          \node[state] (s32) [below = 0.8cm  of s1] {$s_1, q_2$};
          \node[state] (s41) [opacity = 0.5, below = 0.8cm  of s3] {$s_4, q_1$};

          	\draw[black,densely dotted, opacity=0.5, line width=0.12mm] ($(s2.north west)+(-0.12,0)$)  rectangle ($(s41.south east)+(0.12,0)$) node[pos=.5, xshift=-10mm, yshift=3.4mm]{\tiny{$\mec_1$}};
			
			\draw[black,densely dotted, line width=0.12mm] ($(s3.north west)+(-0.12,0.14)$)  rectangle ($(s3.south east)+(0.12,0)$) node[left, yshift=0, xshift=-8.5mm, yshift=5mm]{\tiny{$\mec_2$}};
			
			\draw[black,densely dotted, line width=0.12mm] ($(s1.north west)+(-0.12,0.14)$)  rectangle ($(s32.south east)+(0.12,0)$) node[left, xshift=-8.7mm, yshift=1.2mm]{\tiny{$\mec_3$}};

          \node[bullet] (s0s2s3) [below = .2cm of s0] {};
          
                    \node (init) [above = 0.3cm of s0] {};
          \draw (init) edge[ptran] (s0);
          
          	{\color{blue!50!black}
			
				\draw[opacity = 0.5] (s41) edge[ptran, bend right=15] node[above,pos=.5]{$a$} (s1);
			}

			{\color{black}
			
				\draw (s1) edge[ptran, bend left] (s32);
			}
			
			{\color{orange!90}
			
				\draw (s32) edge[ptran] node[left,pos=.5]{$c$} (s1);
			}
			
			{\color{orange!90}
			
				\draw (s3) edge[ptran] node[below,pos=.5, bend right]{$c$} (s1);
			}
			
			{\color{green!50!black}
				\draw (s0) -- node[left,pos=.8]{$b$} (s0s2s3);
				\draw (s0s2s3) edge[ptran] coordinate[pos=.3] (bs0s2) node[right,pos=.5]{\tiny{$0.5$}} (s2);
				\draw (s0s2s3) edge[ptran, bend right, out=0.2] coordinate[pos=.08] (bs0s3) node[above,pos=.5,yshift=-0.5mm]{\tiny{$0.5$}}  (s3);
				\draw (bs0s2) to[bend right] (bs0s3);
			}
			
			{\color{black}
			
				\draw (s1) edge[ptran, loop above, looseness=6] (s1);
			}
			
			{\color{black}
				\draw (s3) edge[ptran, loop above, looseness=6] (s3);
			}
			
			{\color{black}
				\draw (s32) edge[ptran, out=100,in=130, looseness=5] (s32);
			}
			
			{\color{red!10!black}
			
				\draw[opacity = 0.5](s2) edge[ptran, bend right=12] (s41);
			}
			
			{\color{black}
			
				\draw[opacity = 0.5] (s41) edge[ptran, bend right=12](s2);
			}
			
			{\color{purple}
			
				\draw[opacity = 0.5] (s2) edge[ptran] node[below,pos=.5]{$d$} (s1);
			}

\end{tikzpicture}
\caption{\scriptsize{Product $\mdp$}}
\label{subfig:product}
\end{subfigure}
\hfill
\begin{subfigure}{0.25\textwidth}
\centering
\begin{tikzpicture}[x=18mm,y=15mm,font=\scriptsize]
          \node[state] (s0) {$\state_0, q_0$};
          \node[state] (mec1) [opacity = 0.5, below = 0.8cm of s0]  {$\quotientState_{\mec_1}$};
          \node[state] (mec2) [right = 0.25cm  of s0] {$\quotientState_{\mec_2}$};
          \node[state] (mec3) [below = 0.2cm  of mec2] {$\quotientState_{\mec_3}$};
          \node[state] (exit2) [right = 0.2cm  of mec2] {$\exit_2$};
          \node[state] (exit3) [right = 0.2cm  of mec3] {$\exit_3$};
		  \node[state] (exit1) [below = 0.08cm  of exit3] {$\exit_1$};

          \node[bullet] (s0mec1mec2) [below = .2cm of s0] {};
          
                    \node (init) [above = 0.3cm of s0] {};
          \draw (init) edge[ptran] (s0);
			
			{\color{green!50!black}
				\draw (s0) -- node[left,pos=.8]{$b$} (s0mec1mec2);
				\draw (s0mec1mec2) edge[ptran] coordinate[pos=.3] (bs0mec1) node[right,pos=.5]{\tiny{$0.5$}}  (mec1);
				\draw (s0mec1mec2) edge[ptran, bend right, out=0.2] coordinate[pos=.08] (bs0mec2) node[above,pos=.5,yshift=-0.5mm]{\tiny{$0.5$}}  (mec2);
				\draw (bs0mec1) to[bend right] (bs0mec2);
			}
			
			{\color{orange!90}
			
				\draw (mec2) edge[ptran] node[right,pos=.5]{$c$} (mec3);
			}
			
			{
				\draw [opacity = 0.5] (mec1) edge[ptran, bend right=10] node[above,pos=.8]{$\tau$} (exit1);
			}
			
			{
				\draw (mec2) edge[ptran] node[above,pos=.5]{$\tau$} (exit2);
			}
			
			{
				\draw (mec3) edge[ptran] node[above,pos=.5]{$\tau$} (exit3);
			}
			
			{\color{purple}
			
				\draw[opacity = 0.5] (mec1) edge[ptran, bend left=10] node[below,pos=.6, yshift=1mm]{$d$} (mec3);
			}
			
			{\color{blue!50!black}
			
				\draw [opacity = 0.5] (mec1) edge[ptran, bend right=30] node[right,pos=.4]{$a$} (mec3);
			}
			
\end{tikzpicture}
\caption{\scriptsize{MEC quotient $\hat{\mdp}$}}
\label{subfig:quotient}
\end{subfigure}
\hfill
\begin{subfigure}{0.16\textwidth}
\centering
\begin{tikzpicture}[x=18mm,y=15mm,font=\scriptsize]

          \node[state] (s1)  {$\state_3, q_1$};
          \node[state] (s32) [below = 0.5cm  of s1] {$\state_4, q_1$};
          \node (init) [left = 0.2cm  of s1] {};
          \node (outd) [right =0.3cm of s1] {};
          \node (outa) [right=0.3cm of s32] {};
          \node(prob1)[left=-3mm of outd, inner sep=0.7mm, xshift=-0.2mm]{$\frac{1}{3}$};
          \node(prob2)[left=-3mm of outa, inner sep=0.7mm, xshift=-0.2mm]{$\frac{2}{3}$};
          \node(mu)[inner sep=0.7mm] at ($(prob1)!0.5!(prob2)$) {\tiny{$\boldMu$}};
          \node(sched1)[above=0mm of s1, xshift=1mm, yshift=-0.4mm] {\tiny{${\hat{\scheduler}(\quotientState_{\mec_1})({\color{purple}d})}{=}1{/}4$}};
          \node(sched2)[below=0mm of s32, xshift=1mm, yshift=0.4mm] {\tiny{${\hat{\scheduler}(\quotientState_{\mec_1})({\color{blue!50!black}a})}{=}2{/}4$}};

			\draw (mu) -> (prob1);
			\draw (mu) -> (prob2);

			{\color{black}
			
				\draw (s1) edge[ptran, bend left=20] (s32);
			}
			
			{\color{black}
			
				\draw (s32) edge[ptran, bend left=20] (s1);
			}
			
			{
			\color{green!50!black}
			\draw (init) edge[ptran] (s1);
			}
			
			{
			\color{purple}
			\draw (s1) edge[ptran] node[below,pos=.4]{\tiny{$d$}} (outd);
			}
			
			{
			\color{blue!50!black}
			\draw (s32) edge[ptran] node[above,pos=.4]{\tiny{$a$}} (outa);
			}
\end{tikzpicture}
\caption{\scriptsize{MEC $\mec_1$}}
\label{subfig:mec}
\end{subfigure}
\caption{Grayed-out states are not in the subsystem and transitions leading there are redirected to a fresh $\exit$ state. For readability, some action names are omitted.}
\label{fig:mdps}
\end{figure}
\begin{example}
Consider the MDP $\mathcal{N}$ in \Cref{subfig:mdp} and query $\query_\mathcal{N} {=} \forall \scheduler {\in} \schedulers^\mathcal{N} \centerdot \allowbreak\prob_{\mathcal{N}}^\scheduler(\globally \allowbreak\{\state_0, \allowbreak \state_1\}) \allowbreak {\geq} 0.25 {\lor} \prob_{\mathcal{N}}^\scheduler(\eventually \{\state_2\}) {\geq} 0.25$. We construct the product $\mdp$ (\Cref{subfig:product}), where an automaton state $q_i$ with $i {>} 0$ indicates that a state outside $\{\state_0, \state_1\}$ and $q_2$ the state $\state_2$ has been visited. We then consider the MEC quotient $\hat{\mdp}$ (\Cref{subfig:quotient}) and rephrase $\query_\mathcal{N}$ to $\query_{\hat{\mdp}} {=} \forall \hat{\scheduler} {\in} \schedulers^\mathcal{\hat{\mdp}} \centerdot \prob_{\hat{\mdp}}^{\hat{\scheduler}}(\eventually \exit_2) {\geq} 0.25 {\lor} \prob_{\hat{\mdp}}^{\hat{\scheduler}}(\eventually \exit_3) {\geq} \allowbreak 0.25$. Reaching $\exit_2$ corresponds to staying in a MEC without seeing a state outside $\{\state_0, \state_1\}$ and reaching $\exit_3$ corresponds to having visited $\state_2$. A witnessing subsystem for $\hat{\mdp}$ is given by the non-grayed out states in \Cref{subfig:quotient} and a subsystem for $\mathcal{N}$ is obtained by considering the corresponding states.
\end{example}
\noindent\textbf{Transferring witnessing schedulers.}
Let us now construct a scheduler $\scheduler$ for $\mdp = (\states, \actions, \init, \transMat)$ from a memoryless scheduler $\hat{\scheduler} \in \schedulers^{\hat{\mdp}}$. We then obtain a scheduler for $\mathcal{N}$ from $\scheduler$, by interpreting the automaton component as additional memory locations \cite{baier_principles_2008}.
To construct a scheduler $\scheduler$, special care needs to be taken when $\hat{\scheduler}$ leaves a MEC state $\quotientState_\mec$ with probability $0 < p < 1$. Here, a standard memoryless scheduler for $\mdp$ does not suffice\footnote{A memoryless scheduler either leaves or stays in a MEC almost surely.}. Instead we construct an equivalent scheduler with only 2 memory locations for $\mdp$ and allow \emph{stochastic memory updates} \cite{brazdil_markov_2014}. We proceed as follows:
\begin{enumerate*}[label=(\roman*)]
	\item For every MEC $\mec \in \MECS(\mdp)$ we construct a scheduler $\scheduler_{\mathsf{stay}}$ that stays in $\mec$ almost surely,\label{item:stay-scheduler}
	\item and a scheduler $\scheduler_{\mathsf{leave}}$ that leaves it almost surely with the same probabilities as $\hat{\scheduler}$ (normalized with $p$),\label{item:leave-scheduler}
	\item and finally use these as building blocks for a scheduler $\scheduler$ with 2 memory locations and stochastic memory update for $\mdp$.\label{item:global-scheduler}
Conceptually, upon entering a MEC in $\mdp$, $\scheduler$ either switches to $\scheduler_{\mathsf{stay}}$ or $\scheduler_{\mathsf{leave}}$.
\end{enumerate*}

\medskip

\noindent\emph{\ref{item:stay-scheduler} Construction of $\scheduler_{\mathsf{stay}}$}: A memoryless scheduler $\scheduler_{\mathsf{stay}}$ that stays inside a MEC $\mec$ can be constructed by taking every internal action with a positive probability.

\medskip

\noindent\emph{\ref{item:leave-scheduler} Construction of $\scheduler_{\mathsf{leave}}$}: Let $\quotientState_\mec$ be the state corresponding to $\mec$ in $\hat{\mdp}$. Let $p = 1 {-} \hat{\scheduler}(\quotientState_\mec)(\tau)$ be the probability with which $\hat{\scheduler}$ leaves the MEC $\mec$. The construction of the memoryless scheduler $\scheduler_{\mathsf{leave}}$ is intricate, as we have to ensure that we leave $\mec$ via a state-action pair $(\state, \action)$ with the same probability as $\hat{\scheduler}$ plays $(\quotientState_\mec, \action)$ normalized with $p$. To show that such a scheduler can be constructed, we first establish a result for strongly connected DTMCs. Let $\dtmc = (\states, \boldDelta, \transMat)$ be a strongly connected DTMC. Let $\boldLambda \in [0,1]^\states$ and $\dtmc_{\boldLambda}$ be the DTMC resulting from $\dtmc$ by adding fresh copies $\state'$ for all states $\state \in \states$ and transitions from $\state$ to $\state'$ with probability $\boldLambda(\state)$ (the other transitions are rescaled with $1 - \boldLambda(\state)$).
\begin{restatable}{lemma}{dtmcFrequencies}
For every distribution $\boldMu \in \distr{\states}$ there exists a vector $\boldLambda \in [0,1]^\states$ such that for all states $\state$ we have $\prob_{\dtmc_{\boldLambda}}(\eventually \state') = \boldMu(\state)$.
\label{lemma:dtmc-frequencies}
\end{restatable}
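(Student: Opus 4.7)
The plan is to rephrase the claim as the solvability of a singular linear system plus a positivity requirement, and to discharge both using the stationary distribution of $\dtmc$. In $\dtmc_\boldLambda$ the copies $\{\state' : \state \in \states\}$ are the only absorbing states, and each $\state'$ can be entered only from $\state$, via the new transition of probability $\boldLambda(\state)$. Hence, if $y_\boldLambda(\state)$ denotes the expected number of visits to $\state$ in $\dtmc_\boldLambda$ before absorption, standard absorbing-Markov-chain arguments give $\prob_{\dtmc_\boldLambda}(\eventually \state') = y_\boldLambda(\state)\,\boldLambda(\state)$, with $y_\boldLambda$ satisfying the balance equation $y_\boldLambda(\state) = \boldDelta(\state) + \sum_{\state''} y_\boldLambda(\state'')\,(1 - \boldLambda(\state''))\,\transMat(\state'', \state)$.

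First I would impose the desired values $\boldMu(\state) = y_\boldLambda(\state)\,\boldLambda(\state)$, substitute them into the balance equation, and eliminate $\boldLambda$ to obtain the linear system $(I - \transMat^\top)\,y = \boldDelta - \transMat^\top \boldMu$. Its right-hand side sums to zero because $\boldDelta$ and $\boldMu$ are probability distributions and $\transMat$ is row-stochastic; and the image of $I - \transMat^\top$ is exactly the hyperplane of zero-sum vectors, since $\mathbf{1}$ spans its left kernel ($\transMat \mathbf{1} = \mathbf{1}$). So the system admits at least one solution $y_0 \in \reals^\states$. Since $\dtmc$ is strongly connected, $\transMat$ has a unique strictly positive stationary distribution $\pi$, which spans the right kernel of $I - \transMat^\top$; every solution thus has the form $y_0 + c\pi$, and choosing $c$ large enough yields $y := y_0 + c\pi$ with $y(\state) \geq \boldMu(\state)$ and $y(\state) > 0$ for all $\state$. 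Setting $\boldLambda(\state) := \boldMu(\state)/y(\state)$ (and $\boldLambda(\state) := 0$ whenever $\boldMu(\state) = 0$) produces an admissible $\boldLambda \in [0,1]^\states$, and by construction $y$ then satisfies the balance equation of $\dtmc_\boldLambda$, so $y_\boldLambda = y$ and $\prob_{\dtmc_\boldLambda}(\eventually \state') = y(\state)\,\boldLambda(\state) = \boldMu(\state)$ as required.

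The main obstacle is not solving the linear system itself, which is immediate from the zero-sum observation, but ensuring that a solution can be made to dominate $\boldMu$ componentwise so that the resulting $\boldLambda$ stays in $[0,1]$. This is exactly where strong connectedness is essential: it supplies the strictly positive stationary direction $\pi$ along which any particular solution can be shifted upwards. A secondary point to check is that absorption in $\dtmc_\boldLambda$ is almost sure under the constructed $\boldLambda$, which follows because $\boldMu$ being a distribution forces $\boldLambda \not\equiv 0$, and strong connectedness then guarantees positive probability of reaching some $\state$ with $\boldLambda(\state) > 0$ from every initial state.
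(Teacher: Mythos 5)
Your proof follows essentially the same route as the paper's: set up the same linear system $(\vect{I}-\transMat^\top)\,\vect{y} = \boldDelta - \transMat^\top\boldMu$, shift a particular solution along the strictly positive stationary direction to dominate $\boldMu$, and read off $\boldLambda(\state) = \boldMu(\state)/\vect{y}(\state)$. The only cosmetic differences are that you argue solvability directly via the fact that the right-hand side sums to zero and the image of $\vect{I}-\transMat^\top$ is exactly the zero-sum hyperplane (the paper reaches the same conclusion by invoking Farkas' lemma on the dual system), and you make explicit the check that absorption in $\dtmc_\boldLambda$ is almost sure — a small but worthwhile detail that the paper leaves implicit when asserting uniqueness of the frequency equations.
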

\begin{proof}[Sketch]
We show that $\boldLambda$ can be obtained by solving a system of linear equations that characterize the expected frequencies of each state in $\dtmc_{\boldLambda}$.
\end{proof}
For any distribution $\boldMu$, we can redirect transitions of a strongly connected DTMC such that it is left according to $\boldMu$. 
We consider the scheduler $\scheduler'$ that takes every internal action in $\mec$ uniformly and as such induces a strongly connected DTMC. We instantiate $\boldMu$ such that it captures the probability with which $\hat{\scheduler}$ leaves $\quotientState_\mec$ via a state $\state \in \states(\mec)$ and use the resulting $\boldLambda$ to ``alter'' $\scheduler'$, thereby obtaining $\scheduler_\mathsf{leave}$.
We instantiate $\boldDelta$ and $\boldMu$ as follows. For all $\state \in \states(\mec)$ we define
\[
\Delta(\state) = \sum\nolimits_{(t, \action) \in \SA \setminus \mec} \freq_{\hat{\mdp}}^{\hat{\scheduler}}(\iota(t), \action) \cdot \transMat(t, \action, \state) + \initDistr(\state)
\]
Recall that $\iota$ maps a state from $\mdp$ to the corresponding one in $\hat{\mdp}$ and that $\initDistr$ is Dirac in the initial state $\init$ of $\mdp$. Then $\Delta(\state)$ describes the frequency with which $\mec$ is entered via state $\state$. We then define the initial distribution $\boldDelta(\state) = {\Delta(\state)}\mathbin{/}{\sum_{t \in \states(\mec)} \Delta(t)}$ for $\state \in \states(\mec)$. 
Let $\freq^{\hat{\scheduler}}(\state) = \sum_{(\state, \action) \in \SA \setminus C} \freq_{\hat{\mdp}}^{\hat{\scheduler}}(\quotientState_\mec, \action)$ for $\state \in \states(\mec)$, i.e. frequency of leaving $\mec$ via $\state$. Let $\freq^{\hat{\scheduler}}(\quotientState_\mec) = \sum_{\state \in \states(\mec)} \freq^{\hat{\scheduler}}(\state)$. 
For every state $\state \in \states({\mec})$ and $\action \in \actions(\state)$ with $(\state, \action) \notin \mec$ we define:
\[
\boldMu(\state, \action) = {\freq^{\hat{\scheduler}}(\state, \action)}/{\freq^{\hat{\scheduler}}(\quotientState_\mec)}
\qquad 
\text{and} 
\qquad \boldMu(\state) = \sum\nolimits_{(\state, b) \in \SA {\setminus} \mec} \boldMu(\state, b)
\]
For each state $\state \in \states(\mec)$ and $\action \in \actions(\state)$ we then define $\scheduler_{\mathsf{leave}}(\state)(\action) = (1 - \boldLambda(\state)) \cdot \scheduler'(\state)(\action)$ if $(\state, \action) \in \mec$ and $\scheduler_{\mathsf{leave}}(\state)(\action) = \boldLambda(\state) \cdot (\freq^{\hat{\scheduler}}(\state, \action) / \freq^{\hat{\scheduler}}(\state))$ if $\boldLambda(\state) > 0$ and $\scheduler_{\mathsf{leave}}(\state)(\action) = \scheduler'(\state)(\action)$ otherwise.

\medskip

\noindent\emph{\ref{item:global-scheduler} Construction of witnessing scheduler:} Let $\scheduler_{\mathsf{stay}_\mec}$ and $\scheduler_{\mathsf{leave}_\mec}$ be the schedulers that stay in and leave MEC $\mec$ almost surely, respectively, as previously described. Let $p_\mec$ be the probability with which $\hat{\scheduler}$ leaves MEC $\mec$. If $\init$ is in a MEC, then $p_{\mathsf{init}}$ is the probability with which $\hat{\scheduler}$ leaves the containing MEC, otherwise $p_{\mathsf{init}} = 1$. We define $\scheduler = (\alpha_{\mathsf{update}}, \alpha_{\mathsf{next}}, \{\mem_0, \mem_1\}, \boldDelta_{\mems})$ where $\boldDelta_{\mems}(\mem_0) = p_{\mathsf{init}}$ and $\boldDelta_{\mems}(\mem_1) = 1{-}p_{\mathsf{init}}$. 
Further, the next move function is given as
\[
\alpha_{\mathsf{next}}(\state, \mem) =
\begin{cases}
\hat{\scheduler}(\state), & \text{if } \state \not \in \allMecStates \\
\scheduler_{\mathsf{stay}_\mec}(\state), & \text{if } \mem = \mem_1, \state \in \states({\mec}) \\
\scheduler_{\mathsf{leave}_\mec}(\state), & \text{if } \mem = \mem_0, \state \in \states({\mec})
\end{cases}
\]
The memory update function is defined as $\alpha_{\mathsf{update}}(\action, \state, \mem)(\mem_0) {=} p_\mec$ and $\alpha_{\mathsf{update}}\allowbreak(\action, \state, \mem)(\mem_1) {=} 1{-}p_\mec$ if $\state \in \states({\mec})$ and there does not exist $t \in \states({\mec})$ with $(t, \action) \in \mec$. Otherwise, we set $\alpha_{\mathsf{update}}(\action, \state, \mem)(\mem) {=} 1$.
The scheduler $\scheduler$ ``flips a coin'' \emph{upon entering MECs} to decide whether it stays in or leaves the MEC. Depending on the outcome, it either switches to a scheduler $\scheduler_{\mathsf{stay}}$ or $\scheduler_{\mathsf{leave}}$ that stay in or leave the MEC, respectively. Thereby, we ensure that $\scheduler$ stays in and leaves a MEC with same probability as $\hat{\scheduler}$. Outside MECs, $\scheduler$ behaves like $\hat{\scheduler}$. Note that once $\scheduler$ switches to $\mem_1$ it cannot change back to $\mem_0$ and stays in the MEC almost surely.
\begin{example}
Consider MDP $\mathcal{N}$ from \Cref{fig:mdps} and query $\exists \scheduler \allowbreak{\in} \schedulers^\mathcal{N} \centerdot \allowbreak\prob_{\mathcal{N}}^\scheduler(\globally (\states \setminus \{\state_2\})) \allowbreak {\geq} 0.5 {\land} \prob_{\mathcal{N}}^\scheduler(\eventually \{\state_2\}) {\geq} 0.5$. The query for $\hat{\mdp}$ is given by $\exists \hat{\scheduler} {\in} \schedulers^{\hat{\mdp}} \centerdot \allowbreak\prob_{\hat{\mdp}}^{\hat{\scheduler}}(\eventually \{\exit_1,\allowbreak\exit_2\}){\geq} 0.5 {\land} \prob_{\hat{\mdp}}^{\hat{\scheduler}}(\eventually \{\exit_3\}) {\geq} 0.5$. We consider a witnessing scheduler $\hat{\scheduler}$ that takes {\color{blue!50!black} $a$}, {\color{purple}$d$} and {\color{orange!90}$c$} with probability $1{/}2$, $1{/}4$ and $1{/}4$, respectively. We construct a corresponding scheduler $\scheduler$ for $\mdp$, focusing on MEC $\mec_1$. Note that $\hat{\scheduler}$ stays with probability $1{/}4$ in $\mec_1$, i.e.\ $\hat{\scheduler}(\quotientState_{\mec_1})(\tau) {=} 1{/}4$. A scheduler $\scheduler_{\mathsf{stay}}$ can be constructed by only taking internal actions. For $\scheduler_{\mathsf{leave}}$ we need to ensure that $\mec_1$ is left correctly, that is, \emph{if $\hat{\scheduler}$ leaves $\mec_1$} it does so with probability $1/3$ via {\color{purple}$d$} and $2/3$ via {\color{blue!50!black} $a$}, hence $\boldMu {=} (1/3, 2/3)$ (see \cref{subfig:mec}). Because $\mec_1$ is only entered via $(\state_3, q_1)$, we set $\boldDelta {=} (1, 0)$ and applying \Cref{lemma:dtmc-frequencies} then yields $\boldLambda {=} (1/6, 2/5)$. Consequently, we define $\scheduler_{\mathsf{leave}}(\state_3, q_1)({\color{purple}d}) {=} 1/6$ and $\scheduler_{\mathsf{leave}}(\state_4, q_1)({\color{blue!50!black} a}) {=} 2/5$. We then construct $\scheduler$ by combining the different schedulers. Particularly, $\scheduler$ changes its memory location from $\mem_0$ to $\mem_1$ with probability $0.25$ when taking ${\color{green!50!black}b}$ and arriving in $(\state_3, q_1)$.
\end{example}

\section{Certificates and Witnesses for \MP-Queries}
\label{section:mean-payoff}
Building on the ideas for \Reach-queries, we address certificates and witnesses for multi-objective \MP-queries. We first discuss the certificates for \existsCQ- and \universalDQ-queries. The former characterization is well-studied \cite{brazdil_markov_2014}, while the latter again has only been implicitly considered \cite{quatmann_multi-objective_2021, quatmann_verification_2023}. Analogously, we use the certificates to find witnessing subsystems. For the remainder of this section, we fix an arbitrary MDP $\mdp =  (\states, \actions, \init, \transMat)$ with reward vectors $\vect{r}_1, \dots, \vect{r}_k \in \rationals^\SA$.

\medskip

\noindent\textbf{Farkas Certificates for \MP-queries.} While our certificates closely resemble classical results from \cite{kallenberg_linear_1983, puterman_markov_1994} and \cite{brazdil_markov_2014}, the conditions are slight variations thereof, allowing us to find minimal witnessing subsystems.
We define $\vect{r}_{\min} \in \rationals^{[k]}$ by $\vect{r}_{\min}(i) = \min_{(\state, \action) \in \SA} \vect{r}_i(\state, \action)$ for all $i \in [k]$, i.e.\ the vector containing the minimal reward for each reward vector $\vect{r}_i$. Similarly, we define $\vect{R}_{\min} \in \rationals^{\states \times [k]}$ by $\vect{R}_{\min}(\state, i) = \vect{r}_{\min}(i)$ for all $\state \in \states$ and $i \in [k]$.
\begin{restatable}[Certificates for \existsCQ-mean-payoff queries]{lemma}{certificatesExistsCQ}
\label{lemma:certificates-mp-exists}
There exists a scheduler $\scheduler \in \schedulers^\mdp$ such that $\Conj_{i=1}^k \expectation[\mdp, \init]{\scheduler}{\lrInf{\vect{r}_i}} \geq \lambda_i$ if and only if there exist $\vect{x}, \vect{y} \in \realsnn^\SA$ and $\vect{z} \in \realsnn^\states$ such that:
\begin{itemize}[align=left, leftmargin=*, itemsep=0mm, topsep=0.8mm, parsep=0mm]
\item $\forall \state {\in} \states \centerdot \initDistr(\state) {+} \sum_{(\state', \action') \in \SA} \transMat(\state', \action', \state) {\cdot} \vect{y}(\state', \action') {=}  \vect{z}(\state) {+} \sum_{\action \in \actions(\state)} \vect{y}(\state, \action) {+} \vect{x}(\state, \action)$
\item $\forall \state {\in} \states \centerdot \sum_{(\state', \action') \in \SA} \transMat(\state', \action', \state) \cdot \vect{x}(\state', \action') = \sum_{\action \in \actions(\state)} \vect{x}(\state, \action)$
\item $\forall i {\in} [k] \centerdot \sum_{(\state, \action) \in \SA} \vect{x}(\state, \action) \cdot \vect{r}_i(\state, \action) + \sum_{\state \in \states} \vect{z}(\state) \cdot {\vect{r}}_{\min}(i) \geq \lambda_i$
\end{itemize}
Let $\mathcal{H}_{\mdp}^{\mathsf{MP}}(\boldLambda) \subseteq \realsnn^\SA \times \realsnn^\SA \times \realsnn^\states$ denote the corresponding polyhedron.
\end{restatable}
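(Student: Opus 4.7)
The plan is to reduce the statement to the classical LP-characterization of multi-objective mean-payoff from \cite{brazdil_markov_2014, kallenberg_linear_1983}: a scheduler $\scheduler \in \schedulers^\mdp$ with $\Conj_{i=1}^k \expectation[\mdp, \init]{\scheduler}{\lrInf{\vect{r}_i}} \geq \lambda_i$ exists if and only if there are nonnegative $\vect{x}^*, \vect{y}^*$ satisfying the two flow equations of the lemma with the $\vect{z}$ term dropped, together with $\sum_{(\state,\action) \in \SA} \vect{x}^*(\state,\action) \vect{r}_i(\state,\action) \geq \lambda_i$ for every $i \in [k]$. The extra variable $\vect{z}$ in our formulation is an intentional slack that models ``discarded'' flow charged at the pessimistic rate $\vect{r}_{\min}$; the whole point is to show that it does not enlarge the set of satisfiable queries, while enabling the subsystem arguments later.

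The $(\Rightarrow)$ direction is immediate: starting from a scheduler attaining the bounds, the classical result supplies $(\vect{x}^*, \vect{y}^*)$, and $(\vect{x}, \vect{y}, \vect{z}) = (\vect{x}^*, \vect{y}^*, \vect{0})$ satisfies all three conditions. For $(\Leftarrow)$, given $(\vect{x}, \vect{y}, \vect{z}) \in \mathcal{H}_\mdp^{\mathsf{MP}}(\boldLambda)$, I will build a classical certificate $(\vect{x}^*, \vect{y}^*)$ with the reward bounds preserved, so that a witnessing scheduler is obtained from the classical result. The intuition is that the mass $\vect{z}(\state)$ cannot truly vanish in an execution; it must be routed further through the MDP, and since every state-action pair yields reward at least $\vect{r}_{\min}(i)$, redistributing it can only improve the reward inequality. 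Concretely, for every $\state$ with $\vect{z}(\state) > 0$, I pick a MEC $\mec \in \MECS(\mdp)$ reachable from $\state$ (which exists in any finite MDP), a memoryless feeder scheduler reaching $\mec$ from $\state$ almost surely, and a stationary distribution on $\mec$. I add the induced transient frequencies, scaled by $\vect{z}(\state)$, to $\vect{y}$ to obtain $\vect{y}^*$, and the induced stationary frequencies inside $\mec$, scaled the same way, to $\vect{x}$ to obtain $\vect{x}^*$. By construction the first equation becomes the classical flow balance with no $\vect{z}$ term, and the second equation is preserved since a stationary flow on a MEC satisfies the recurrence balance exactly. The reward bound follows from $\sum_{(\state,\action)} \vect{x}^*(\state,\action) \vect{r}_i(\state,\action) \geq \sum_{(\state,\action)} \vect{x}(\state,\action) \vect{r}_i(\state,\action) + \sum_\state \vect{z}(\state) \vect{r}_{\min}(i) \geq \lambda_i$, where the first inequality uses $\vect{r}_i \geq \vect{r}_{\min}(i)$ pointwise and the second is the third condition of the lemma.

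The main obstacle will be the precise verification of the two flow equations after the redistribution, in particular ensuring that the stationary part added inside each host MEC satisfies the global recurrence (trivially outside the MEC, by stationarity inside) and that the transient parts telescope correctly along every feeder path when summed across all $\state$ with $\vect{z}(\state) > 0$. Both reduce to standard balance equations in the finite Markov chains induced by the feeder schedulers, with the guaranteed reachability of some MEC from every state of a finite MDP justifying their existence. An alternative would be to construct the scheduler directly from $(\vect{x}, \vect{y}, \vect{z})$ along the lines of \cite{brazdil_markov_2014}, with $\vect{y}$ governing the transient phase, $\vect{x}$ the long-run behavior on MECs, and the $\vect{z}$ mass absorbed into the ``stay in MEC'' phase, but the reduction route above seems cleaner because it isolates all the probabilistic reasoning inside the previously established classical characterization.
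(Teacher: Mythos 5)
The forward direction is fine and matches the paper: set $\vect{z}=\vect{0}$ and invoke the classical characterization from \cite{brazdil_markov_2014}. The gap is in the backward direction, and it is real.

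The paper handles $(\Leftarrow)$ by building an augmented MDP $\mdp'$ that adds a fresh absorbing state $\exit$, reachable from every state via a new action $\tau$ whose reward is $\vect{r}_{\min}(i)$ for every $i$. The $\vect{z}$-mass from state $\state$ is routed as transient flow on the $\tau$-action into $\exit$, and the corresponding recurrent flow is placed entirely on the self-loop $(\exit,\tau)$. Because $\{\exit\}$ is a singleton MEC, the entrance distribution into it is trivially equal to its stationary distribution, so the classical transient and recurrent balance equations for $\mdp'$ hold exactly; one then applies the classical characterization to $\mdp'$ and observes that, since $\tau$ collects the worst possible reward, an at-least-as-good scheduler exists for $\mdp$ itself.

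Your construction instead routes the mass $\vect{z}(u)$ into a genuine MEC $\mec_u$ of $\mdp$ and contributes (a scaled copy of) the \emph{stationary} distribution of $\mec_u$ to $\vect{x}^*$. This preserves the recurrent balance, but it breaks the transient balance at the states inside $\mec_u$. At such a state $\state$, the first equation of the classical LP forces $\sum_\action \vect{x}^*(\state,\action)$ to equal the mass absorbed at $\state$; for the $u$-contribution this is the probability that the feeder enters $\mec_u$ at $\state$, which is generally \emph{not} the stationary probability of $\state$. A concrete failure: if $u$ already lies in a two-state cycle MEC, the entrance distribution is Dirac at $u$ while the stationary distribution is uniform, and the balance at $u$ reads $1 = 1/2$. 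To repair this you would need to additionally inject a ``mixing'' transient flow \emph{inside} the MEC that moves mass from the entrance distribution to the stationary one (conceptually related to the paper's \Cref{lemma:dtmc-frequencies}), which is a non-trivial step your proposal neither identifies nor supplies. The paper's $\exit$-state trick is precisely the device that makes this mixing step unnecessary.
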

\noindent In \cite{brazdil_markov_2014}, $\vect{x}$ and $\vect{y}$ correspond to the \emph{recurrent} and \emph{transient} flows, capturing the frequency of the state-action pairs in the limit and transient part, respectively. Compared to \cite{brazdil_markov_2014}, we consider the additional variable $\vect{z}$, allowing flow to be ``redirected'' to an implicit state where the worst possible reward is collected.
\begin{restatable}[Certificates for \universalDQ-mean-payoff queries]{lemma}{certificatesForallDQ}
\label{lemma:certificates-mp-universal}
For all schedulers $\scheduler \in \schedulers^\mdp$ we have $\Disj_{i=1}^k \expectation[\mdp, \init]{\scheduler}{\lrSup{\vect{r}_i}} \geq \lambda_i$ if and only if there exist $\vect{g}, \vect{b} \in \reals^\states$ and $\vect{z} \in \realsnn^{[k]}$ such that:
\begin{itemize}[align=left, leftmargin=*, itemsep=0mm, topsep=0.8mm, parsep=0mm]
\item $\forall (\state, \action) {\in} \SA \centerdot \vect{g}(\state) \leq \sum_{\state' \in \states} \transMat(\state, \action, \state') \cdot \vect{g}(\state')$
\item $\forall (\state, \action) {\in} \SA \centerdot \vect{g}(\state) {+} \vect{b}(\state) \leq \sum_{\state' \in \states} \transMat(\state, \action, \state') {\cdot} \vect{b}(\state') {+} \sum_{i=1}^k  \vect{z}(i) \cdot \vect{r}_i(\state, \action)$
\item $\forall \state {\in} \states \centerdot \vect{g}(\state) \geq \sum_{i=1}^k \vect{z}(i) \cdot {\vect{r}}_{\min}(i)$
\item $\vect{g}(\init) \geq \sum_{i=1}^k \lambda_i \cdot \vect{z}(i)$ and $\sum_{i=1}^k \vect{z}(i) = 1$
\end{itemize}
Let $\mathcal{F}_{\mdp}^{\mathsf{MP}}(\boldLambda) \subseteq \reals^\states \times \reals^\states \times \realsnn^{[k]}$ denote the corresponding polyhedron.
\end{restatable}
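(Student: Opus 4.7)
The plan is to mirror the approach used for \ReachInv-queries in \Cref{subsection:farkas-certificates}: dualise Lemma~\ref{lemma:certificates-mp-exists} via a separation argument (as for \universalDQ-reachability in the spirit of \cite{forejt_pareto_2012, quatmann_multi-objective_2021}), and recognise the dual variables as the standard gain-bias pair $(\vect{g}, \vect{b})$ of a scalarised mean-payoff LP.

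Backward direction (``$\Leftarrow$''): fix an arbitrary scheduler $\scheduler$. The first inequality says $\vect{g}$ is sub-harmonic along every action, hence a sub-martingale under the Markov chain induced by $\scheduler$, giving $\expectation[\mdp, \init]{\scheduler}{\vect{g}(\state_t)} \geq \vect{g}(\init)$ for every $t$. Summing the combined gain-bias inequality along a prefix of length $n$, taking expectations, telescoping the $\vect{b}$-terms and dividing by $n$ yields
\[
\vect{g}(\init) \;\leq\; \limsup_{n\to\infty} \tfrac{1}{n}\,\expectation[\mdp, \init]{\scheduler}{\textstyle\sum_{t=0}^{n-1} \sum_{i=1}^k \vect{z}(i)\,\vect{r}_i(\state_t, \action_t)}\;\leq\; \expectation[\mdp, \init]{\scheduler}{\lrSup{\textstyle\sum_i \vect{z}(i)\vect{r}_i}},
\]
using boundedness of $\vect{b}$ and reverse Fatou. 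Subadditivity of $\lrSup$ with $\vect{z} \geq 0$ then gives $\vect{g}(\init) \leq \sum_i \vect{z}(i)\,\expectation[\mdp, \init]{\scheduler}{\lrSup{\vect{r}_i}}$. Combining with $\vect{g}(\init) \geq \sum_i \lambda_i \vect{z}(i)$ and $\sum_i \vect{z}(i) = 1$, $\vect{z} \geq 0$ we obtain $\sum_i \vect{z}(i)\big(\expectation[\mdp, \init]{\scheduler}{\lrSup{\vect{r}_i}} - \lambda_i\big) \geq 0$, so at least one index $i$ with $\vect{z}(i) > 0$ must satisfy $\expectation[\mdp, \init]{\scheduler}{\lrSup{\vect{r}_i}} \geq \lambda_i$, proving the disjunction.

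Forward direction (``$\Rightarrow$''): the achievable set $\{(\expectation[\mdp, \init]{\scheduler}{\lrSup{\vect{r}_1}}, \dots, \expectation[\mdp, \init]{\scheduler}{\lrSup{\vect{r}_k}}) : \scheduler \in \schedulers^\mdp\}$ is convex and closed (by \cite{brazdil_markov_2014}), and by hypothesis is disjoint from the open orthant $\{\vect{\mu} : \vect{\mu} < \boldLambda\}$. A separating-hyperplane argument of the type employed in \cite{forejt_pareto_2012} produces a nonzero $\vect{z} \in \realsnn^{[k]}$, which we normalise to $\sum_i \vect{z}(i) = 1$, such that $\sum_i \vect{z}(i)\,\expectation[\mdp, \init]{\scheduler}{\lrSup{\vect{r}_i}} \geq \sum_i \vect{z}(i)\,\lambda_i$ for every $\scheduler$. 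The inner problem is a scalar mean-payoff minimisation for the reward vector $\sum_i \vect{z}(i)\vect{r}_i$; its LP from Lemma~\ref{lemma:certificates-mp-exists} (with this $\vect{z}$ fixed) admits a dual, obtained via Farkas' lemma, whose variables are exactly $\vect{g}$ and $\vect{b}$ and whose constraints are precisely the first two bullets. The third inequality, $\vect{g}(\state) \geq \sum_i \vect{z}(i)\vect{r}_{\min}(i)$, is the dual counterpart of the slack variable $\vect{z}$ appearing in Lemma~\ref{lemma:certificates-mp-exists} (the ``redirection to a worst-reward sink''), and the terminal bound $\vect{g}(\init) \geq \sum_i \lambda_i \vect{z}(i)$ is obtained from LP strong duality together with the separation inequality.

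The main obstacle is the gap between $\expectation{}{\lrSup{\sum_i \vect{z}(i)\vect{r}_i}}$ and $\sum_i \vect{z}(i)\,\expectation{}{\lrSup{\vect{r}_i}}$, which is only an inequality because $\lrSup$ is subadditive rather than linear. This is precisely what forces the extra lower bound $\vect{g} \geq \sum_i \vect{z}(i)\vect{r}_{\min}(i)$ into the certificate and has to be reflected faithfully through the LP duality; checking that the resulting polyhedron is \emph{tight} (so that the separation-plus-duality round trip closes) is the subtle part. A secondary technical point is justifying the interchange of $\limsup$ and expectation in the backward direction, which is standard for bounded rewards but should be spelled out to handle unichain versus multichain MDPs uniformly.
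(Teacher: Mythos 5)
Your argument takes a genuinely different route from the paper's. The paper negates the $\universalDQ$-query, rewrites $\lrSup{\vect{r}} = -\lrInf{-\vect{r}}$, invokes the LP characterisation of $\existsCQ$ mean-payoff achievability from \cite[Theorem~4.1]{brazdil_markov_2014}, homogenises with an extra scaling variable $\gamma$ so that the strict inequality becomes an explicit Farkas constraint, and then performs a single application of Farkas' lemma (\Cref{lemma:farkas}~\ref{lemma:farkas-1}); the condition $\vect{g}(\state) \geq \sum_i \vect{z}(i)\,\vect{r}_{\min}(i)$ is afterwards added without loss of generality via \cite[Theorem~4.2.2]{kallenberg_linear_1983}. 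Your backward direction is instead a self-contained verification: the sub-martingale property of $\vect{g}$, telescoping of $\vect{b}$, reverse Fatou, and subadditivity of $\lrSup$. That argument is correct and arguably more illuminating than the paper's, which simply inherits one implication from the cited LP characterisation.

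The forward direction, however, has a concrete gap exactly where you flag it, and flagging it is not the same as closing it. The separating hyperplane yields $\vect{z} \geq 0$, $\sum_i \vect{z}(i) = 1$, with $\sum_i \vect{z}(i)\,\expectation{\scheduler}{\lrSup{\vect{r}_i}} \geq \sum_i \vect{z}(i)\lambda_i$ for all $\scheduler$. To produce $\vect{g}, \vect{b}$ from LP duality you need the scalarised value $\inf_{\scheduler}\expectation{\scheduler}{\lrSup{\textstyle\sum_i \vect{z}(i)\vect{r}_i}}$ to be at least $\sum_i \vect{z}(i)\lambda_i$, but subadditivity gives $\expectation{\scheduler}{\lrSup{\sum_i \vect{z}(i)\vect{r}_i}} \leq \sum_i \vect{z}(i)\expectation{\scheduler}{\lrSup{\vect{r}_i}}$, i.e.\ the wrong direction. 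The missing observation is that the scalar mean-payoff minimisation is attained by a memoryless deterministic scheduler, under which the Cesàro averages converge almost surely in every bottom SCC, so $\lrSup{\sum_i \vect{z}(i)\vect{r}_i} = \sum_i \vect{z}(i)\lrSup{\vect{r}_i}$ holds path-wise with probability $1$ at the optimiser; hence the subadditivity is tight there and the separation bound does transfer to the scalarised LP value. You also implicitly use that the $\leq$-downward closure of the set of achievable $\lrSup$-vectors is closed and convex; \cite{brazdil_markov_2014} states this for $\lrInf$-objectives, and a short translation (via $\lrSup{\vect{r}} = -\lrInf{-\vect{r}}$, which the paper performs explicitly) is needed. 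The paper's single Farkas application sidesteps both of these subtleties entirely.
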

We obtain the certificates via application of Farkas' lemma to the characterizations given in \cite{brazdil_markov_2014} and \cite{kallenberg_linear_1983, puterman_markov_1994}. Analogous to the reachability setting \cite{forejt_pareto_2012}, one can interpret $\vect{z}$ as a separating vector. This is used in \cite{quatmann_multi-objective_2021, quatmann_verification_2023} where the vector $\vect{z}$ arises as by-product of verifying the dual \existsCQ-query. However, neither have certificates nor witnessing subsystems been addressed. In \cite{puterman_markov_1994}, $\vect{g}$ and $\vect{b}$ are referred to as gain and bias, capturing the mean-payoff and the expected deviation until the mean-payoff ``stabilizes'' \cite{puterman_markov_1994, kretinsky_efficient_2017}, respectively.

\medskip

\noindent\textbf{Witnessing Subsystems for \MP-queries.}
We focus on obtaining witnessing from the certificates. Schedulers have been extensively studied in \cite{brazdil_markov_2014}. Recall that in subsystems (\Cref{section:preliminaries}), the smallest possible reward is collected in $\exit$.
\begin{restatable}[Certificates and subsystems]{theorem}{certificatesAndSubsystems}
Let $\states' \subseteq \states$. Then we have:
\begin{enumerate}[label={(\roman*)}, align=left, leftmargin=*, itemsep=0mm, topsep=0.8mm, parsep=0mm]
\item $\exists \scheduler' \in \schedulers^{\mdp_{\states'}} \centerdot \Conj_{i=1}^k \expectation[\mdp_{\states'}, \init]{\scheduler'}{\lrInf{\vect{r}_i}} \geq \lambda_i$ if and only if there exists $(\vect{x}, \vect{y}, \vect{z}) \in \mathcal{H}_{\mdp}^{\mathsf{MP}}(\boldLambda)$ such that $\stateSupp{\vect{x}} \union \stateSupp{\vect{y}} \subseteq \states'$. \label{theorem:certificate-subsystem-exists}
\item $\forall \scheduler' \in \schedulers^{\mdp_{\states'}}\Disj_{i=1}^k \expectation[\mdp_{\states'}, \init]{\scheduler'}{\lrSup{\vect{r}_i}} \geq \lambda_i$ if and only if there exists $(\vect{g}, \vect{b}, \vect{z}) \in \dqPoly{\mdp}{\geq}(\boldLambda)$ such that $\supp{\vect{g} - \vect{R}_{\min} \vect{z}} \subseteq \states'$. \label{theorem:certificate-subsystem-forall}
\end{enumerate}
\label{theorem:certificates-and-subsystems}
\end{restatable}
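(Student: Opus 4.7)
The overall strategy is to apply Lemma \ref{lemma:certificates-mp-exists} and Lemma \ref{lemma:certificates-mp-universal} to the subsystem $\mdp_{\states'}$ itself, reducing the theorem to a bijective correspondence between the subsystem polyhedra $\mathcal{H}^{\mathsf{MP}}_{\mdp_{\states'}}(\boldLambda)$, $\mathcal{F}^{\mathsf{MP}}_{\mdp_{\states'}}(\boldLambda)$ and the $\mdp$-polyhedra $\mathcal{H}^{\mathsf{MP}}_{\mdp}(\boldLambda)$, $\mathcal{F}^{\mathsf{MP}}_{\mdp}(\boldLambda)$ restricted by the support conditions. In both directions, I would build these correspondences by explicit constructions that zero out or aggregate entries at states in $\states \setminus \states'$ versus the $\exit$ state.

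\emph{For part (i)} the correspondence is essentially a flow accounting. Given $(\vect{x},\vect{y},\vect{z}) \in \mathcal{H}^{\mathsf{MP}}_{\mdp}(\boldLambda)$ with $\stateSupp{\vect{x}} \cup \stateSupp{\vect{y}} \subseteq \states'$, I would define the subsystem certificate by restricting $\vect{x}, \vect{y}$ to $\states'$, setting $\vect{z}'(\state) = \vect{z}(\state)$ for $\state \in \states'$, and aggregating $\vect{z}'(\exit) = \sum_{t \notin \states'} \vect{z}(t)$. Summing the $\mdp$ transient-flow conservation equations over $t \notin \states'$ then yields the subsystem equation at $\exit$. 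Conversely, given $(\vect{x}', \vect{y}', \vect{z}') \in \mathcal{H}^{\mathsf{MP}}_{\mdp_{\states'}}(\boldLambda)$, I would extend $\vect{x}, \vect{y}$ by zero on $\states \setminus \states'$ and set $\vect{z}(t) = \initDistr(t) + \sum_{(\state',\action') \in \states' \times \actions} \transMat(\state', \action', t) \vect{y}'(\state', \action')$ for $t \notin \states'$, to absorb the leaked transient inflow. The key technical lemma is that the recurrent flow $\vect{x}'$ satisfies a support decomposition: if $\vect{x}'(\state, \action) > 0$ with $\state \in \states'$, then $\transMat'(\state, \action, \exit) = 0$, because $\exit$ only self-loops and any EC involving both $\exit$ and $\state$ would violate strong connectivity. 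This observation makes the recurrent-flow conservation in $\mdp$ at $\state \notin \states'$ collapse to $0 = 0$, and the reward condition matches once we account for $\vect{x}'(\exit, \cdot) + \vect{z}'(\exit) = \sum_{t \notin \states'} \vect{z}(t)$.

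\emph{For part (ii)} I would first observe that conditions (b) and (c) at $\exit$ in the subsystem force $\vect{g}'(\exit) = \sum_i \vect{z}'(i) \cdot \vect{r}_{\min}(i)$. For the direction from an $\mdp$-certificate with support to a subsystem certificate, I would copy $\vect{g}, \vect{b}, \vect{z}$ to $\states'$, set $\vect{g}'(\exit) = \sum_i \vect{z}(i) \vect{r}_{\min}(i)$ (matching the value $\vect{g}$ takes on $\states \setminus \states'$ by the support condition), and take $\vect{b}'(\exit) = \max_{t \notin \states'} \vect{b}(t)$; this maximum dominates the term $\sum_{t \notin \states'} \transMat(\state, \action, t) \vect{b}(t)$ appearing in condition (b) of $\mdp$, so (b) at $\state \in \states'$ transfers. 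For the reverse direction, I would set $\vect{g}(t) = \vect{g}'(\exit)$ for $t \notin \states'$, which immediately establishes the support condition, and aim to extend $\vect{b}$ to $\states \setminus \states'$ so that condition (b) holds everywhere.

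\emph{The main obstacle} is precisely this last step: extending $\vect{b}$ to $\states \setminus \states'$. Setting $\vect{b}(t) = \vect{b}'(\exit)$ makes (b) at $\state \in \states'$ transfer from the subsystem, but (b) at $\state \notin \states'$ then reduces to $-\sum_{\state' \in \states'} \transMat(\state, \action, \state') (\vect{b}'(\state') - \vect{b}'(\exit)) \leq \sum_i \vect{z}(i) (\vect{r}_i(\state, \action) - \vect{r}_{\min}(i))$, whose right-hand side is nonnegative but whose left-hand side can be positive when $\vect{b}'(\state') < \vect{b}'(\exit)$. To resolve this I would first normalize the subsystem certificate by replacing each $\vect{b}'(\state)$ for $\state \in \states'$ by $\max(\vect{b}'(\state), \vect{b}'(\exit))$ and verifying that subsystem conditions (a)--(e) are preserved (using (a) and the fact that the additional slack on the right-hand side of (b) is nonnegative). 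This normalization ensures $\vect{b}'(\state') \geq \vect{b}'(\exit)$ for all $\state' \in \states'$, after which $\vect{b}(t) = \vect{b}'(\exit)$ satisfies (b) at states outside $\states'$. If the direct normalization fails, the fallback is to invoke Farkas' lemma on the residual linear system for $\vect{b}|_{\states \setminus \states'}$ (with $\vect{g}, \vect{z}, \vect{b}|_{\states'}$ fixed) and derive feasibility from the subsystem conditions together with the forced identity $\vect{g}'(\exit) = \sum_i \vect{z}(i) \vect{r}_{\min}(i)$.
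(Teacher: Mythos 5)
Parts (i) and (ii)$\Leftarrow$ of your plan are essentially the same as the paper's: for (i) you extend $\vect{x},\vect{y}$ by zero and absorb the leaked transient inflow into $\vect{z}(t)$ for $t\notin\states'$ exactly as in the paper's monotonicity lemma, and your observation that $\vect{x}'(\state,\action)>0$ forces $\transMat'(\state,\action,\exit)=0$ (flow conservation at $\state\notin\states'$ kills any recurrent flow leaking out of $\states'$) is the right technical point; for (ii)$\Leftarrow$ your $\vect{g}'(\exit)=\sum_i\vect{z}(i)\vect{r}_{\min}(i)$ and $\vect{b}'(\exit)=\max_{t\notin\states'}\vect{b}(t)$ match the paper up to an immaterial choice of the $\max$.

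The gap is in (ii)$\Rightarrow$, extending a subsystem certificate to an $\mdp$-certificate, and both of your proposed resolutions fail. The normalization $\tilde{\vect{b}}'(\state)=\max(\vect{b}'(\state),\vect{b}'(\exit))$ is \emph{not} sound: the bias inequality at $(\state,\action)$ uses the slack $\vect{b}'(\state') - \vect{b}'(\state)$, and when $\vect{g}'(\state)>\sum_i\vect{z}(i)\vect{r}_i(\state,\action)$ this slack is negative and essential. Concretely take $\states'=\{s_1,s_2\}$, $s_1\to s_2$ with reward $0$, $s_2\to s_2$ with reward $1$, $\vect{r}_{\min}=0$, $\vect{z}=1$; a valid subsystem certificate is $\vect{g}'(s_1)=\vect{g}'(s_2)=1$, $\vect{b}'(s_2)=\vect{b}'(\exit)=0$, $\vect{b}'(s_1)=-1$. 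After normalization $\tilde{\vect{b}}'(s_1)=0$ and the bias condition at $s_1$ reads $1+0\leq 0+0$, which is false.

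Your fallback of applying Farkas to the residual system with $\vect{b}|_{\states'}$ \emph{held fixed} also breaks: that residual LP can be genuinely infeasible even though the full problem is solvable. Take $\states=\{s_1,s_2,t\}$, $\states'=\{s_1,s_2\}$, transitions $s_1\to t$, $t\to s_2$, $s_2\to s_2$, all rewards $0$. In $\mdp_{\states'}$ the state $s_1$ goes only to $\exit$ and $s_2$ self-loops, so there is no coupling between $\vect{b}'(s_1)$ and $\vect{b}'(s_2)$ and one may take $\vect{b}'(s_1)=\vect{b}'(\exit)=0$, $\vect{b}'(s_2)=-100$. Fixing these and solving for $\vect{b}(t)$, the bias constraint at $s_1$ gives $\vect{b}(t)\geq 0$ while the one at $t$ gives $\vect{b}(t)\leq \vect{b}(s_2)=-100$: infeasible. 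The paper avoids this by discarding $\vect{b}'$ entirely: it keeps only $\vect{g}$ and $\vect{z}$, writes the feasibility of \emph{some} $\vect{b}\in\reals^{\states}$ as an LP, applies Farkas' lemma to obtain a hypothetical nonnegative recurrent flow $\vect{x}$ with $\sum\vect{c}(\state,\action)\vect{x}(\state,\action)<\sum\vect{g}(\state)\vect{x}(\state,\action)$, decomposes $\vect{x}$ over MECs, and contradicts the optimality characterization $\vect{g}(\state)\leq\inf_\scheduler\expectation[\mdp,\state]{\scheduler}{\lrSup{\vect{c}}}$ (which holds on $\states'$ by the subsystem certificate and trivially on $\states\setminus\states'$ where $\vect{g}$ equals the minimal reward). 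This global re-solve of $\vect{b}$ together with the MEC/optimality argument is the key idea your plan is missing.
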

To find minimal witnessing subsystems for \existsCQ-queries, we need to set as many entries of $\vect{x}$ and $\vect{y}$ to zero as possible, effectively redirecting the flow to $\exit$. For \universalDQ-queries we strive to set as many entries $\vect{g}(\state)$ to the minimal possible reward as possible, indicating that transitions to such states can be safely redirected to $\exit$. The corresponding MILPs for minimizing the support are similar to the ones for reachability (\Cref{fig:reachability-milps}) and are described in \Cref{appendix:big-m}.

\section{Experiments}
\label{section:experiments}
\noindent\textbf{Setup.}
We have implemented the computation of certificates and witnesses for multi-objective queries in a prototypical Python tool, using \textsc{Storm}'s Python interface \cite{hensel_probabilistic_2022} for model parsing and MEC quotienting and Gurobi \cite{gurobi_optimization_llc_gurobi_2023} for solving (MI)LPs. Our tool exports certificates as \emph{JSON} files, witnessing subsystems in \textsc{Storm}'s explicit format and schedulers as \emph{DOT} file. For \Reach-queries, witnessing subsystems can also be exported as \textsc{Prism} programs. Our experiments have been run on a machine with an AMD Ryzen 5 3600 CPU (3.6 GHz) and 16 GB RAM. The time limit of Gurobi has been set to $5$ minutes. All time measurements are given in seconds and correspond to wall clock times.

We consider the consensus (\textsf{coin}) and firewire (\textsf{fire}) models from the Prism benchmark \cite{kwiatkowsa_prism_2012}, describing a shared coin and network protocol, respectively. Further, the zeroconf model (\textsf{zero}) from \cite{forejt_pareto_2012, kwiatkowska_compositional_2013} describes the configuration of IP addresses under certain environment assumptions. Additionally, a client-server mutex protocol from \cite{komuravelli_assume-guarantee_2012, brazdil_multigain_2015} (\textsf{csn}), a dining philosophers model from \cite{duflot_randomized_2004} (\textsf{phil}) and a model describing a network of sensors communicating over a lossy channel \cite{komuravelli_assume-guarantee_2012, kretinsky_efficient_2017} (\textsf{sen}). We compute certificates for queries considered for the mentioned models in \cite{quatmann_multi-objective_2021, brazdil_multigain_2015, kretinsky_efficient_2017, forejt_pareto_2012, kwiatkowska_compositional_2013}, compute schedulers as described in \Cref{subsection:transfer-witnesses} for \existsCQ-reachability queries and compute witnessing subsystems via MILPs. We are unaware of other tools for computing \emph{witnessing subsystems for multi-objective queries} and verified all queries with \textsc{Storm} \cite{hensel_probabilistic_2022}.

\begin{table}
\centering
\caption{Summary of the results.}
\setlength{\tabcolsep}{2pt}
\def\arraystretch{1}%
\tiny{
\begin{tabular}{lll|ccc|c|c|c|ccccc|cc}
 &  &  &  &  &  & \textsf{Build} & \textsf{Cert} & \textsf{Sched} & \multicolumn{5}{c|}{\textsf{WS}} & \multicolumn{2}{c}{\textsf{Size} \%}  \\
Model & $\card{\states}$ & $\card{\SA}$ & Type & $k$ & \# & mean & mean & mean & mean & min & max & std & TOs & min & max \\
\hline
\multirow[c]{2}{*}{\textsf{coin3}} & \multirow[c]{2}{*}{400} & \multirow[c]{2}{*}{592} & $(\exists,\land)$ & 2 & 5 & 0.261 & 0.025 & 0.042 &\multicolumn{4}{c}{\textbf{TO}} & 5 & 10.500 & 12.750 \\
 &  &  & $(\forall,\lor)$ & 2 & 5 & 0.202 & 0.026 & - & 2.265 & 1.899 & 2.607 & 0.268 & 0 & 26.500 & 44.250 \\
\multirow[c]{2}{*}{\textsf{coin4}} & \multirow[c]{2}{*}{528} & \multirow[c]{2}{*}{784} & $(\exists,\land)$ & 2 & 5 & 0.342 & 0.025 & 0.055 & \multicolumn{4}{c}{\textbf{TO}} & 5 & 11.364 & 12.879 \\
 &  &  & $(\forall,\lor)$ & 2 & 5 & 0.268 & 0.028 & - & 11.508 & 4.066 & 22.138 & 5.901 & 0 & 28.977 & 44.129 \\
\multirow[c]{2}{*}{\textsf{coin5}} & \multirow[c]{2}{*}{656} & \multirow[c]{2}{*}{976} & $(\exists,\land)$ & 2 & 5 & 0.428 & 0.028 & 0.068 &\multicolumn{4}{c}{\textbf{TO}} & 5 & 11.738 & 12.957 \\
 &  &  & $(\forall,\lor)$ & 2 & 5 & 0.331 & 0.031 & - & 18.594 & 13.366 & 29.227 & 5.564 & 0 & 28.201 & 42.835 \\
 \hline
\multirow[c]{2}{*}{\textsf{csn3}} & \multirow[c]{2}{*}{410} & \multirow[c]{2}{*}{913} & $(\exists,\land)$ & 3 & 1 & 0.249 & 0.021 & 0.101 & 5.077 & 5.077 & 5.077 & 0.000 & 0 & 22.683 & 22.683 \\
 &  &  & $(\forall,\lor)$ & 3 & 1 & 0.148 & 0.023 & - & 0.207 & 0.207 & 0.207 & 0.000 & 0 & 31.463 & 31.463 \\
\multirow[c]{2}{*}{\textsf{csn4}} & \multirow[c]{2}{*}{2115} & \multirow[c]{2}{*}{5749} & $(\exists,\land)$ & 4 & 1 & 1.548 & 0.034 & 0.161 & 70.966 & 70.966 & 70.966 & 0.000 & 0 & 4.397 & 4.397 \\
 &  &  & $(\forall,\lor)$ & 4 & 1 & 0.925 & 0.034 & - & 2.478 & 2.478 & 2.478 & 0.000 & 0 & 28.085 & 28.085 \\
\multirow[c]{2}{*}{\textsf{csn5}} & \multirow[c]{2}{*}{10610} & \multirow[c]{2}{*}{33493} & $(\exists,\land)$ & 5 & 1 & 13.775 & 0.070 & 0.550 &\multicolumn{4}{c}{\textbf{TO}} & 1 & 0.877 & 0.877 \\
 &  &  & $(\forall,\lor)$ & 5 & 1 & 8.883 & 0.051 & - & \multicolumn{4}{c}{\textbf{TO}} & 1 & 26.635 & 26.635 \\
 \hline
\multirow[c]{2}{*}{\textsf{fire3}} & \multirow[c]{2}{*}{4093} & \multirow[c]{2}{*}{5519} & $(\exists,\land)$ & 2 & 5 & 2.493 & 0.035 & 0.174 & \multicolumn{4}{c}{\textbf{TO}} & 5 & 2.174 & 14.097 \\
 &  &  & $(\forall,\lor)$ & 2 & 5 & 2.013 & 0.078 & - & 4.377 & 2.503 & 8.454 & 2.324 & 0 & 5.864 & 100.000 \\
\multirow[c]{2}{*}{\textsf{fire6}} & \multirow[c]{2}{*}{8618} & \multirow[c]{2}{*}{12948} & $(\exists,\land)$ & 2 & 5 & 6.118 & 0.045 & 0.328 & \multicolumn{4}{c}{\textbf{TO}} & 5 & 1.033 & 6.695 \\
 &  &  & $(\forall,\lor)$ & 2 & 5 & 4.541 & 0.220 & - & 29.480 & 5.098 & 96.928 & 34.970 & 0 & 3.272 & 100.000 \\
\multirow[c]{2}{*}{\textsf{fire9}} & \multirow[c]{2}{*}{14727} & \multirow[c]{2}{*}{24229} & $(\exists,\land)$ & 2 & 5 & 12.762 & 0.076 & 0.573 & \multicolumn{4}{c}{\textbf{TO}} & 5 & 0.604 & 3.918 \\
 &  &  & $(\forall,\lor)$ & 2 & 5 & 8.333 & 0.515 & - & 12.655 & 8.936 & 18.463 & 4.160 & 2 & 2.200 & 100.000 \\
 \hline
\textsf{zero2} & 3221 & 9319 & $(\forall,\lor)$ & 2 & 5 & 5.713 & 0.029 & - & 1.884 & 1.871 & 1.902 & 0.013 & 0 & 0.528 & 0.528 \\
\textsf{zero4} & 7259 & 21970 & $(\forall,\lor)$ & 2 & 5 & 25.749 & 0.039 & - & 4.141 & 4.110 & 4.174 & 0.027 & 0 & 0.317 & 0.317 \\
\textsf{zero6} & 12881 & 37891 & $(\forall,\lor)$ & 2 & 5 & 72.284 & 0.053 & - & 7.387 & 7.346 & 7.413 & 0.024 & 0 & 0.225 & 0.225 \\
 \hline\hline
\multirow[c]{2}{*}{\textsf{csn3}} & \multirow[c]{2}{*}{184} & \multirow[c]{2}{*}{439} & $(\exists,\land)$ & 3 & 2 & 0.340 & 0.017 & - & 0.287 & 0.020 & 0.554 & 0.267 & 0 & 5.978 & 5.978 \\
 &  &  & $(\forall,\lor)$ & 3 & 2 & 0.132 & 0.013 & - & 0.039 & 0.021 & 0.056 & 0.017 & 0 & 95.652 & 95.652 \\
\multirow[c]{2}{*}{\textsf{csn4}} & \multirow[c]{2}{*}{960} & \multirow[c]{2}{*}{2785} & $(\exists,\land)$ & 4 & 2 & 2.218 & 0.098 & - & 0.877 & 0.877 & 0.877 & 0.000 & 1 & 1.562 & 1.562 \\
 &  &  & $(\forall,\lor)$ & 4 & 2 & 0.804 & 0.098 & - & 2.727 & 0.336 & 5.119 & 2.392 & 0 & 91.667 & 91.667 \\
\multirow[c]{2}{*}{\textsf{csn5}} & \multirow[c]{2}{*}{4864} & \multirow[c]{2}{*}{16321} & $(\exists,\land)$ & 5 & 2 & 15.086 & 1.870 & - & 0.504 & 0.504 & 0.504 & 0.000 & 1 & 0.781 & 0.781 \\
 &  &  & $(\forall,\lor)$ & 5 & 2 & 5.019 & 1.924 & - & 7.832 & 7.832 & 7.832 & 0.000 & 1 & 89.309 & 89.309 \\
 \hline
\multirow[c]{2}{*}{\textsf{phil3}} & \multirow[c]{2}{*}{956} & \multirow[c]{2}{*}{2694} & $(\exists,\land)$ & 2 & 3 & 1.943 & 0.047 & - & 89.614 & 3.165 & 176.063 & 86.449 & 1 & 0.941 & 2.197 \\
 &  &  & $(\forall,\lor)$ & 2 & 3 & 0.642 & 0.874 & - & 124.765 & 96.783 & 139.185 & 19.789 & 0 & 100.000 & 100.000 \\
\multirow[c]{2}{*}{\textsf{phil4}} & \multirow[c]{2}{*}{9440} & \multirow[c]{2}{*}{35464} & $(\exists,\land)$ & 2 & 3 & 33.206 & 0.609 & - & \multicolumn{4}{c}{\textbf{TO}} & 3 & 0.911 & 0.911 \\
 &  &  & $(\forall,\lor)$ & 2 & 3 & 6.215 & 1.276 & - & \multicolumn{4}{c}{\textbf{TO}} & 3 & 100.000 & 100.000 \\
 \hline
\multirow[c]{2}{*}{\textsf{sen1}} & \multirow[c]{2}{*}{462} & \multirow[c]{2}{*}{1079} & $(\exists,\land)$ & 3 & 1 & 0.770 & 0.043 & - & \multicolumn{4}{c}{\textbf{TO}} & 1 & 7.576 & 7.576 \\
 &  &  & $(\forall,\lor)$ & 3 & 1 & 0.287 & 0.034 & - & 0.332 & 0.332 & 0.332 & 0.000 & 0 & 97.835 & 97.835 \\
\multirow[c]{2}{*}{\textsf{sen2}} & \multirow[c]{2}{*}{7860} & \multirow[c]{2}{*}{24584} & $(\exists,\land)$ & 3 & 1 & 21.955 & 0.609 & - & \multicolumn{4}{c}{\textbf{TO}} & 1 & 1.081 & 1.081 \\
 &  &  & $(\forall,\lor)$ & 3 & 1 & 5.765 & 0.900 & - & \multicolumn{4}{c}{\textbf{TO}} & 1 & 97.786 & 97.786
\end{tabular}
}
\label{table:summary-results}
\end{table}

\medskip

\noindent\textbf{Results.} The results are summarized in
\Cref{table:summary-results}, where the upper part is concerned with
\ReachInv- and the lower with \MP-queries. We now describe the columns. The column $k$ is the number of
predicates and \# the number of different bounds $\boldLambda$ we considered. \textsf{Build} shows the time for building the LP (for \ReachInv-queries this includes the time for construction of product MDP and MEC quotient), \textsf{Cert} the LP solving time and \textsf{Sched} the
time for computing schedulers from the certificates. The times for
computing witnessing subsystems is shown in the column \textsf{WS}. We provide mean, min, max and standard deviation because the times vary strongly. The number of
timeouts is shown in \textsf{TOs} and $\textbf{TO}$ means that
all computations timed out. In case of a timeout, the best subsystem 
found so far is returned. \textsf{Size} is the number
of states in the subsystem relative to the original MDP (in percentage).

\smallskip

\noindent\emph{Cost of certifying algorithm.} Following an LP-based approach \cite{forejt_quantitative_2011} for verifying a given multi-objective query $\query$, the simple certifying algorithm from \Cref{subsection:farkas-certificates}, needs to solve an LP for both $\query$ and $\neg\query$ in the worst case. 
Thus, the total costs of a certifying algorithm arises from solving two LPs instead of a single one. Our experiments (detailed in \Cref{appendix:experiments}) indicate that the solving time for the LP for $\query$ and $\neg\query$ are comparable. Thus, \textsf{Cert} can be interpreted as the overhead of such certifying algorithm. We observe that solving the LPs is relatively fast and that model building is currently the bottleneck in our prototypical implementation. We plan on providing a more efficient and competitive implementation in future work.  
\textsc{Storm} verified most queries in less than $0.1$ seconds. We refer to \Cref{appendix:experiments} for details and note that the verification algorithm as implemented in \textsc{Storm} \cite{forejt_pareto_2012, quatmann_multi-objective_2021} is based on value-iteration and \emph{non-certifying} (c.f.\ \Cref{remark:pareto-connection}).

\smallskip

\noindent\emph{Witnesses.}
Schedulers can quickly be computed from the certificates. For models where the quotient is smaller than the product MDP, e.g.\ \textsf{csn}, our techniques can be
useful. As for single-objectives \cite{funke_farkas_2020,
jantsch_certificates_2022}, finding small witnessing subsystems is challenging, particularly for $\existsCQ$-queries. For \universalDQ-queries, we often find subsystems in a reasonable amount of time. The number of states in the subsystem heavily depends on the bounds, query type and model. The subsystems for \existsCQ-queries can be substantially smaller than the original MDP, e.g.\ for \textsf{fire9} even $0.61\%$ the size of the original MDP, but can vary strongly for \universalDQ-queries, e.g. for \textsf{fire9} between $2\%$ to $100\%$. We refer to \Cref{appendix:experiments} for plots and details. 

\smallskip

\noindent
Our implementation, experiments and results are made available 
on Zenodo~\cite{baier_certificates_2024}.

\section{Conclusion}
We have given an explicit presentation of certificates for multi-objective queries and their relation to schedulers and witnessing subsystems, thereby generalizing \cite{funke_farkas_2020, jantsch_certificates_2022}. Our prototypical tool implements the presented techniques and has been applied on several case studies. In future work, we want to provide tool support for computing certificates more efficiently and address certificates and witnesses for richer modeling formalisms.

\printbibliography

\newpage
\appendix
\section{Proofs for \Cref{section:farkas-and-witnesses}}
\label{appendix:farkas}

\subsection{Proofs for \Cref{subsection:farkas-certificates}}
\label{subsection:appendix-proof-farkas}
\subsubsection{Farkas Certificates} In this section we provide proofs for results in \Cref{subsection:farkas-certificates} and also certificates for \existsDQ-queries and \universalCQ-queries, which work via simple reduction to the single-objective case \cite{funke_farkas_2020, jantsch_certificates_2022}. The following lemma summarizes the results from Theorem 3.2 in \cite{etessami_multi-objective_2008} in our setting and notation.
\begin{lemma}
\label{lemma:exist-CP-eq}
Let $\mdp = \mdpRF$ be an MDP in reachability form and let $\prop{\gtrsim}{\scheduler}(\boldLambda)$ be a conjunctive property. Then we have:
\begin{enumerate}[label={(\roman*)}, itemindent=1em]
\item $\exists \scheduler \in \schedulers \centerdot \prop{\gtrsim}{\scheduler}(\boldLambda) \iff
\exists \vect{y} \in \realsnn^\SA \centerdot \SM^\top \vect{y} = \initDistr \land \TM^\top \vect{y} \gtrsim \boldLambda$ \label{lemma:exist-CP-eq-lb-eq}
\end{enumerate}
and if $\mdp$ is EC-free we also have:
\begin{enumerate}[label={(\roman*)}, itemindent=1em]
\setcounter{enumi}{1}
\item $\exists \scheduler \in \schedulers \centerdot \prop{\gtrsim}{\scheduler}(\boldLambda) \iff
\exists \vect{y} \in \realsnn^\SA \centerdot \SM^\top \vect{y} = \initDistr \land \TM^\top \vect{y} \lesssim \boldLambda$ \label{lemma:exist-CP-eq-ub-eq}
\end{enumerate}
\end{lemma}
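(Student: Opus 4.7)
The plan is to strengthen the inequality characterization of Lemma~\ref{lemma:exist-CP} to an equality characterization by exploiting the interpretation of $\vect{y}$ as the vector of expected state-action frequencies of a witnessing scheduler. The direction $(\Leftarrow)$ in both (i) and (ii) is immediate: since $\SM^\top \vect{y} = \initDistr$ implies both $\SM^\top \vect{y} \leq \initDistr$ and $\SM^\top \vect{y} \geq \initDistr$, I can directly invoke the corresponding item of Lemma~\ref{lemma:exist-CP}.

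For the direction $(\Rightarrow)$, I would proceed in three steps. First, I would reduce to a scheduler $\scheduler'$ that reaches $\targets$ almost surely. For part~(ii) this is automatic under the EC-free assumption, since then every scheduler exits $\states$ a.s. For part~(i) I would argue that any scheduler $\scheduler$ achieving the lower bounds $\prob^\scheduler(\eventually \targetSet_i) \geq \lambda_i$ can be modified inside each non-target end component so as to eventually leave it toward $\targets$; such a modification only increases reachability probabilities and therefore preserves the $\gtrsim$-bounds. Second, I would set $\vect{y}(\state, \action) \coloneqq \freq_\mdp^{\scheduler'}(\state, \action)$, which is finite because $\scheduler'$ exits $\states$ almost surely. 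Third, the standard flow-conservation identity
\[
\sum_{\action \in \actions(\state)} \freq_\mdp^{\scheduler'}(\state, \action) = \initDistr(\state) + \sum_{(\state', \action') \in \SA} \transMat(\state', \action', \state)\, \freq_\mdp^{\scheduler'}(\state', \action')
\]
for every $\state \in \states$ is precisely $(\SM^\top \vect{y})(\state) = \initDistr(\state)$, and since the states of $\targetSet_i \subseteq \targets$ are absorbing, $(\TM^\top \vect{y})(i)$ coincides with the expected number of transitions into $\targetSet_i$ under $\scheduler'$, which in turn equals $\prob^{\scheduler'}(\eventually \targetSet_i)$. This yields $(\TM^\top \vect{y})(i) \gtrsim \lambda_i$ in case (i) and $(\TM^\top \vect{y})(i) \lesssim \lambda_i$ in case (ii).

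The main obstacle is the almost-sure reachability reduction in case (i): without the EC-free assumption, the original scheduler may exhibit infinite expected frequencies on some non-target end component, so the rerouting inside ECs has to be done carefully in order to simultaneously preserve all $k$ lower bounds. This is the content of the classical construction underlying \cite[Theorem~3.2]{etessami_multi-objective_2008} (see also \cite[Lemma~3.17]{jantsch_certificates_2022}), on which my argument would rely; no new technical idea beyond it is required, but it is the only step that is not a routine reformulation of Lemma~\ref{lemma:exist-CP}.
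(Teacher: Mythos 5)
Your forward direction is a reasonable reconstruction of the argument behind \cite[Theorem~3.2]{etessami_multi-objective_2008}, which is exactly what the paper's one-line proof cites, and you correctly flag the end-component rerouting as the only non-routine step. (A small remark: in item~(ii) you tacitly argue with upper-bounded predicates on the left, i.e.\ $\prop{\lesssim}{\scheduler}(\boldLambda)$; this is the correct reading --- the $\gtrsim$ in the statement of~(ii) is a typo, as one sees by comparing with Lemma~\ref{lemma:exist-CP}~\ref{lemma:exist-CP-ub} and with the paper's phrase ``change the lower bounds \ldots\ to upper bounds''.)

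The backward direction, however, is circular as you have set it up, and this reflects a reversed view of which lemma is primitive. Your plan is to \emph{strengthen} the inequality characterization of Lemma~\ref{lemma:exist-CP} to the equality version, so for $(\Leftarrow)$ you weaken $\SM^\top\vect{y}=\initDistr$ to the corresponding inequality and invoke Lemma~\ref{lemma:exist-CP}. But the paper's derivation runs the other way: in the appendix, item~\ref{lemma:exist-CP-ub} of Lemma~\ref{lemma:exist-CP} is obtained by applying \cite[Lemma~3.17]{jantsch_certificates_2022} to Lemma~\ref{lemma:exist-CP-eq}~\ref{lemma:exist-CP-eq-ub-eq}, and item~\ref{lemma:exist-CP-lb} is obtained by converting an inequality certificate into an equality certificate (via the fresh $\tau$-action construction) and then invoking Lemma~\ref{lemma:exist-CP-eq}~\ref{lemma:exist-CP-eq-lb-eq}. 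Lemma~\ref{lemma:exist-CP-eq} is therefore the more primitive statement, and proving it from Lemma~\ref{lemma:exist-CP} begs the question. To close the gap, either argue $(\Leftarrow)$ directly from $\vect{y}$ --- an equality-system solution induces the memoryless scheduler $\scheduler(\state)(\action)=\vect{y}(\state,\action)/\sum_{\action'\in\actions(\state)}\vect{y}(\state,\action')$ on the support of $\vect{y}$, for which one shows $\prob^\scheduler(\eventually\targetSet_i)=(\TM^\top\vect{y})(i)$ and reads off the bounds --- or, matching the spirit of the paper's own proof, simply cite \cite[Theorem~3.2]{etessami_multi-objective_2008} for the full equivalence, since it is stated there as an iff.
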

\begin{proof}
\ref{lemma:exist-CP-eq-lb-eq} follows from Theorem 3.2 in \cite{etessami_multi-objective_2008}. For EC-free MDPs, we can directly change the lower bounds in the proof of Theorem 3.2 to upper bounds. The reason is that we know that in an EC-free MDP the absorbing states are reached almost surely. Thus, we get \ref{lemma:exist-CP-eq-ub-eq}.
\qed
\end{proof}

\certificatesExistsCq*
\begin{proof}
To prove \ref{lemma:exist-CP-ub}, we simply apply Lemma 3.17 from \cite{jantsch_certificates_2022} to \Cref{lemma:exist-CP-eq} \ref{lemma:exist-CP-eq-ub-eq}.
Because Lemma 3.17 in \cite{jantsch_certificates_2022} relies on EC-freeness, we cannot do the same for \ref{lemma:exist-CP-lb}. We show that if there exists a $\vect{y} \in \realsnn^\SA$ that satisfies $\SM^\top \vect{y} \leq \initDistr \land \TM^\top \vect{y} \gtrsim \boldLambda$, then there also exists $\vect{y}' \in \realsnn^{\SA}$ such that $\SM^\top \vect{y}' = \initDistr \land \TM^\top \vect{y}' \gtrsim \boldLambda$.

Consider the MDP $\mdp'$ resulting from $\mdp$ by adding a fresh action $\tau$ to each state, that moves to $\exit$ with probability $1$. Clearly, if a \existsCQ-query $\query$ with lower bounds is satisfied in $\mdp$, then also in $\mdp'$ and vice versa. The reason is that the set of paths that reach the targets in $\mdp$ and $\mdp'$ is equivalent. Let $\SM'$ and $\TM'$ be defined as follows:
\[
\SM' = 
\begin{pmatrix}
\SM \\
\vect{I}
\end{pmatrix}
\qquad
\TM' = 
\begin{pmatrix}
\TM \\
\vect{I} \cdot 0
\end{pmatrix}
\]
where $\vect{I}$ is the identity matrix $\vect{I} \in \{0, 1\}^{\states \times \states}$. Now suppose we have $\vect{y} \in \realsnn^\SA$ that satisfies $\SM^\top \vect{y} \leq \initDistr \land \TM^\top \vect{y} \gtrsim \boldLambda$. Then there exists $\vect{z} \in \realsnn^\states$ such that $\SM^\top \vect{y} + \vect{z} = \initDistr$. In particular, we have
\[
(\SM')^\top
\begin{pmatrix}
\vect{y} \\
\vect{z}
\end{pmatrix}
= \initDistr
\qquad
(\TM')^\top
\begin{pmatrix}
\vect{y} \\
\vect{z}
\end{pmatrix}
\gtrsim \boldLambda
\]

From \Cref{lemma:exist-CP-eq} \ref{lemma:exist-CP-eq-lb-eq}, we then know that $(\vect{y}, \vect{z})$ is a certificate for the satisfaction of the \existsCQ-query $\query = \exists \scheduler' \in \schedulers^{\mdp'} \centerdot \prop{\gtrsim}{\scheduler}(\boldLambda)$ in $\mdp'$. As described above, it is directly clear that $\query$ is also satisfied in $\mdp$ and applying \Cref{lemma:exist-CP-eq} \ref{lemma:exist-CP-eq-lb-eq} again, yields the existence of a desired $\vect{y}'$ with $\SM^\top \vect{y}' = \initDistr \land \TM^\top \vect{y}' \gtrsim \boldLambda$. This concludes the proofs.
\qed
\end{proof}

\begin{remark}
Let $\vect{y}(\state)$ for all states $\state \in \states$ be defined as follows:
\[
\vect{y}(\state) = \sum_{\action' \in \actions(\state)} \vect{y}(\state, \action')
\]
From Theorem 3.2 in \cite{etessami_multi-objective_2008} we know that a corresponding memoryless scheduler $\scheduler' \in \mSchedulers^{\mdp'}$ that satisfies $\prop{\gtrsim}{\scheduler'}(\boldLambda)$ can be constructed by setting
\[
\scheduler'(\state)(\action) = \frac{\vect{y}(\state, \action)}{\vect{y}(\state) + \vect{z}(\state)}
\qquad
\scheduler'(\state)(\tau) = \frac{\vect{z}(\state)}{\vect{y}(\state) + \vect{z}(\state)}
\]
for all $(\state, \action) \in \SA$ with $\vect{y}(\state) + \vect{z}(\state) > 0$. For the other states, we define $\scheduler$ to play any available action except $\tau$. Observe that once the $\tau$ action is played, the probability to reach any target is $0$. Thus, we cannot decrease the probability of reaching the target states, if we redistribute the probability of playing the $\tau$ action. Consider the scheduler $\scheduler \in \schedulers^\mdp$ with
\[
\scheduler(\state)(\action) = \scheduler'(\state)(\action) + \frac{\vect{y}(\state, \action)}{\vect{y}(\state)} \cdot \scheduler'(\state)(\tau)
\]
for all $(\state, \action) \in \SA$ with $\vect{y}(\state) > 0$. Due to the observation that not playing the $\tau$ action cannot decrease the probability of reaching the targets, $\scheduler$ also satisfies the query and can also be used as scheduler for $\mdp$ as it does not play the $\tau$ action. Further, we have:
\begin{align*}
\scheduler(\state)(\action) &= \scheduler'(\state)(\action) + \frac{\vect{y}(\state, \action)}{\vect{y}(\state)} \cdot \scheduler'(\state)(\tau) \\
&= \frac{\vect{y}(\state, \action)}{\vect{y}(\state) + \vect{z}(\state)} + \frac{\vect{y}(\state, \action)}{\vect{y}(\state)} \cdot (1 - \frac{\vect{y}(\state)}{\vect{y}(\state) + \vect{z}(\state)})\\
&= \frac{\vect{y}(\state, \action)}{\vect{y}(\state) + \vect{z}(\state)} + \frac{\vect{y}(\state, \action)}{\vect{y}(\state)} - \frac{\vect{y}(\state, \action)}{\vect{y}(\state) + \vect{z}(\state)})\\
&= \frac{\vect{y}(\state, \action)}{\vect{y}(\state)}
\end{align*}
for all states $\state \in \states$ with $\vect{y}(\state) > 0$. This detour shows us that we can directly use a certificate $\vect{y}$ that satisfies $\SM^\top \vect{y} \leq \initDistr \land \TM^\top \vect{y} \gtrsim \boldLambda$ to construct a scheduler $\scheduler$ for $\mdp$.
\end{remark}

Let us now discuss the proof of \Cref{lemma:farkas-universal-DQs}. To this end, we show \Cref{lemma:universal-dq-strict-bounds} and \Cref{lemma:universal-dq-non-strict-bounds} first. \Cref{lemma:farkas-universal-DQs} then follows from these lemmas.

\begin{lemma}[\universalDQ-queries with strict bounds]
\label{lemma:universal-dq-strict-bounds}
Let $\mdp = \mdpRF$ be an MDP in reachability form (possibly with ECs) and $\prop{\bowtie}{\scheduler}(\boldLambda)$ a disjunctive property. Then we have:
\begin{enumerate}[label={(\roman*)}, itemindent=1em]
\item $\forall \scheduler \in \schedulers \centerdot \prop{<}{\scheduler}(\boldLambda) \iff \exists \vect{x} \in \reals^\states \centerdot \exists \vect{z} \in \realsnn^{[k]} \centerdot \SM \vect{x} \geq \TM \vect{z} \land \vect{x}(\init) < \pmb{\lambda}^\top \vect{z}$
\end{enumerate}
and if $\mdp$ is EC-free we also have:
\begin{enumerate}[label={(\roman*)}, itemindent=1em]
\setcounter{enumi}{1}
\item $\forall \scheduler \in \schedulers \centerdot \prop{>}{\scheduler}(\boldLambda) \iff \exists \vect{x} \in \reals^\states \centerdot \exists \vect{z} \in \realsnn^{[k]} \centerdot \SM \vect{x} \leq \TM \vect{z} \land \vect{x}(\init) > \pmb{\lambda}^\top \vect{z}$
\end{enumerate}
\end{lemma}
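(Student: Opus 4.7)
The plan is to derive the two equivalences as a direct application of Farkas' lemma (\Cref{lemma:farkas}) to the certificates for existentially quantified conjunctive queries given by \Cref{lemma:exist-CP}, in the spirit of the proof sketch already provided for \Cref{lemma:farkas-universal-DQs}. The key observation that makes Farkas' lemma applicable is the propositional duality
\[
\forall \scheduler \in \schedulers \centerdot \prop{<}{\scheduler}(\boldLambda) \ \iff\ \neg\, \exists \scheduler \in \schedulers \centerdot \propAlt{\geq}{\scheduler}(\boldLambda),
\]
where $\propAlt{}{\scheduler}$ denotes the conjunctive counterpart of $\prop{}{\scheduler}$ obtained by complementing each disjunct, and the symmetric duality $\forall \scheduler \centerdot \prop{>}{\scheduler}(\boldLambda) \iff \neg\, \exists \scheduler \centerdot \propAlt{\leq}{\scheduler}(\boldLambda)$ for part (ii). Both dualities convert a universal disjunctive query into the negation of an existential conjunctive query, to which the characterizations of \Cref{lemma:exist-CP} apply.

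For part (i), I would invoke \Cref{lemma:exist-CP}\ref{lemma:exist-CP-lb} to rewrite the right-hand side of the duality as the infeasibility, over $\vect{y} \in \realsnn^\SA$, of the primal system $\SM^\top \vect{y} \leq \initDistr$ together with $-\TM^\top \vect{y} \leq -\boldLambda$. Packaging the two rows into a single standard-form inequality $A \vect{y} \leq \vect{b}$, I would apply \Cref{lemma:farkas}\ref{lemma:farkas-1}, which characterises infeasibility by the existence of a non-negative dual witness $(\vect{x}, \vect{z})$ with $A^\top (\vect{x}, \vect{z})^\top \geq \vect{0}$ and $\vect{b}^\top (\vect{x}, \vect{z})^\top < 0$. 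Unpacking the two products recovers exactly $\SM \vect{x} - \TM \vect{z} \geq \vect{0}$ and $\vect{x}(\init) - \boldLambda^\top \vect{z} < 0$, the latter using that $\initDistr$ is Dirac in $\init$. Part (ii) follows by the symmetric application of \Cref{lemma:exist-CP}\ref{lemma:exist-CP-ub}, inheriting the EC-free hypothesis from there; flipping the directions of the primal inequalities produces the dual inequalities $\SM \vect{x} \leq \TM \vect{z}$ and $\vect{x}(\init) > \boldLambda^\top \vect{z}$.

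The main obstacle I anticipate is a bookkeeping subtlety in the passage from Farkas' dual to the asserted shape of the certificate: \Cref{lemma:farkas}\ref{lemma:farkas-1} natively yields $\vect{x}$ as a non-negative vector, whereas the statement allows $\vect{x} \in \reals^\states$. The inclusion $\realsnn^\states \subseteq \reals^\states$ gives one direction immediately. For the converse, I would exploit the identity $\SM \vect{1} = \vect{0}$ (which follows from row-stochasticity of $\transMat$) to translate $\vect{x}$ coordinate-wise by a constant vector, thereby preserving the super-solution inequality $\SM \vect{x} \geq \TM \vect{z}$; the strict inequality on $\vect{x}(\init)$ then provides the slack to absorb the translation without breaking it. Degenerate instances (for example, when $\vect{z} = \vect{0}$) would need to be addressed separately, and this is the step where I expect the argument to require the most care.
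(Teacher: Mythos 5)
Your overall plan—dualize the universal disjunctive query to the negation of an existential conjunctive query, invoke the certificate characterization for the latter, and then apply Farkas' lemma—is the right one and matches the paper's strategy. However, the specific route you choose differs at a key point, and that difference creates a gap your bridge argument does not close.

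You apply Farkas to the inequality characterization of \Cref{lemma:exist-CP}\ref{lemma:exist-CP-lb}, namely $\SM^\top \vect{y} \leq \initDistr \land \TM^\top \vect{y} \geq \boldLambda$. Packaging these as $A\vect{y} \leq \vect{b}$ with $A = \begin{pmatrix}\SM^\top \\ -\TM^\top\end{pmatrix}$ and applying \Cref{lemma:farkas}\ref{lemma:farkas-1} gives a dual witness $(\vect{x},\vect{z})$ with $\vect{x} \in \realsnn^\states$, not $\vect{x} \in \reals^\states$. You then propose to bridge the discrepancy by translating $\vect{x}$ by a multiple of $\vect{1}$, invoking the identity $\SM \vect{1} = \vect{0}$. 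That identity is false for MDPs in reachability form: since $\transMat(\state,\action)$ is a distribution over $\states \cup \targets$ but $\SM$ indexes only $\states$, one has $(\SM\vect{1})(\state,\action) = 1 - \sum_{\state'\in\states}\transMat(\state,\action,\state') = \sum_{\state'\in\targets}\transMat(\state,\action,\state')$, the one-step exit probability to the targets, which is generally strictly positive. A shift $\vect{x} \mapsto \vect{x} + c\vect{1}$ with $c > 0$ preserves $\SM\vect{x} \geq \TM\vect{z}$, but it also increases $\vect{x}(\init)$ by $c$ and therefore can violate the constraint $\vect{x}(\init) < \boldLambda^\top\vect{z}$ that you are trying to preserve. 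Likewise, pointwise truncation $\vect{x} \mapsto \max(\vect{x},0)$ moves the wrong direction for the $\geq$-inequality and does not work here either. So the passage from the free-$\vect{x}$ form to the non-negative-$\vect{x}$ form (the direction you actually need for the ``$\Leftarrow$'' of the lemma) is not established.

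The paper avoids this entirely by starting from the \emph{equality} characterization (\Cref{lemma:exist-CP-eq}\ref{lemma:exist-CP-eq-lb-eq}), $\SM^\top\vect{y} = \initDistr \land \TM^\top\vect{y} \geq \boldLambda$. Rewriting the equality as the pair of inequalities $\SM^\top\vect{y} \leq \initDistr$ and $-\SM^\top\vect{y} \leq -\initDistr$ and applying Farkas yields two non-negative multipliers $\vect{x}_1,\vect{x}_2 \in \realsnn^\states$ for those two rows; setting $\vect{x} = \vect{x}_1 - \vect{x}_2$ produces a genuinely free vector $\vect{x}\in\reals^\states$ in both directions of the equivalence, with no auxiliary bridging lemma required. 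If you rework your argument to start from the equality version of the \existsCQ-certificates, the rest of your Farkas bookkeeping goes through essentially as you describe, for both parts (i) and (ii).
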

\begin{proof}
Let us prove the statement for DQs with strict upper bounds.
\begin{align*}
&\phantom{\iff} \forall \scheduler \in \schedulers \centerdot \Disj_{i=1}^{k} \prob_{\mdp, \init}^\scheduler(\eventually  \targetSet_i) < \lambda_i \\ 
& \iff \neg \exists \scheduler \in \schedulers \centerdot \Conj_{i=1}^{k} \prob_{\mdp, \init}^\scheduler(\eventually \targetSet_i) \geq \lambda_i \\
\overset{\text{\Cref{lemma:exist-CP-eq} \ref{lemma:exist-CP-eq-lb-eq}}}&{\iff} \neg \exists \vect{y} \in \realsnn^{\SA} \centerdot \SM^\top \vect{y} = \initDistr \land \TM^\top \vect{y} \geq \pmb{\lambda} \\
& \iff \neg \exists \vect{y} \in \realsnn^{\SA} \centerdot
\begin{pmatrix} \vect{A}^\top \\ -\vect{A}^\top \\ -\vect{T}^\top \end{pmatrix} \vect{y} \leq \begin{pmatrix}\initDistr \\ - \initDistr \\ -\pmb{\lambda} \end{pmatrix} \\
\overset{\text{\Cref{lemma:farkas} \ref{lemma:farkas-1}}}&{\iff}
 \exists \vect{x}_1, \vect{x}_{2} \in \realsnn^S \centerdot \exists \vect{z} \in \realsnn^{[k]} \centerdot \bigl(\vect{A} \ \ -\vect{A} \ \ -\vect{T} \bigl) \begin{pmatrix}  \vect{x}_1 \\ \vect{x}_2 \\ \vect{z} \end{pmatrix} \geq 0 \ \land \\
&\phantom{\iff}  \bigl(\initDistr^\top \ \ -\initDistr^\top \ \ -\pmb{\lambda}^\top \bigl) \begin{pmatrix}  \vect{x}_1 \\ \vect{x}_2 \\ \vect{z} \end{pmatrix} < 0 \\
& \iff \exists \vect{x} \in \reals^S \centerdot \exists \vect{z} \in \realsnn^{[k]} \centerdot \SM \vect{x} \geq \TM \vect{z} \land \initDistr^\top \vect{x} < \pmb{\lambda}^\top \vect{z}
\end{align*}
Observe that $\initDistr^\top \vect{x} = \vect{x}(\init)$. This completes the proof. For strict lower bounds we assume $\mdp$ to be EC-free. Then the proof is analogous and we can apply \Cref{lemma:exist-CP-eq} \ref{lemma:exist-CP-eq-ub-eq}. Note that the statement for lower bounds only holds for EC-free MDPs, because \Cref{lemma:exist-CP-eq} \ref{lemma:exist-CP-eq-lb-eq} relies on EC-freeness.
\qed
\end{proof}

\begin{lemma}[\universalDQ-queries with non-strict bounds]
\label{lemma:universal-dq-non-strict-bounds}
Let $\mdp = \mdpRF$ be an MDP in reachability form (possibly with ECs) and $\prop{\bowtie}{\scheduler}(\boldLambda)$ a disjunctive property. Then we have:
\begin{enumerate}[label={(\roman*)}, itemindent=1em]
\item $\forall \scheduler \in \schedulers \centerdot \prop{\leq}{\scheduler}(\boldLambda) \iff \exists \vect{x} \in \reals^\states \centerdot \exists \vect{z} \in \realsnn^{[k]} \setminus \{\vect{0}\}\centerdot \SM \vect{x} \geq \TM \vect{z} \land \vect{x}(\init) \leq \pmb{\lambda}^\top \vect{z}$
\end{enumerate}
and if $\mdp$ is EC-free we also have:
\begin{enumerate}[label={(\roman*)}, itemindent=1em]
\setcounter{enumi}{1}
\item $\forall \scheduler \in \schedulers \centerdot \prop{\geq}{\scheduler}(\boldLambda) \iff \exists \vect{x} \in \reals^\states \centerdot \exists \vect{z} \in \realsnn^{[k]} \setminus \{\vect{0}\}\centerdot \SM \vect{x} \leq \TM \vect{z} \land \vect{x}(\init) \geq \pmb{\lambda}^\top \vect{z}$
\end{enumerate}
\end{lemma}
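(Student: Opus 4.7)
The plan is to split the biconditional into two directions. The ``$\Leftarrow$'' direction follows by combining the claimed certificate with \Cref{lemma:exist-CP-eq}, while the ``$\Rightarrow$'' direction is obtained via strong LP duality applied to a max-min reformulation of the query, mirroring the flavour of the proof of \Cref{lemma:universal-dq-strict-bounds}.

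For ``$\Leftarrow$'' in (i), I would fix any scheduler $\scheduler$ and let $p_i = \prob_{\mdp, \init}^\scheduler(\eventually \targetSet_i)$. By \Cref{lemma:exist-CP-eq}~\ref{lemma:exist-CP-eq-lb-eq} there exists $\vect{y} \in \realsnn^\SA$ with $\SM^\top \vect{y} = \initDistr$ and $\TM^\top \vect{y} \geq (p_1, \dots, p_k)^\top$. Left-multiplying the hypothesised certificate inequality $\SM \vect{x} \geq \TM \vect{z}$ by $\vect{y}^\top$ and using $\vect{x}(\init) = \initDistr^\top \vect{x}$ yields
\[
\vect{x}(\init) \;=\; \vect{y}^\top \SM \vect{x} \;\geq\; \vect{y}^\top \TM \vect{z} \;\geq\; \sum\nolimits_{i=1}^k p_i\, \vect{z}(i),
\]
and combining with $\vect{x}(\init) \leq \boldLambda^\top \vect{z}$ gives $\sum_{i=1}^k \vect{z}(i)(p_i - \lambda_i) \leq 0$. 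Since $\vect{z} \geq \vect{0}$ and $\vect{z} \neq \vect{0}$, at least one index must satisfy $p_i \leq \lambda_i$, establishing the disjunction.

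For ``$\Rightarrow$'' in (i), I would observe that the universal disjunction is equivalent to $\max_\scheduler \min_i (p_i - \lambda_i) \leq 0$. By \Cref{lemma:exist-CP-eq}~\ref{lemma:exist-CP-eq-lb-eq} this max equals the optimum of the primal LP
\[
\max\; t \quad \text{s.t.} \quad \vect{y} \in \realsnn^\SA,\ t \in \reals,\ \SM^\top \vect{y} = \initDistr,\ \TM^\top \vect{y} - t\,\vect{1} \geq \boldLambda,
\]
which is feasible (any scheduler's frequency vector together with a sufficiently negative $t$ works) and bounded above (since $p_i \in [0,1]$). Strong LP duality then gives that its optimum coincides with the optimum of a dual that, computed by the standard recipe (equality constraint $\Rightarrow$ free dual variable $\vect{x}$; $\geq$-inequality $\Rightarrow$ non-negative dual $\vect{z}$; non-negativity of $\vect{y}$ $\Rightarrow$ constraint $\SM \vect{x} \geq \TM \vect{z}$; free primal variable $t$ $\Rightarrow$ equality $\sum_{i} \vect{z}(i) = 1$), reads
\[
\min\; \vect{x}(\init) - \boldLambda^\top \vect{z} \quad \text{s.t.}\quad \vect{x} \in \reals^\states,\ \vect{z} \in \realsnn^{[k]},\ \SM \vect{x} \geq \TM \vect{z},\ \textstyle\sum_{i=1}^{k} \vect{z}(i) = 1.
\]
The query therefore holds iff the dual is $\leq 0$ at a feasible point, i.e.\ iff there exist $(\vect{x}, \vect{z})$ with $\vect{x}(\init) \leq \boldLambda^\top \vect{z}$. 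The normalization $\sum_{i} \vect{z}(i) = 1$ can then be exchanged, by positive rescaling of the pair $(\vect{x}, \vect{z})$, for the non-triviality condition $\vect{z} \in \realsnn^{[k]} \setminus \{\vect{0}\}$ stated in the lemma.

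Part (ii) follows by a completely parallel argument in which every appeal to \Cref{lemma:exist-CP-eq}~\ref{lemma:exist-CP-eq-lb-eq} is replaced by \ref{lemma:exist-CP-eq-ub-eq} (the sole point where the EC-freeness hypothesis is used), together with a sign flip $\vect{x}' = -\vect{x}$ to convert the dual constraint into the claimed $\SM \vect{x}' \leq \TM \vect{z}$. The main obstacle I anticipate is the LP-duality bookkeeping, and in particular justifying the trade between the normalization $\sum_i \vect{z}(i) = 1$ (which arises naturally from duality applied to the free variable $t$) and the non-triviality condition $\vect{z} \neq \vect{0}$. This is precisely the technical point distinguishing the present non-strict case from \Cref{lemma:universal-dq-strict-bounds}, whose strict-inequality conclusion permits the degenerate choice $\vect{z} = \vect{0}$ and therefore needs no such normalization.
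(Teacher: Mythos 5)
Your proof is correct but takes a genuinely different route from the paper's. The paper negates the query, introduces an $\varepsilon$-slack to turn the resulting strict inequalities into non-strict ones, homogenizes the system with a scaling variable $\gamma\geq 0$, shows by hand that the homogenized system admits no spurious $\gamma=0$ solutions, and only then applies Farkas' lemma (\Cref{lemma:farkas}~\ref{lemma:farkas-1}); the non-triviality $\vect{z}\neq\vect{0}$ appears at the very end after a separate rescaling argument. You instead recast the universal disjunction as the scalar condition $\max_\scheduler \min_i(p_i^\scheduler-\lambda_i)\leq 0$, encode the left-hand side as an LP via \Cref{lemma:exist-CP-eq}, and invoke strong LP duality; the normalization $\sum_i\vect{z}(i)=1$ then emerges directly as the dual constraint associated with the free slack variable $t$. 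Both arguments rest on the same algebraic kernel (Farkas' lemma underlies strong duality), and both close by trading $\sum_i\vect{z}(i)=1$ for $\vect{z}\neq\vect{0}$ through homogeneous rescaling. Your approach is slightly less self-contained---it requires strong LP duality plus routine feasibility/boundedness checks, rather than only Farkas---but it makes the role of $\vect{z}$ as the dual certificate of the primal slack transparent, and it absorbs the paper's $\gamma\neq 0$ argument into the boundedness check for the primal LP. Your elementary chaining proof of $\Leftarrow$ is a nice independent consistency check. One small inaccuracy: for part~(ii) the dual constraint comes out directly as $\SM\vect{x}\leq\TM\vect{z}$ once the primal is set up as a minimization over $t$ with the constraint $\TM^\top\vect{y}-t\vect{1}\leq\boldLambda$, so the sign flip $\vect{x}'=-\vect{x}$ you anticipate is not actually needed.
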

\begin{proof}
\label{proof:farkas-universal-DPs-non-strict-bounds}
Again, we only prove the statement for DQs with upper bounds. For lower bounds the proof is analogous.
\begin{align*}
&\phantom{\iff} \forall \scheduler \in \schedulers \centerdot \Disj_{i=1}^{k} \prob_{\mdp, \init}^\scheduler(\eventually \targetSet_i) \leq \lambda_i \\ 
& \iff \neg \exists \scheduler \in \schedulers \centerdot \Conj_{i=1}^{k} \prob_{\mdp, \init}^\scheduler(\eventually  \targetSet_i) > \lambda_i \\
& \iff \neg \exists \varepsilon \in \reals_{>0} \centerdot \exists \scheduler \in \schedulers \centerdot \Conj_{i=1}^{k} \prob_{\mdp, \init}^\scheduler(\eventually  \targetSet_i ) \geq \lambda_i + \varepsilon \\
\overset{\text{\Cref{lemma:exist-CP-eq} \ref{lemma:exist-CP-eq-lb-eq}}}&{\iff}
\neg \exists \varepsilon \in \reals_{>0} \centerdot \exists \vect{y} \in \realsnn^{\SA} \centerdot \SM^\top \vect{y} = \initDistr \land \Conj_{i=1}^{k} \vect{t}_i^\top \vect{y} \geq \lambda_i + \varepsilon \\
& \iff \neg \exists \varepsilon \in \reals_{>0} \centerdot \exists \vect{y} \in \realsnn^{\SA} \centerdot \SM^\top \vect{y} = \initDistr \land \TM^\top \vect{y} \geq \pmb{\lambda} + \vect{1} \cdot \varepsilon
\end{align*}
To apply Farkas' lemma, we need to scale the right-hand side of the equality and inequality with a variable. \sloppy We show that $\SM^\top \vect{y} = \initDistr \land \TM^\top \vect{y} \geq \pmb{\lambda} + \vect{1} \cdot \varepsilon$ has a solution if and only if $\SM^\top \vect{y} = \initDistr \cdot \gamma \land \TM^\top \vect{y} \geq \pmb{\lambda} \cdot \gamma + \vect{1} \cdot \varepsilon$ has a solution, where $\varepsilon \in \reals_{>0}$, $\gamma \in \realsnn$ and $\vect{y} \in \realsnn^{\SA}$. Obviously, the former implies the latter, since we can use the solution of the former and choose $\gamma = 1$ to obtain a solution for the latter. 

For the other direction, suppose we have a solution $\varepsilon$, $\vect{y}$ and $\gamma$ with $\gamma > 0$. Let $\vect{y}' = \frac{\vect{y}}{\gamma}$ and $\varepsilon' = \frac{\varepsilon}{\gamma}$, then we have $\SM^\top \vect{y}' = \SM^\top \frac{\vect{y}}{\gamma} = \frac{\initDistr \cdot \gamma}{\gamma} = \initDistr$ and $\TM^\top \vect{y}' = \TM^\top \frac{\vect{y}}{\gamma} \geq \frac{\pmb{\lambda} \cdot \gamma + \vect{1} \cdot \varepsilon}{\gamma} = \pmb{\lambda} + \vect{1} \cdot \varepsilon'$. Hence, $\vect{y}'$ and $\varepsilon'$ are a solution to the first system. 

Now suppose $\gamma = 0$, so we have $\SM^\top \vect{y} = 0$. Suppose $\vect{y} = 0$, then we get $\TM^\top \vect{y} = 0 \geq \vect{1} \cdot \varepsilon > 0$. Hence $\vect{y} = 0$ cannot hold. Suppose $\vect{y} \neq 0$. The following observation is from the proof of Lemma 3.8 in \cite{jantsch_certificates_2022}. Since we have $\SM^\top \vect{y} = 0$, we also have $\vect{1}^\top\SM^\top \vect{y} = 0$. Observe that $\vect{1}^\top \SM^\top$ corresponds to $1 - \sum_{\state' \in \states} \transMat(\state, \action, \state')$ for every $(\state, \action) \in \SA$. From $\vect{1}^\top\SM^\top \vect{y} = 0$ we have for all $\vect{y}(\state, \action) > 0$ that $\sum_{\state' \in \states} \transMat(\state, \action, \state') = 1$ and $\transMat(\state, \action, \targets) = 0$. This implies $\TM^\top \vect{y} = 0$, again yielding a contradiction and $\gamma \neq 0$ has to hold. Thus we have:
\begin{align*}
&\phantom{\iff} \neg \exists \varepsilon \in \reals_{>0} \centerdot \exists \vect{y} \in \realsnn^{\SA} \centerdot \SM^\top \vect{y} = \initDistr \land \TM^\top \vect{y} \geq \pmb{\lambda} + \vect{1} \cdot \varepsilon \\
& \iff \neg \exists \gamma \in \realsnn \centerdot \exists \varepsilon \in \reals_{>0} \centerdot \exists \vect{y} \in \realsnn^{\SA} \centerdot \SM^\top \vect{y} = \initDistr \cdot \gamma \land \TM^\top \vect{y} \geq \pmb{\lambda} \cdot \gamma + \vect{1} \cdot \varepsilon \\
& \iff \neg \exists \gamma, \varepsilon \in \realsnn \centerdot \exists \vect{y} \in \realsnn^{\SA} \centerdot \begin{pmatrix} \SM^\top & -\initDistr & \vect{0} \\ -\SM^\top & \initDistr & \vect{0} \\ \TM^\top & -\pmb{\lambda} & -\vect{1} \end{pmatrix}
\begin{pmatrix}
\vect{y} \\
\gamma \\
\varepsilon
\end{pmatrix}
\geq 0 \ \land -\varepsilon < 0 \\
\overset{\text{\Cref{lemma:farkas} \ref{lemma:farkas-1}}}&{\iff}
\exists \vect{x}_1, \vect{x}_2 \in \realsnn^{\states} \centerdot \exists \vect{z} \in \realsnn^{[k]} \centerdot \begin{pmatrix} \SM & -\SM & \TM \\ -\initDistr^\top & \initDistr^\top & -\pmb{\lambda}^\top \\ \vect{0}^\top & \vect{0}^\top & -\vect{1}^\top \end{pmatrix} \begin{pmatrix}
\vect{x}_1 \\
\vect{x}_2 \\
\vect{z}
\end{pmatrix} \leq \begin{pmatrix}
\vect{0} \\
0 \\
-1
\end{pmatrix}\\
& \iff \exists \vect{x} \in \reals^{\states} \centerdot \exists \vect{z} \in \realsnn^{[k]} \centerdot \SM \vect{x} \leq - \TM \vect{z} \land - \initDistr^\top \vect{x} \leq \pmb{\lambda}^\top \vect{z} \land \vect{1}^\top\vect{z} \geq 1\\
& \iff \exists \vect{x} \in \reals^{\states} \centerdot \exists \vect{z} \in \realsnn^{[k]} \centerdot \SM \vect{x} \geq \TM \vect{z} \land \initDistr^\top \vect{x} \leq \pmb{\lambda}^\top \vect{z} \land \vect{1}^\top\vect{z} \geq 1
\end{align*}
We claim that $\SM \vect{x} \geq \TM \vect{z} \land \initDistr^\top \vect{x} \leq \pmb{\lambda}^\top \vect{z} \land \vect{1}^\top\vect{z} \geq 1$ has a solution if and only if $\SM \vect{x} \geq \TM \vect{z} \land \initDistr^\top \vect{x} \leq \pmb{\lambda}^\top \vect{z} \land \vect{z} \neq 0$ does. Firstly, the solution to the former is a solution to the latter, since $\vect{1}^\top\vect{z} \geq 1$ implies $\vect{z} \neq 0$. Now let $\vect{x}$ and $\vect{z}$ be a solution of the latter. Since $\vect{z} \neq 0$ and $\vect{z} \geq 0$ there exists $\beta \in \realsnn$ such that $\beta \cdot \vect{1}^\top \vect{z} \geq 1$. Let $\vect{x}' = \vect{x} \cdot \beta$ and $\vect{z}' = \vect{z} \cdot \beta$. Then we get $\SM \vect{x}' = \SM \vect{x} \cdot \beta \geq \TM \vect{z} \cdot \beta = \TM \vect{z}'$ and $\initDistr^\top \vect{x}' = \initDistr^\top \vect{x} \cdot \beta \leq \pmb{\lambda}^\top \vect{z} \cdot \beta = \pmb{\lambda}^\top \vect{z}'$ and by construction $\vect{1}^\top \vect{z} \geq 1$. Hence $\vect{x}'$ and $\vect{z}'$ are a solution to the former system. Altogether this shows the equivalence. Since $\initDistr^\top \vect{x} = \vect{x}(\init)$, this completes the proof.

Again, for lower bounds we assume $\mdp$ to be EC-free. Then the proof is analogous and we apply \Cref{lemma:exist-CP-eq} \ref{lemma:exist-CP-eq-ub-eq} instead of \Cref{lemma:exist-CP-eq} \ref{lemma:exist-CP-eq-lb-eq}.
\qed
\end{proof}
\certificatesUniversalDq*
\begin{proof}
Directly follows from \Cref{lemma:universal-dq-strict-bounds} and \Cref{lemma:universal-dq-non-strict-bounds}.
\qed
\end{proof}
Let us now briefly show how the certificates for \existsDQ-queries and \universalCQ-queries can be derived. To this end, let $\vect{t}_i$ denote the $i$th column of $\vect{T}$ and $\compBowtie \ = \ \leq$ if $\bowtie \ \in \{\geq, > \}$ and $\compBowtie \ = \ \geq$ if $\bowtie \ \in \{\leq, <\}$.
\begin{lemma}[Certificates for \existsDQ-queries]
Let $\mdp = \mdpRF$ be an MDP in reachability form without ECs and let $\targetSet_1, \dots, \targetSet_k$ be target sets. Let $\lambda_i \in [0, 1]$ for all $i \in [k]$ and $\bowtie \ \in \{<, \leq, >, \geq\}$. Then we have:
\[
\exists \scheduler \in \schedulers \centerdot \Disj_{i=1}^{k} \prob_{\mdp}^\scheduler(\eventually  \targetSet_i ) \bowtie \lambda_i
\iff \exists \vect{y} \in \reals^\states \centerdot \Disj_{i=1}^{k} \SM^\top \vect{y} \compBowtie \initDistr \land \vect{t}_i^\top \vect{y} \bowtie \lambda_i
\]
\end{lemma}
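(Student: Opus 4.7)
My plan is to reduce the disjunctive existential query to a disjunction of single-predicate existential queries, since the existential quantifier distributes over disjunction, and then apply the single-objective characterization from \Cref{lemma:exist-CP} to each disjunct.

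First I would observe that by distributivity of $\exists$ over $\lor$,
\[
\exists \scheduler \in \schedulers \centerdot \Disj_{i=1}^{k} \prob_{\mdp}^\scheduler(\eventually \targetSet_i) \bowtie \lambda_i
\ \iff\ \Disj_{i=1}^{k} \exists \scheduler \in \schedulers \centerdot \prob_{\mdp}^\scheduler(\eventually \targetSet_i) \bowtie \lambda_i.
\]
Each disjunct is a single-objective existential reachability query, which is simultaneously a conjunctive and a disjunctive reachability property with $k=1$. Hence we can apply \Cref{lemma:exist-CP}: part \ref{lemma:exist-CP-lb} handles the lower-bound cases $\bowtie \in \{\geq, >\}$ (yielding $\SM^\top \vect{y} \leq \initDistr$, i.e.\ $\compBowtie\,=\,\leq$), while part \ref{lemma:exist-CP-ub} handles the upper-bound cases $\bowtie \in \{\leq, <\}$ (yielding $\SM^\top \vect{y} \geq \initDistr$, i.e.\ $\compBowtie\,=\,\geq$); the EC-free hypothesis is needed exactly for the latter. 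This gives, for each $i$,
\[
\exists \scheduler \in \schedulers \centerdot \prob_{\mdp}^\scheduler(\eventually \targetSet_i) \bowtie \lambda_i \ \iff\ \exists \vect{y} \in \realsnn^{\SA} \centerdot \SM^\top \vect{y} \compBowtie \initDistr \land \vect{t}_i^\top \vect{y} \bowtie \lambda_i.
\]

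Finally, I would push the existential quantifier back outside the disjunction, using the standard equivalence $\Disj_{i} \exists \vect{y} \centerdot P_i(\vect{y}) \iff \exists \vect{y} \centerdot \Disj_{i} P_i(\vect{y})$, to arrive at the stated characterization. The direction $(\Leftarrow)$ is immediate (any witness $\vect{y}$ for some disjunct works), and for $(\Rightarrow)$, if some disjunct $i$ admits a witness $\vect{y}_i$, then $\vect{y}_i$ itself satisfies the $i$th conjunct of the disjunction on the right, making the whole disjunction true. The only subtle point to check carefully is the matching of strict/non-strict inequalities: \Cref{lemma:exist-CP} preserves the comparison operator $\bowtie$ on the bound $\lambda_i$, and the sign of $\compBowtie$ on the $\SM$-constraint is determined by whether $\bowtie$ is a lower- or upper-bound operator, exactly as in the definition of $\compBowtie$. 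No further work is needed, and there is no genuine obstacle beyond bookkeeping — the result follows essentially for free from the distributivity of $\exists$ over $\lor$ combined with the already-established single-objective characterization.
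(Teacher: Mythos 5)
Your proof is correct and takes essentially the same route as the paper's: the paper also distributes $\exists$ over $\lor$ and then invokes the single-objective characterization (citing the prior works of Funke et al.\ and Jantsch et al., which is the $k=1$ case you obtain from \Cref{lemma:exist-CP}) on each disjunct; the final step of pushing $\exists$ back outside is left implicit in the paper but is the same trivial equivalence you state. One bookkeeping note: the lemma statement as printed writes $\vect{y}\in\reals^\states$, which is a typo (the product $\SM^\top\vect{y}$ requires $\vect{y}\in\realsnn^{\SA}$, as in the appendix table and as you correctly use).
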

\begin{proof}
Observe that we have:
\begin{align*}
\exists \scheduler \in \schedulers \centerdot \Disj_{i=1}^{k} \prob_{\mdp}^\scheduler(\eventually \targetSet_i ) \bowtie \lambda_i \iff \Disj_{i=1}^{k} \exists \scheduler \in \schedulers \centerdot \prob_{\mdp, \init}^\scheduler(\eventually \targetSet_i ) \bowtie \lambda_i
\end{align*}
We can then directly apply the results from the single-objective case \cite{funke_farkas_2020, jantsch_certificates_2022} to each disjunct, thereby yielding the statement.
\qed
\end{proof}
\begin{lemma}[Certificates \universalCQ-queries]
Let $\mdp = \mdpRF$ be an MDP in reachability form without ECs and let $\targetSet_1, \dots, \targetSet_k$ be target sets. Let $\lambda_i \in [0, 1]$ for all $i \in [k]$ and $\bowtie \ \in \{<, \leq, >, \geq\}$. Then we have:
\[
\forall \scheduler \in \schedulers \centerdot \Conj_{i=1}^{k} \prob_{\mdp}^\scheduler(\eventually  \targetSet_i ) \bowtie \lambda_i \iff
\exists \vect{x}_1, \dots, \vect{x}_k \in \reals^\states \centerdot \Conj_{i=1}^{k} \SM \vect{x} \compBowtie \vect{t}_i \land \initDistr^\top \vect{x} \bowtie \lambda_i
\]
\end{lemma}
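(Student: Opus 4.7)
The plan is to reduce the statement to the single-objective case by exploiting the commutation of the outer quantifier with the Boolean connective. Unlike the \universalDQ-setting (\Cref{lemma:farkas-universal-DQs}), which needed a full application of Farkas' lemma to dualize a conjunctive existential query, here the leading $\forall$ distributes cleanly over the $\land$, so that the $k$-predicate query splits into $k$ \emph{independent} single-objective universal reachability queries, each of which is already certified in the single-objective literature \cite{funke_farkas_2020, jantsch_certificates_2022}. The individual certificates can then be bundled as a tuple $(\vect{x}_1, \dots, \vect{x}_k)$ to obtain the joint certificate in the stated form.

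First, I would establish the logical step
\[
\forall \scheduler \in \schedulers \centerdot \Conj_{i=1}^{k} \prob_{\mdp}^\scheduler(\eventually \targetSet_i) \bowtie \lambda_i \iff \Conj_{i=1}^{k} \forall \scheduler \in \schedulers \centerdot \prob_{\mdp}^\scheduler(\eventually \targetSet_i) \bowtie \lambda_i ,
\]
which is immediate because each predicate depends only on the currently quantified $\scheduler$, so requiring all predicates for every $\scheduler$ is equivalent to requiring each predicate separately for every $\scheduler$. This is the dual of the distributivity step used in the \existsDQ-lemma just above and is the only place where the logical structure of the query is exploited.

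Second, for each $i \in [k]$ I would invoke the single-objective certificate characterization of \cite{funke_farkas_2020, jantsch_certificates_2022}, which yields
\[
\forall \scheduler \in \schedulers \centerdot \prob_{\mdp}^\scheduler(\eventually \targetSet_i) \bowtie \lambda_i \iff \exists \vect{x}_i \in \reals^\states \centerdot \SM \vect{x}_i \compBowtie \vect{t}_i \land \initDistr^\top \vect{x}_i \bowtie \lambda_i .
\]
Conjoining these equivalences over $i$ and using that existential quantifiers over disjoint variables commute with conjunction gives precisely the right-hand side of the lemma. Since $\initDistr$ is Dirac in $\init$ the scalar product $\initDistr^\top \vect{x}_i$ is just $\vect{x}_i(\init)$, which matches the phrasing used in \Cref{lemma:farkas-universal-DQs}.

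The main obstacle is essentially bookkeeping rather than a technical hurdle: the proof is a short logical reduction, not a new Farkas derivation. The one subtle point to check is that the hypotheses transfer correctly to each conjunct — in particular, the EC-freeness assumption stated in the lemma is exactly what the single-objective characterization requires for lower bounds ($\bowtie \in \{\geq, >\}$), while for upper bounds ($\bowtie \in \{\leq, <\}$) it is harmless. No EC-free assumption or separate application of Farkas' lemma beyond the single-objective base case is needed, which is why the statement holds uniformly for all four senses of $\bowtie$.
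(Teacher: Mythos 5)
Your proposal is correct and follows essentially the same route as the paper's proof: distribute the universal quantifier over the conjunction and then invoke the known single-objective Farkas characterization for each conjunct, gluing the $k$ certificates together as a tuple. The extra commentary on EC-freeness (needed only to invoke the single-objective result for the lower-bound senses, since the dual is an existential upper-bound query) is accurate and matches the hypothesis the paper imposes.
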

\begin{proof}
\begin{align*}
\forall \scheduler \in \schedulers \centerdot \Conj_{i=1}^{k} \prob_{\mdp}^\scheduler(\eventually \targetSet_i) \bowtie \lambda_i &\iff \Conj_{i=1}^{k} \forall \scheduler \in \schedulers \centerdot \prob_{\mdp, \init}^\scheduler(\eventually \targetSet_i ) \bowtie \lambda_i
\end{align*}
We then obtain certificate conditions for each conjunct by using results from the single-objective case \cite{funke_farkas_2020, jantsch_certificates_2022}, yielding the statement.
\qed
\end{proof}

\begin{table}[t]
\centering
    \setcellgapes{3pt}
    \makegapedcells
\renewcommand{\tabcolsep}{2mm}
\resizebox{\textwidth}{!}{
\begin{tabular}{|c|c|c|c|c|}
  \cline{3-5}
 \multicolumn{2}{c|}{} & Certificate & Condition & \\ 
 \hline
 \multirow{ 2}{*}{$\exists$} & $\land$ & $\vect{y} \in \realsnn^\SA$ & $\SM^\top \vect{y} \compBowtie \initDistr, \TM^\top \vect{y} \bowtie \pmb{\lambda}$ & \cite{etessami_multi-objective_2008} \\
\cline{2-5}
  & $\lor$  & $\vect{y} \in \realsnn^\SA$ &
$\Disj_{i=1}^k \SM^\top \vect{y} \compBowtie \initDistr \land \vect{t}_i^\top \vect{y} \bowtie \lambda_i$ & \cite{funke_farkas_2020, jantsch_certificates_2022} \\
\hline
  \multirow{ 2}{*}{$\forall$} & $\land$ & $\vect{x}_1, \dots, \vect{x}_k \in \reals^\states$ & $\Conj_{i=1}^k \SM \vect{x} \compBowtie \vect{t}_i \land \initDistr^\top \vect{x} \bowtie \lambda_i$ & \cite{funke_farkas_2020, jantsch_certificates_2022} \\ 
  \cline{2-5}
  & $\lor$  & $\vect{x} \in \reals^\states, \ \begin{aligned}\vect{z} \in \begin{cases}
    \realsnn^{[k]} \setminus \{\vect{0}\}, & \text{if } \bowtie \ \in \{\leq, \geq\}\\
    \realsnn^{[k]}, & \text{else} \\
  \end{cases} \\
\end{aligned}$& $\SM \vect{x} \compBowtie \TM \vect{z} \land \vect{x}(\init) \bowtie \pmb{\lambda}^\top \vect{z}$ & \Cref{lemma:farkas-universal-DQs} \\
\hline
\end{tabular}}
\caption{Farkas certificates and conditions for EC-free MDPs. $\compBowtie =\ \leq$ if $\bowtie \ \in \{\geq, >\}$ and $\compBowtie =\ \geq$ otherwise.}
\label{table:farkas-certificates}
\end{table}
An overview of the certificates and their conditions for EC-free MDPs in reachability form is shown in \Cref{table:farkas-certificates}.

\subsubsection{Farkas certificates and witnessing subsystems} In this section we provide proofs for \Cref{theorem:subsystems-and-lower-bounds} and \Cref{theorem:farkas-support-witnessing-subsystems}. 
\begin{lemma}
\label{lemma:schedulers-subsystems-mdp}
Let $\mathcal{N} = (\states, \actions, \init, \transMat)$ be an MDP. Let $\targets_1, \dots, \targets_k \subseteq \states$ and $\targetSet_1, \dots,\allowbreak \targetSet_\ell \subseteq \states$. Further, let $\vect{r}_1, \dots, \vect{r}_p \in \rationals^\SA$ be reward vectors. Let $\mathcal{N}' = (\states' \union \{\exit\}, \actions, \init, \transMat')$ be a subsystem of $\mathcal{N}$.
\begin{enumerate}[label={(\roman*)}, align=left, leftmargin=*, itemsep=0mm, topsep=0.8mm, parsep=0mm]
\item For every scheduler in $\scheduler \in \schedulers^\mathcal{N}$ there exists a scheduler $\scheduler' \in \schedulers^{\mathcal{N}'}$ such that for all $i \in [k]$ we have $\prob_{\mathcal{N}'}^{\scheduler'}(\eventually \targets_i) \leq \prob_{\mathcal{N}}^\scheduler(\eventually \targets_i)$, for all $j \in [\ell]$ we have $\prob_{\mathcal{N}'}^{\scheduler'}(\globally \targetSet_j) \leq \prob_{\mathcal{N}}^\scheduler(\globally \targetSet_j)$ and for all $h \in [p]$ we have $\expectation[\mathcal{N}']{\scheduler'}{\lrInf{\vect{r'}_h}} \leq \expectation[\mathcal{N}]{\scheduler}{\lrInf{\vect{r}_h}}$ ($\expectation[\mathcal{N}']{\scheduler'}{\lrSup{\vect{r'}_h}} \leq \expectation[\mathcal{N}]{\scheduler}{\lrSup{\vect{r}_h}}$) \label{lemma:subsystem-scheduler-exists}
\item Vice versa, for every scheduler $\scheduler' \in \schedulers^{\mathcal{N}'}$ there exists a scheduler in $\scheduler \in \schedulers^\mathcal{N}$ such that for all $i \in [k]$ we have $\prob_{\mathcal{N}'}^{\scheduler'}(\eventually \targetSet_i) \leq \prob_{\mathcal{N}}^\scheduler(\eventually \targetSet_i)$, for all $j \in [\ell]$ we have $\prob_{\mathcal{N}'}^{\scheduler'}(\globally \targetSet_j) \leq \prob_{\mathcal{N}}^\scheduler(\globally \targetSet_j)$ and for all $h \in [p]$ we have $\expectation[\mathcal{N}']{\scheduler'}{\lrInf{\vect{r'}_h}} \leq \expectation[\mathcal{N}]{\scheduler}{\lrInf{\vect{r}_h}}$ ($ \expectation[\mathcal{N}']{\scheduler'}{\lrSup{\vect{r'}_h}} \leq \expectation[\mathcal{N}]{\scheduler}{\lrSup{\vect{r}_h}}$) \label{lemma:original-scheduler-exists}
\end{enumerate}
\end{lemma}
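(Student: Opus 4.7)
The plan is a coupling argument between the path spaces of $\mathcal{N}$ and $\mathcal{N}'$. I call a finite path \emph{surviving} if it lies entirely in $\states'$; by the definition of a subsystem, every surviving finite path is simultaneously a finite path of $\mathcal{N}$ and of $\mathcal{N}'$, with identical one-step transition probabilities. For \ref{lemma:subsystem-scheduler-exists}, given $\scheduler \in \schedulers^\mathcal{N}$ I set $\scheduler'(\Path) = \scheduler(\Path)$ on every surviving $\Path$ and arbitrarily once $\Path$ has entered $\exit$; for \ref{lemma:original-scheduler-exists}, given $\scheduler' \in \schedulers^{\mathcal{N}'}$ I set $\scheduler(\Path) = \scheduler'(\Path)$ whenever $\Path$ survives and arbitrarily once it has left $\states'$. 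In both cases, a routine induction on path length, combined with the fact that the transition probabilities agree along surviving paths, yields
\[
\prob_{\mathcal{N}'}^{\scheduler'}(\cyl{\Path}) = \prob_{\mathcal{N}}^\scheduler(\cyl{\Path})
\]
for every surviving finite path $\Path$, so the two probability measures coincide on the cylinders generated by surviving paths.

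From this identity the three inequalities then follow by event inclusions. For reachability, any path witnessing $\eventually \targets_i$ in $\mathcal{N}'$ must reach $\targets_i \cap \states'$ before hitting the absorbing $\exit$ (since $\exit \notin \targets_i$); viewed as a path of $\mathcal{N}$ it still witnesses $\eventually \targets_i$, so the cylinders contributing to the $\mathcal{N}'$-event are a subset of those contributing to the $\mathcal{N}$-event. For invariants, every path satisfying $\globally \targetSet_j$ in $\mathcal{N}'$ must avoid $\exit$ and hence stays in $\targetSet_j \cap \states' \subseteq \targetSet_j$, a strictly stronger condition than $\globally \targetSet_j$ in $\mathcal{N}$. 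For mean-payoff, a coupled pair of infinite paths shares its reward sequence up to the step at which $\mathcal{N}'$ enters $\exit$; from that step on the $\mathcal{N}'$-reward equals $\vect{r}_{\min}(h)$ while the $\mathcal{N}$-reward is at least $\vect{r}_{\min}(h)$. Hence $\lrInf{\vect{r'}_h}(\Path') \leq \lrInf{\vect{r}_h}(\Path)$ pointwise on the coupling (and analogously for $\lrSup{}$), and integrating gives the desired expectation inequality.

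The main obstacle I foresee is the mean-payoff step: one must partition the $\mathcal{N}'$-path space by the step of first arrival at $\exit$ (or ``never''), match each component with the corresponding event in $\mathcal{N}$ of paths first leaving $\states'$ at that same step, and then push the pointwise reward inequality through the expectation. The cylinder identity above is precisely what is needed to carry out this component-wise integration cleanly, so beyond careful bookkeeping the argument is standard.
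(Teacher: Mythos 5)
Your argument matches the paper's proof in its essential structure: let the corresponding scheduler mimic the given one on $\states'$, observe that the two path measures agree on cylinders of finite paths staying in $\states'$, use event inclusion for $\eventually$ and $\globally$, and split off the $\exit$-hitting paths for mean-payoff, where the reward in $\exit$ is minimal by construction of $\vect{r}'_h$. The paper's version is terser (it appeals to Proposition~4.4 of Jantsch et al.\ and states the two key facts), but the reasoning is the same.

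One imprecision is worth flagging. You assert that ``every surviving finite path is simultaneously a finite path of $\mathcal{N}$ and of $\mathcal{N}'$, with identical one-step transition probabilities.'' This is wrong in general for a subsystem as defined in the paper: $\transMat'(\state,\action,\state')$ may be set to $0$ even for $\state,\state' \in \states'$, with the lost mass redirected to $\exit$. Hence a finite path of $\mathcal{N}$ that stays in $\states'$ need not be a path of $\mathcal{N}'$ at all. The containment holds only in the direction your argument actually uses --- finite paths of $\mathcal{N}'$ avoiding $\exit$ are finite paths of $\mathcal{N}$ with the same transition probabilities --- so the cylinder-coincidence claim should be stated for those paths only. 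With that correction the event inclusions, the coupling for mean-payoff, and the decomposition by first $\exit$-visit all go through as you outline.
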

\begin{proof}
The proof follows the ideas from \cite[Proposition~4.4]{jantsch_certificates_2022}. The set of paths in $\mathcal{N}'$ never visiting $\exit$ are a subset of paths in $\mathcal{N}$, i.e.
\[
\{\Path' \in \paths(\mathcal{N}') \mid \Path' \text{ never visits } \exit \} \subseteq \paths(\mathcal{N})
\]
Further, for all paths $\Path_\exit \in \{\Path' \in \paths(\mathcal{N}') \mid \Path' \text{ visits } \exit \}$, i.e. paths in $\mathcal{N}'$ that visit $\exit$ (and hence stay in $\exit$ forever, as $\exit$ cannot be left), and $\Path \in \paths(\mathcal{N})$ we have $\lrInf{\vect{r}_h'}(\Path_\exit) \leq \lrInf{\vect{r}_h}(\Path)$ ($\lrSup{\vect{r}_h'}(\Path_\exit) \leq \lrSup{\vect{r}_h}(\Path)$) for all $h \in [p]$. Intuitively, by construction of $\vect{r}_h'$, the smallest possible reward is collected in $\exit$ and a path in $\mathcal{N}'$ visiting $\exit$ cannot achieve a higher mean-payoff than any path in $\mathcal{N}$.

Let us prove \ref{lemma:subsystem-scheduler-exists} first. Given a scheduler $\scheduler \in \schedulers^\mathcal{N}$, we choose a scheduler $\scheduler' \in \schedulers^{\mathcal{N}'}$ that behaves like $\scheduler$ in $\states'$ (in $\exit$ the choice does not matter). Recall that the set of actions that are enabled in a state $\state$ in $\mathcal{N}$ and $\mathcal{N}'$ coincide by definition and that once $\exit$ is entered in the subsystem the smallest possible reward is collected. Because the paths in $\mathcal{N}$ under $\scheduler$ and $\mathcal{N}'$ under $\scheduler'$ carry the same probability and the state-action pairs in $\states'$ have the same reward, the statement follows with the observations above.

For \ref{lemma:original-scheduler-exists}, let  $\scheduler' \in \schedulers^{\mathcal{N}'}$ be given. We choose a scheduler $\scheduler$ that behaves like $\scheduler'$ for paths $\Path \in \pathsFin(\mathcal{N})$ that only visit $\states'$. Otherwise, $\scheduler$ is allowed to play any available action. Similarly, the statement then follows.
\qed
\end{proof}

\subsystemsLowerBounds*
\begin{proof}
For \ref{theorem:subsystems-and-lower-bounds-exists}, we can directly apply \Cref{lemma:schedulers-subsystems-mdp} \ref{lemma:original-scheduler-exists}. For \ref{theorem:subsystems-and-lower-bounds-universal} we prove via contraposition, i.e. we show:
\[
\exists \scheduler \in \schedulers^\mdp \centerdot \propAlt{\lesssim}{\scheduler}(\boldLambda) \implies \exists \scheduler' \in \schedulers^{\mdp'} \centerdot \propAlt{\lesssim}{\scheduler'}(\boldLambda)
\]
where $\propAlt{\lesssim}{\scheduler}(\boldLambda)$ is a corresponding conjunctive query if $\prop{\gtrsim}{\scheduler}(\boldLambda)$ is a disjunctive query and $\propAlt{\lesssim}{\scheduler}(\boldLambda)$ is a corresponding disjunctive query if $\prop{\gtrsim}{\scheduler}(\boldLambda)$ is a conjunctive query. Further, we choose $\lesssim \ = \ <$ if $\gtrsim \ = \ \geq$ and $\lesssim \ = \ \leq$ if $\gtrsim \ = \ >$. The statement then follows from \Cref{lemma:schedulers-subsystems-mdp} \ref{lemma:subsystem-scheduler-exists}.
\qed
\end{proof}

\farkasSupportSubsystem*
\begin{proof}~
Let $\SM' = \SM\vert_{\SA'\times\states'} =  \SM_{\mdp_{\states'}}$ and $\TM' = \TM\vert_{\SA'\times [k]} = \TM_{\mdp_{\states'}}$ where $\SA' = \{(\state, \action) \in \SA \mid \state \in \states'\} = \SA_{\mdp_{\states'}}$. Let us prove \ref{theorem:farkas-support-witnessing-subsystems-dq} first. We first note that if there exists $(\vect{x}', \vect{z}') \in \dqPoly{\mdp}{\gtrsim}(\pmb{\lambda})$, then there also exist $(\vect{x}, \vect{z}) \in \dqPoly{\mdp}{\gtrsim}(\pmb{\lambda})$ with $\vect{x} \geq 0$, namely $\vect{x}(\state) = \max\{0, \vect{x}'(\state)\}$ for all $\state \in \states$ and $\vect{z} = \vect{z}'$.
\begin{enumerate}
\item[$\Rightarrow$:] Let $(\vect{x}, \vect{z}) \in \dqPoly{\mdp}{\gtrsim}(\pmb{\lambda})$ with $\vect{x} \geq 0$. Then we have $\SM \vect{x} \leq \TM \vect{z} \land \vect{x}(\init) \gtrsim \pmb{\lambda}^\top \vect{z}$ (and additionally $\vect{z} \neq 0$ if we have non-strict inequalities). Let $\vect{x}' = \vect{x} \vert_{\states'}$ (i.e. $\vect{x}$ restricted to $\states'$). From Lemma 4.22 in \cite{jantsch_certificates_2022} we know that $\SM' \vect{x}' \leq \TM' \vect{z}$ and $\vect{x}'(\init) \gtrsim \pmb{\lambda}^\top \vect{z}$ hold. Intuitively, by omitting columns in $\SA$ (that is columns corresponding to states in $\states\setminus\states'$ and which are thus not in the support of $\vect{x}$) where the corresponding value of $\vect{x}$ is zero does not change the value of the left-hand side. Additionally, omitting rows on both sides also preserves the satisfaction of the inequalities. Consequently, $\vect{x}'$ and $\vect{z}$ are Farkas certificates for the satisfaction of the query in $\mdp_{\states'}$. Using \Cref{lemma:farkas-universal-DQs} we can conclude $\forall \scheduler \in \schedulers \centerdot \Disj_{i=1}^{k} \prob_{\mdp_{\states'}, \init}^\scheduler(\eventually  \targetSet_i ) \gtrsim \lambda_i$.
\item[$\Leftarrow$:] Because $\forall \scheduler \in \schedulers \centerdot \Disj_{i=1}^{k} \prob_{\mdp_{\states'}, \init}^\scheduler(\eventually \targetSet_i) \gtrsim \lambda_i$ holds, we know by \Cref{lemma:farkas-universal-DQs} that there exists $\vect{x}' \in \realsnn^{\states'}$ and $\vect{z} \in \realsnn^{[k]}$ such that $(\vect{x'}, \vect{z}) \in \dqPoly{\mdp_{\states'}}{\gtrsim}(\pmb{\lambda})$, i.e. $\SM' \vect{x}' \leq \TM' \vect{z}$,  $\vect{x}'(\init) \gtrsim \pmb{\lambda}^\top \vect{z}$ and $\vect{1}^\top \vect{z} \leq 1$ (or $\vect{1}^\top \vect{z} = 1$ if we have non-strict inequalities). Let $\vect{x} \in \realsnn^\states$ with $\vect{x}(\state) = \vect{x}'(\state)$ if $\state \in \states'$ and $\vect{x}(\state) = 0$ otherwise. Clearly, we have $\supp{\vect{x}} \subseteq \states'$. Again, applying Lemma 4.22 from \cite{jantsch_certificates_2022} we know that  $\SM \vect{x} \leq \TM \vect{z}$ and  $\vect{x}(\init) \geq \pmb{\lambda}^\top \vect{z}$ hold. Intuitively, adding columns corresponding to states where $\vect{x}$ is zero does not change the left-hand side of the inequalities. Rows corresponding to $(\state, \action) \in \SA$ with $\state \in \states\setminus\states'$ are of the form $- \sum_{\state' \in \states} \transMat(\state, \action, \state') \cdot \vect{x}(\state') \leq \sum_{i=1}^k \sum_{\target \in \targetSet_i} \transMat(\state, \action, \target) \cdot \vect{z}(\targetSet_i)$ because $\vect{x}(\state) = 0$. Since $\vect{x} \geq 0$ and the right-hand side is non-negative, such rows are also satisfied. Thus we have $(\vect{x}, \vect{z}) \in \dqPoly{\mdp}{\gtrsim}(\pmb{\lambda})$.
\end{enumerate}
The proof for \ref{theorem:farkas-support-witnessing-subsystems-cq} is analogous.
\begin{enumerate}
\item[$\Rightarrow$:] Let $\vect{y} \in \cqPoly{\mdp}{\gtrsim}(\boldLambda)$ with $\states' = \stateSupp{\vect{y}}$. For such $\vect{y}$ we have $\SM^\top\vect{y} \leq \initDistr \land \TM^\top \vect{y} \gtrsim \boldLambda$. Now we consider the restriction of $\vect{y}$ to the state action pairs in $\SA'$, i.e. $\vect{y}' = \vect{y} \vert_{\SA'}$. Again, following the reasoning of Lemma 4.22 from \cite{jantsch_certificates_2022} we have that omitting columns of $\SM^\top$ where $\vect{y}$ is zero does not change the value. Similarly, omitting rows preserves the satisfaction of the inequality. Because $\TM^\top \vect{y} = (\TM')^\top \vect{y'}$, we then have $(\SM')^\top\vect{y}' \leq \initDistr \land (\TM')^\top \vect{y}' \gtrsim \boldLambda$. Applying \Cref{lemma:exist-CP} then concludes the proof.
\item[$\Leftarrow$:] Now suppose we have $\states' \subseteq \states$ such that $ \exists \scheduler' \in \schedulers^{\mdp_{\states'}} \centerdot \propAlt{\gtrsim}{\scheduler'}(\boldLambda)$. By \Cref{lemma:exist-CP} we have that there exists $\vect{y}' \in \realsnn^{\SA'}$ such that $(\SM')^\top\vect{y}' \leq \initDistr \land (\TM')^\top \vect{y}' \gtrsim \boldLambda$. Now let $\vect{y} \in \realsnn^\SA$ and we set $\vect{y}(\state, \action) = \vect{y}'(\state, \action)$ if $(\state, \action) \in \SA'$ and $\vect{y}(\state, \action) = 0$ otherwise. Observe that $\stateSupp{\vect{y}} \subseteq \states'$ and for every state $\state \in \states \setminus \states'$ we have $\sum_{\action \in \actions(\state)} \vect{y}(\state, \action) - \sum_{(t, \action)} \transMat(t, \action, \state) \cdot \vect{y}(\state, \action) - \initDistr(\state) \leq 0$, because we have $\sum_{\action \in \actions(\state)} \vect{y}(\state, \action) = 0$. By construction we have $\TM^\top \vect{y} = (\TM')^\top \vect{y'}$. In total, we then have $\SM^\top\vect{y} \leq \initDistr \land \TM^\top \vect{y} \gtrsim \boldLambda$ because adding rows corresponding to $\state \in \states \setminus \states'$ preserves the satisfaction, as well as adding columns where $\vect{y}$ is zero.
\end{enumerate}
\qed
\end{proof}

\subsection{Proofs for \Cref{subsection:transfer-witnesses}}
\label{appendix:transfer}

\subsubsection{Reduction and transfer of subsystems}
Let us discuss the reduction described in \Cref{section:farkas-and-witnesses} and shown in the upper part of \Cref{fig:overview-approach} in detail.
Recall that $\mathcal{N} = (\states_\mathcal{N}, \actions, \bar{\state}, \transMat_\mathcal{N})$ is an arbitrary MDP and $\query_\mathcal{N}$ is a \ReachInv-query containing lower-bounded predicates $\prob_\mathcal{N}^\scheduler(\eventually T_1) \gtrsim \lambda_1, \dots, \prob_\mathcal{N}^\scheduler(\eventually T_k) \gtrsim \lambda_k$ and $\prob_\mathcal{N}^\scheduler(\globally G_1) \gtrsim \xi_1, \dots, \prob_\mathcal{N}^\scheduler(\globally G_\ell) \gtrsim \xi_\ell$.
We follow the construction from \cite{forejt_pareto_2012} for the product MDP $\mdp$. Let $\mdp = (\states, \actions, \init, \transMat)$ where $\states = \states_\mathcal{N} \times 2^{[k]} \times 2^{[\ell]}$, $\init = (\bar{\state}, \emptyset)$ and:
\[
    \transMat((\state, u, v), \action, (\state', u', v')) = 
\begin{cases}
    \transMat_{\mathcal{N}}(\state, \action, \state'),& \text{if } u' = u \union \{i \in [k] \mid \state \in T_i \} \text{ and} \\
    & v' = v \union \{j \in [\ell] \mid \state \in \states\setminus G_j \} \\
    0,              & \text{otherwise.}
\end{cases}
\]
Intuitively, $u$ keeps track of the ``good'' and $v$ the ``bad'' states that have been visited. The predicates can be easily rephrased, i.e.\ $\prob_\mathcal{N}^\scheduler(\eventually T)$ to $\prob_{\mdp}^\scheduler(\eventually (T \times 2^{[k]} \times 2^{[\ell]}))$ and analogously for invariant probabilities. For brevity, we write $\prob_{\mdp}^\scheduler(\eventually T)$ instead.
Because almost all paths eventually stay in a MEC \cite[Theorem~10.120]{baier_principles_2008}, instead of considering $\prob_{\mdp}^\scheduler(\eventually T_i)$, we can consider the probability of eventually staying in a MEC $\mec \in \MECS(\mdp)$ where $T_i$ has already been visited, that is there exists a $(\state, u, v) \in \states(\mec)$ with $i \in u$. Analogously, for $\prob_{\mdp}^\scheduler(\globally \targetSet_j)$ we consider MECs $\mec$ where there exists $(\state, u, v) \in \states(\mec)$ with $j \notin v$. Note that inside MECs, the $u$ and $v$ component of the states are identical. Let $A_i \subseteq \MECS(\mdp)$ denote the set of these MECs for predicates $\prob_{\mdp}^\scheduler(\eventually \targetSet_i)$ and analogously $B_j \subseteq \MECS(\mdp)$ for predicates $\prob_{\mdp}^\scheduler(\globally \targetSet_j)$. We then consider the quotient $\hat{\mdp}$, where reaching $\bot_\mec$ corresponds to staying in MEC $\mec$ forever \cite[Lemma~2.4]{baier_foundations_2022}. Clearly, we can then consider corresponding predicates of the form $\prob_{\hat{\mdp}}^{\hat{\scheduler}}(\eventually \{\bot_\mec \mid \mec \in A_i \})$ and $\prob_{\hat{\mdp}}^{\hat{\scheduler}}(\eventually \{\bot_\mec \mid \mec \in B_j \})$.  
 
Recall that $\iota \colon \states \to \hat{\states}$ maps a state of the product MDP $\mdp$ to the corresponding state in $\hat{\mdp}$.
Given a set of states $\hat{\states}'$ of the MEC quotient, the corresponding set of states in $\mdp$ and $\mathcal{N}$ is given by $\states' = \{(\state, u, v) \in \states \mid \iota((\state, u, v)) \in \hat{\states}' \}$ and $\states_{\mathcal{N}}' = \{\state \in \states_\mathcal{N} \mid \exists u, v \centerdot \iota((\state, u, v)) \in \hat{\states} \}$, respectively.

\transferSubsystem*
\begin{proof}
Let $\hat{\mdp}'$ be the subsystem of $\hat{\mdp}$ induced by a set $\hat{\states}'$ that satisfies $\query_{\hat{\mdp}}$. Let $\mdp'$ be the corresponding subsystem for $\mdp$ induced by $\states'$ and $\mathcal{N}'$ the subsystem of $\mathcal{N}$ induced by $\states_\mathcal{N}'$. Observe that $\hat{\mdp}'$ corresponds to the MEC quotient of $\mdp'$. From \cite[Lemma~2.4]{baier_foundations_2022} we then have that for any scheduler $\hat{\scheduler} \in \schedulers^{\hat{\mdp}'}$, there exists a scheduler $\scheduler \in \schedulers^{\mdp'}$ such that for all $i \in [k]$ and $j \in [\ell]$ we have
\begin{itemize}
\item $\prob_{\hat{\mdp}'}^{\hat{\scheduler}}(\eventually \{\bot_\mec \mid \mec \in A_i \}) = \prob_{\mdp'}^{\scheduler}(\eventually\globally \union_{\mec \in A_i} \states(\mec))$
\item $\prob_{\hat{\mdp}'}^{\hat{\scheduler}}(\eventually \{\bot_\mec \mid \mec \in B_j \}) = \prob_{\mdp'}^\scheduler(\eventually\globally \union_{\mec \in B_j} \states(\mec))$
\end{itemize}
and vice versa. Additionally, for any scheduler $\scheduler \in \schedulers^{\mdp'}$ there exists a scheduler $\scheduler' \in \schedulers^{\mathcal{N}'}$ such that for all $i \in [k]$ and $j \in [\ell]$ we have
\begin{itemize}
\item $\prob_{\mdp'}^{\scheduler}(\eventually\globally \union_{\mec \in A_i} \states(\mec)) \leq \prob_{\mathcal{N}'}^{\scheduler'}(\eventually T_i)$
\item $\prob_{\mdp'}^\scheduler(\eventually\globally \union_{\mec \in B_j} \states(\mec)) \leq \prob_{\mathcal{N}'}^{\scheduler'}(\globally G_j)$
\end{itemize}
and vice versa. This follows from the fact that the set of paths in $\mdp'$ (projected onto states of $\mathcal{N}$) are also present in $\mathcal{N}'$. The statement then follows. \qed

\end{proof}
\subsubsection{Transferring witnessing schedulers}
\dtmcFrequencies*
\begin{proof}
Let $\vect{x} \in \reals^\states$. We consider the linear equation system with $\state \in \states$:
\begin{align*}
\vect{x}(\state) &= \boldDelta(\state) + \sum_{u \in \states} (\vect{x}(u) - \boldMu(u)) \cdot \transMat(u, \state) \\
&= \boldDelta(\state) + \sum_{u \in \states} \vect{x}(u) \cdot \transMat(u, \state) - \sum_{u \in \states} \boldMu(u) \cdot \transMat(u, \state)
\end{align*}
Intuitively, the equations describe the expected frequencies of state $\state$ subtracted by the frequencies that are redirected to the copies of the states. Equivalently, the system can be written in vector-matrix notation as follows:
\begin{equation}
\label{eq:frequencies}
\vect{x} (\vect{I} - \transMat) = \boldDelta - \boldMu \cdot \transMat
\end{equation}
Observe that the steady-state distribution $\boldGamma$ of $\dtmc$ satisfies $\boldGamma (\vect{I} - \transMat) = 0$ and also $\boldGamma > \vect{0}$ since $\dtmc$ is strongly connected. Given a solution $\vect{x}^*$ to \eqref{eq:frequencies}, we know that $\vect{x}^* + r \cdot \boldGamma$ is also a solution to \eqref{eq:frequencies} for all $r \in \reals$. Thus, if there exists a solution, there also exists a solution $\vect{x}^*$ such that $\vect{x}^*(\state) > \boldMu(\state)$ for all states $\state$. Let $\boldLambda(\state) = \frac{\boldMu(\state)}{\vect{x}^*(s)}$, then $\boldLambda(\state) \in [0, 1]$. Setting $\boldMu(\state) = \boldLambda(\state) \cdot \vect{x}^*(\state)$ in \eqref{eq:frequencies} yields for all states $\state$:
\[
\vect{x}^*(\state) = \boldDelta(\state) + \sum_{u \in \states} \vect{x}^*(u) \cdot (1 - \boldLambda(u)) \cdot \transMat(u, \state) = \boldDelta(\state) + \sum_{u \in \states} \vect{x}^*(u) \cdot \transMat_{\dtmc_{\boldLambda}}(u, \state)  
\]
Considering the DTMC $\dtmc_{\boldLambda}$, the expected frequencies $\freq_{\dtmc_{\boldLambda}}(\state)$ are the unique solution of the following system with variables $\vect{z} \in \reals^\states$ and for all states $\state$:
\begin{align*}
\vect{z}(\state) &= \boldDelta(\state) + \sum_{u \in \states} \vect{z}(u) \cdot (1 - \boldLambda(u)) \cdot \transMat(u, \state)\\
\vect{z}(\state') &= \boldLambda(\state) \cdot \vect{z}(\state)
\end{align*}
Thus, $\vect{x}^*(\state) = \freq_{\dtmc_{\boldLambda}}(\state)$ and $\prob_{\dtmc_{\boldLambda}}(\eventually \state') = \freq_{\dtmc_{\boldLambda}}(\state') = \boldLambda(\state) \cdot \vect{x}^*(\state) = \boldMu(\state)$. Hence it remains to be shown that \eqref{eq:frequencies} has a solution. We apply Farkas' lemma (\Cref{lemma:farkas} \ref{lemma:farkas-2}) on \eqref{eq:frequencies} and show that the resulting system (shown below) cannot have a solution.
\begin{equation}
\label{eq:frequencies-farkas}
(\vect{I} - \transMat) \vect{y} = 0 \qquad \text{and} \qquad (\boldDelta - \boldMu \transMat)^\top \vect{y} \neq 0
\end{equation}
Since $\transMat$ is a stochastic matrix (all rows sum up to $1$), we have $(\vect{I} - \transMat) \vect{1} = 0$. Because $\dtmc$ is strongly connected, $\vect{I} - \transMat$ has rank $\card{\states} - 1$ and thus all solutions of $(\vect{I} - \transMat) \vect{y} = 0$ are multiples of $\vect{1}$. Let $\vect{y} = r \cdot \vect{1}$ for some $r \in \reals$. For all distributions $\boldGamma$ we have $\boldGamma^\top \vect{y} = r \cdot \boldGamma^\top \vect{1} = r$. In particular, we have $\boldDelta^\top \vect{y} = r$. Observe that $\boldMu \transMat$ is again a distribution and thus $\transMat^\top \boldMu^\top \vect{y} = r$. We then get $(\boldDelta - \boldMu \transMat)^\top \vect{y} = 0$ contradicting \eqref{eq:frequencies-farkas}. Thus, we can conclude that \eqref{eq:frequencies} has a solution.
\qed
\end{proof}

\section{Proofs for \Cref{section:mean-payoff}}
\label{appendix:mean-payoff}
\certificatesExistsCQ*
\begin{proof}~
\begin{enumerate}
\item[$\Rightarrow$:] Directly follows from \cite[Theorem~4.1]{brazdil_markov_2014} and \cite[Theorem~1]{kretinsky_ltl-constrained_2021}.
\item[$\Leftarrow$:] Let MDP $\mdp' = (\states \union \{\exit\}, \actions \union \{\tau\}, \init, \transMat')$ be the MDP obtained from $\mdp$ by adding a fresh state $\exit$ and transitions to $\exit$ under a fresh action $\tau$ in all states. Let $\states' = \states \union \{\exit\}$. In particular the enabled state action pairs in $\mdp'$ are $\SA' = \SA \union (\states' \times \{\tau\})$. For all $i \in [k]$ we define $\vect{r}_i' \in \rationals_{\geq0}^{\SA'}$ and set $\vect{r}_i'(\state, \action) = \vect{r}_i(\state, \action)$ if $(\state, \action) \in \SA$ and $\vect{r}_i'(\exit, \tau) = \min_{(\state, \action)} \vect{r}_i(\state, \action)$. Because the added transitions under action $\tau$ have lowest possible reward for each reward function, the existence of a strategy $\scheduler'$ for $\mdp'$ that satisfies the mean-payoff constraints implies the existence of satisfying strategy $\scheduler$ for $\mdp$. 

\medskip

\noindent Suppose we have $\vect{x}, \vect{y} \in \realsnn^\SA$ and $\vect{z} \in \realsnn^\states$ that satisfy the constraints. Then for all states $\state \in \states$ let:
We define $\vect{y}', \vect{x}' \in \realsnn^{\SA'}$ for all $(\state, \action) \in \SA'$ as follows:
\[
\vect{y}'(\state, \action) = 
\begin{cases}
	0, &\text{if } \state = \exit\\
    \vect{z}(\state),& \text{if } \state \neq \exit \land \action = \tau \\
    \vect{y}(\state, \action),              & \text{otherwise}
\end{cases}
\]
and
\[
\vect{x}'(\state, \action) = 
\begin{cases}
    \sum_{\state' \in \states} \vect{z}(\state'),& \text{if } \state = \exit \\
    0, & \text{if } \state \neq \exit \land \action = \tau\\
    \vect{x}(\state, \action),              & \text{otherwise}
\end{cases}
\]
We then have for all $\state \in \states$:
\begin{align*}
\initDistr(\state) + \sum_{(\state', \action') \in \SA'} \transMat'(\state', \action', \state) \cdot \vect{y}'(\state', \action') &= \initDistr(\state) + \sum_{(\state', \action') \in \SA} \transMat(\state', \action', \state) \cdot \vect{y}(\state', \action') \\
&= \sum_{\action \in \actions(\state)} \vect{y}(\state, \action) + \vect{x}(\state, \action) + \vect{z}(\state)\\
&= \sum_{\action \in \actions(\state) \union \{\tau\}} \vect{y}'(\state, \action) + \vect{x}'(\state, \action) 
\end{align*}
Further, we have $\initDistr(\exit) + \sum_{(\state', \action') \in \SA'} \transMat'(\state', \action', \exit) \cdot \vect{y}'(\state', \action') = \sum_{\state \in \states} \vect{z}(\state) = \vect{x}'(\exit, \tau)$. Further, we also have for all $\state \in \states$:
\begin{align*}
\sum_{(\state', \action') \in \SA'} \transMat'(\state', \action', \state) \cdot \vect{x}'(\state', \action') &= \sum_{(\state', \action') \in \SA} \transMat(\state', \action', \state) \cdot \vect{x}(\state', \action') \\
&= \sum_{\action \in \actions(\state)} \vect{x}(\state, \action)\\
&= \sum_{\action \in \actions(\state) \union \{\tau\}} \vect{x}'(\state, \action)
\end{align*}
Analogously, we have $\sum_{(\state', \action') \in \SA'} \transMat'(\state', \action', \exit) \cdot \vect{x}'(\state', \action') = \sum_{\state \in \states} \vect{z}(\state) = \vect{x}'(\exit, \tau)$. Lastly, for all $i \in [k]$ we have:
\[
\sum_{(\state, \action) \in \SA'} \vect{x}'(\state, \action) \cdot {\vect{r}_i}'(\state,\action) = \sum_{(\state, \action) \in \SA} \vect{x}(\state, \action) \cdot \vect{r}_i(\state,\action) + \sum_{\state \in \states} \vect{z}(\state) \cdot {\vect{r}_{\min}}(i, \state) \geq \lambda_i
\]
From \cite[Theorem~4.1]{brazdil_markov_2014} and \cite[Theorem~1]{kretinsky_ltl-constrained_2021} we then know that there exists a scheduler $\scheduler' \in \schedulers^{\mdp'}$ such that $\Conj_{i=1}^k \expectation[\mdp', \init]{\scheduler'}{\lrInf{\vect{r}'_i}} \geq \lambda_i$. However, as mentioned above, this also implies the existence of a scheduler $\scheduler \in \schedulers^{\mdp}$ such that $\Conj_{i=1}^k \expectation[\mdp, \init]{\scheduler}{\lrInf{\vect{r}_i}} \geq \lambda_i$.
\end{enumerate}
\qed
\end{proof}
\begin{remark}[Constraints in \cite{brazdil_markov_2014, kretinsky_ltl-constrained_2021}]
The variables $y_s$, constraint 2 and 3 in \cite[Theorem~1]{kretinsky_ltl-constrained_2021} are redundant (as also noted in the work). Let us briefly comment on this redundancy. Consider an MDP where each state has a copy state $\state'$.
Then $y_s$ describes the probability of reaching this copy state $\state'$ \cite{etessami_multi-objective_2008}. The sum $\sum_{s \in S} y_s$ equals $1$ because of the fact that $y_a$ corresponds to the expected frequencies of a scheduler that reaches the absorbing states almost surely \cite[Theorem~3.3.3]{kallenberg_linear_1983} (also see \cite[Remark~3.12]{jantsch_certificates_2022}). From \cite[Lemma~3.8]{jantsch_certificates_2022} it follows that $x_a$ is $0$ for state-action pairs not contained in MECs. Altogether, this makes $y_s$ and constraints 2 and 3 redundant.
\end{remark}
\certificatesForallDQ*
\begin{proof}
We prove the statement via application of Farkas' lemma to the linear system given in \cite[Theorem~4.1]{brazdil_markov_2014}. Observe that we have $\expectation[\mdp, \init]{\scheduler}{\lrSup{\vect{r}_i}} = -\expectation[\mdp, \init]{\scheduler}{\lrInf{-\vect{r}_i}}$ for all schedulers $\scheduler \in \schedulers^\mdp$. The statement can then be shown as follows:
\begin{align*}
\forall \scheduler \in \schedulers \centerdot \Disj_{i=1}^k \expectation{\scheduler}{\lrSup{\vect{r}_i}} \geq \lambda_i
&\iff \neg \exists \scheduler \in \schedulers \centerdot \Conj_{i=1}^k \expectation{\scheduler}{\lrSup{\vect{r}_i}} < \lambda_i\\
&\iff \neg \exists \scheduler \in \schedulers \centerdot \Conj_{i=1}^k -\expectation{\scheduler}{\lrInf{-\vect{r}_i}} < \lambda_i\\
&\iff \neg \exists \scheduler \in \schedulers \centerdot \Conj_{i=1}^k \expectation{\scheduler}{\lrInf{-\vect{r}_i}} > -\lambda_i
\end{align*}
By \cite[Theorem~4.1]{brazdil_markov_2014} and the remark that in \cite{kretinsky_ltl-constrained_2021} that the constraints in \cite[Theorem~4.1]{brazdil_markov_2014} are partly redundant, the existence of a scheduler $\scheduler \in \schedulers$ that satisfies $\Conj_{i=1}^k \expectation{\scheduler}{\lrInf{-\vect{r}_i}} > -\lambda_i$ is equivalent of the satisfiability of the following system of linear equations:
\begin{align*}
\initDistr(\state) + \sum_{(\state', \action') \in \SA} \transMat(\state', \action', \state) \cdot \vect{y}(\state', \action') &= \sum_{\action \in \actions(\state)} \vect{y}(\state, \action) + \vect{x}(\state, \action) &\text{for all } \state \in \states\\
\sum_{(\state', \action') \in \SA} \transMat(\state', \action', \state) \cdot \vect{x}(\state', \action') &= \sum_{\action \in \actions(\state)} \vect{x}(\state, \action) &\text{for all } \state \in \states\\
\sum_{(\state, \action) \in \SA} \vect{x}(\state, \action) \cdot \bigl(- \vect{r}_i(\state,\action) \bigr)&\geq - \lambda_i + \varepsilon &\text{for all } i \in [k]
\end{align*}
where $\vect{x}, \vect{y} \in \realsnn^\SA$ and $\varepsilon > 0$. Equivalently, we can write the system in matrix vector notation as follows:
\begin{equation}
\begin{aligned}
(\vect{D} - \transMat)^\top \vect{y} + \vect{D}^\top \vect{x} &= \initDistr \\
(\vect{D} - \transMat)^\top\vect{x} &= \vect{0}\\
\vect{R}^\top \vect{x} + \vect{1} \cdot \varepsilon &\leq \boldLambda
\end{aligned}
\label{eq:non-scaled-exists}
\end{equation}
Here, $\vect{D} \in \{0, 1\}^{\SA \times \states}$ is defined as $\vect{D}((\state, \action), \state) = 1$ for all $(\state, \action) \in \SA$ and $0$ otherwise. In order to derive certificates and conditions for the universally quantified queries, we instead consider the following system:
\begin{equation}
\begin{aligned}
(\vect{D} - \transMat)^\top \vect{y} + \vect{D}^\top \vect{x} &= \initDistr \cdot \gamma \\
(\vect{D} - \transMat)^\top\vect{x} &= \vect{0}\\
\vect{R}^\top \vect{x} + \vect{1} \cdot \varepsilon &\leq \boldLambda \cdot \gamma\\
\gamma &\geq \varepsilon
\end{aligned}
\label{eq:scaled-exists}
\end{equation}
where again $\vect{x}, \vect{y} \in \realsnn^\SA$ and $\gamma, \varepsilon > 0$. Let us now show the equivalence in terms of satisfiability of those two systems.

\medskip

\noindent \eqref{eq:non-scaled-exists} $\Rightarrow$ \eqref{eq:scaled-exists}: Let $\vect{x}, \vect{y} \in \realsnn^\SA$ and $\varepsilon > 0$ be a solution of \eqref{eq:non-scaled-exists}. We can simply choose $\gamma = 1$ and choose $\varepsilon' = \min\{\gamma, \varepsilon\}$. Then $\vect{x}, \vect{y}$, $\varepsilon'$ and $\gamma$ are a solution to \eqref{eq:scaled-exists}.

\noindent \eqref{eq:non-scaled-exists} $\Leftarrow$ \eqref{eq:scaled-exists}: Let $\vect{x}, \vect{y} \in \realsnn^\SA$ and $\gamma, \varepsilon > 0$ be a solution of \eqref{eq:scaled-exists}. Let $\vect{x}' = \vect{x} \cdot 1 / \gamma$, $\vect{y}' = \vect{y} \cdot 1 / \gamma$ and $\varepsilon' = \varepsilon \cdot 1 / \gamma$. Clearly, we have $(\vect{D} - \transMat)^\top\vect{x}' = \vect{0}$. Further, we have
\[
(\vect{D} - \transMat)^\top \vect{y}' + \vect{D}^\top\vect{x}' = \frac{1}{\gamma} \bigl((\vect{D} - \transMat)^\top \vect{y} + \vect{D}^\top\vect{x} \bigr) = \frac{1}{\gamma} \cdot \initDistr \cdot \gamma = \initDistr,
\]
and
\[
\vect{R}^\top \vect{x'} + \vect{1} \cdot \varepsilon' = \frac{1}{\gamma} \bigl( \vect{R}^\top \vect{x} + \vect{1} \cdot \varepsilon \bigr) \leq \frac{1}{\gamma} \bigl( \boldLambda \cdot \gamma \bigr) \leq \boldLambda
\]
Hence $\vect{x}', \vect{y}'$ and $\varepsilon'$ are a solution to \eqref{eq:non-scaled-exists}.
\medskip

We are concerned with the non-existence of a scheduler and thus equivalently the unsatisfiability of \eqref{eq:scaled-exists}. We can write \eqref{eq:scaled-exists} as follows:
\begin{equation*}
{
\setlength{\arraycolsep}{4pt}
\begin{pmatrix}
(\vect{D} - \transMat)^\top & \vect{D}^\top & -\initDistr & \vect{0}\\
-(\vect{D} - \transMat)^\top & -\vect{D}^\top & \initDistr & \vect{0}\\
\vect{0} & (\vect{D} - \transMat)^\top & \vect{0} & \vect{0}\\
\vect{0} & -(\vect{D} - \transMat)^\top & \vect{0} & \vect{0}\\
\vect{0} & - \vect{R}^\top & \boldLambda & -\vect{1}\\
0 & 0 & 1 & -1
\end{pmatrix}
\begin{pmatrix}
\vect{y}\\
\vect{x}\\
\gamma\\
\varepsilon
\end{pmatrix}
\geq
\begin{pmatrix}
\vect{0}\\
\vect{0}\\
\vect{0}\\
\vect{0}\\
\vect{0}\\
0
\end{pmatrix},
\qquad
\begin{pmatrix}
\vect{0}\\
\vect{0}\\
\vect{0}\\
0\\
-1
\end{pmatrix}^\top
\begin{pmatrix}
\vect{y}\\
\vect{x}\\
\gamma\\
\varepsilon
\end{pmatrix}
<
0
}
\end{equation*}
We then apply Farkas' lemma (\Cref{lemma:farkas} \ref{lemma:farkas-1}), yielding the following system:
\begin{equation*}
{
\setlength{\arraycolsep}{4pt}
\begin{pmatrix}
\vect{D} - \transMat & -(\vect{D} - \transMat) & \vect{0} & \vect{0} & \vect{0} & 0 \\
\vect{D} & -\vect{D} & \vect{D} - \transMat & -(\vect{D} - \transMat) & - \vect{R} & 0 \\
- \initDistr^\top & \initDistr^\top & \vect{0} & \vect{0} & \boldLambda^\top & 1\\
\vect{0} & \vect{0} & \vect{0} & \vect{0} & -\vect{1}^\top & 1
\end{pmatrix}
\begin{pmatrix}
\vect{g}_{+}\\
\vect{g}_{-}\\
\vect{b}_{+}\\
\vect{b}_{-}\\
\vect{z}\\
\beta
\end{pmatrix}
\leq
\begin{pmatrix}
\vect{0}\\
\vect{0}\\
0\\
-1
\end{pmatrix}
}
\end{equation*}
where $\vect{g}_{+}, \vect{g}_{-}, \vect{b}_{+}, \vect{b}_{-} \in \realsnn^\states$, $\vect{z} \in \realsnn^{[k]}$ and $\beta \in \realsnn$. We can further simplify inequalities by defining $\vect{g} \coloneqq \vect{g}_{+} - \vect{g}_{-}$ and $\vect{b} \coloneqq \vect{b}_{+} - \vect{b}_{-}$, yielding:
\begin{equation}
\begin{aligned}
(\vect{D} - \transMat) \vect{g} &\leq \vect{0} \\
\vect{D} \vect{g} + (\vect{D} - \transMat) \vect{b} &\leq \vect{R} \vect{z}\\
\initDistr^\top \vect{g} &\geq \boldLambda^\top \vect{z} + \beta\\
\vect{1}^\top \vect{z} &\geq 1 + \beta
\end{aligned}
\label{eq:farkas-results}
\end{equation}
Observe that any solution of \eqref{eq:farkas-results} where $\beta > 0$ is also a solution to \eqref{eq:farkas-results} when setting $\beta = 0$ because $\boldLambda^\top \vect{z} + \beta \geq \boldLambda^\top \vect{z}$ and $1 + \beta \geq 1$. Hence we can assume $\beta = 0$ and obtain the following conditions:
\begin{align*}
(\vect{D} - \transMat) \vect{g} &\leq \vect{0} \\
\vect{D} \vect{g} + (\vect{D} - \transMat) \vect{b} &\leq \vect{R} \vect{z}\\
\initDistr^\top \vect{g} &\geq \boldLambda^\top \vect{z}\\
\vect{1}^\top \vect{z} &\geq 1
\end{align*}
or equivalently written out explicitly:
\begin{align*}
\vect{g}(\state) &\leq \sum_{\state' \in \states} \transMat(\state, \action, \state') \cdot \vect{g}(\state') &\text{for all } (\state, \action) \in \SA\\
\vect{g}(\state) + \vect{b}(\state) &\leq \sum_{\state' \in \states} \transMat(\state, \action, \state') \cdot \vect{b}(\state') + \sum_{i=1}^k \vect{r}_i(\state, \action) \cdot \vect{z}(i) &\text{for all } (\state, \action) \in \SA\\
\vect{g}(\init) &\geq \sum_{i=1}^k \lambda_i \cdot \vect{z}(i) \qquad \sum_{i=1}^k \vect{z}(i) \geq 1 &
\end{align*}
Observe that if $\sum_{i=1}^k \vect{z}(i) > 1$, then we can simply rescale $\vect{x}$, $\vect{y}$ and $\vect{z}$ by $1 / \sum_{i=1}^k \vect{z}(i)$. Hence, we replace the constraint $\sum_{i=1}^k \vect{z}(i) \geq 1$ with $\sum_{i=1}^k \vect{z}(i) = 1$. Lastly, from \cite[Theorem~4.2.2]{kallenberg_linear_1983} we can conclude that imposing $\vect{g}(\state) \geq \sum_{i=1}^k \vect{z}(i) \cdot \vect{r}_{\min}(i)$ does not change the satisfaction of the system.
\qed
\end{proof}

\begin{lemma}
Let $\mdp =  (\states, \actions, \init, \transMat)$ be an MDP and $\mdp' = (\states' \union \{\exit\}, \actions, \init, \transMat')$ be an induced subsystem of $\mdp$. 
\begin{enumerate}[label={(\roman*)}]
\item If $(\vect{x}', \vect{y}', \vect{z}') \in \mathcal{H}_{\mdp'}^{\mathsf{MP}}(\boldLambda)$, then there exists $(\vect{x}, \vect{y}, \vect{z}) \in \mathcal{H}_{\mdp}^{\mathsf{MP}}(\boldLambda)$ such that $\stateSupp{\vect{x}, \vect{y}} \subseteq \stateSupp{\vect{x}', \vect{y}'}$.\label{theorem:certificate-monotonicity-exists-mp}
\item If $(\vect{g}', \vect{b}', \vect{z}') \in \mathcal{F}_{\mdp'}^{\mathsf{MP}}(\boldLambda)$, then there exists $(\vect{g}, \vect{b}, \vect{z}) \in \mathcal{F}_{\mdp}^{\mathsf{MP}}(\boldLambda)$ such that $\supp{\vect{g} - {\vect{R}_{\min}}\vect{z}} \subseteq \supp{\vect{g}' - {{\vect{R}'}_{\min}}\vect{z}'}$.\label{theorem:certificate-monotonicity-forall-mp}
\end{enumerate}
\label{theorem:certificate-monotonicity}
\end{lemma}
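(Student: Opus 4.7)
I would prove both parts by explicit construction, extending the given subsystem certificates to full-MDP certificates whose values outside of $\states'$ are forced to the ``minimal'' value permitted by the conditions, which makes the support inclusion automatic. In both cases the key technical identity is $\transMat'(\state,\action,\exit) = \sum_{\state'' \in \states \setminus \states'} \transMat(\state,\action,\state'')$, which allows any use of $\vect{b}'(\exit)$, $\vect{g}'(\exit)$, or the $\vect{y}'$-flow into $\exit$ in the subsystem inequalities to be reinterpreted as a sum over states in $\states \setminus \states'$ in $\mdp$.

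For part~\ref{theorem:certificate-monotonicity-exists-mp}, given $(\vect{x}',\vect{y}',\vect{z}') \in \mathcal{H}_{\mdp'}^{\mathsf{MP}}(\boldLambda)$, my plan is to set $\vect{x}(\state,\action){=}\vect{x}'(\state,\action)$ and $\vect{y}(\state,\action){=}\vect{y}'(\state,\action)$ for $(\state,\action)\in\SA$ with $\state\in\states'$, both zero otherwise. For $\vect{z}$, I keep $\vect{z}(\state){=}\vect{z}'(\state)$ on $\states'$, and on $\states\setminus\states'$ I absorb the flow that the subsystem redirects to $\exit$, namely
\[
\vect{z}(\state) \,=\, \initDistr(\state) + \sum_{\state' \in \states',\,\action' \in \actions(\state')} \transMat(\state',\action',\state)\cdot \vect{y}'(\state',\action').
\]
The three defining constraints of $\mathcal{H}_\mdp^{\mathsf{MP}}(\boldLambda)$ then follow: the $\vect{y}$-flow balance at $\states'$ is a direct rewriting of the subsystem balance, and at $\states\setminus\states'$ it is the defining equation for $\vect{z}$; the recurrent $\vect{x}$-balance uses the fact, recalled in the remark after Lemma~\ref{lemma:certificates-mp-exists}, that $\vect{x}'$ is supported on the MECs of $\mdp'$ and thus its support's actions cannot leave $\states'$; and the reward inequality transfers after observing that the subsystem $\exit$-flow balance yields $\sum_{\state \in \states \setminus \states'} \vect{z}(\state) = \vect{z}'(\exit) + \vect{x}'(\exit,\action_\exit)$ and that $\vect{r}'_i(\exit,\action_\exit) = \vect{r}_{\min}(i)$. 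The support condition $\stateSupp{\vect{x},\vect{y}} \subseteq \stateSupp{\vect{x}',\vect{y}'}$ is then immediate.

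For part~\ref{theorem:certificate-monotonicity-forall-mp}, the starting point is the observation that the bias inequality at the $\exit$ self-loop $(\exit,\action_\exit)$ collapses to $\vect{g}'(\exit) \leq \sum_i \vect{z}'(i)\vect{r}_{\min}(i)$, which combined with the gain lower-bound constraint forces $\vect{g}'(\exit) = \sum_i \vect{z}'(i)\vect{r}_{\min}(i)$. I then set $\vect{z}=\vect{z}'$, $\vect{g}(\state)=\vect{g}'(\state)$ on $\states'$ and $\vect{g}(\state)=\sum_i\vect{z}(i)\vect{r}_{\min}(i)$ on $\states\setminus\states'$. This yields the claimed support inclusion and verifies all constraints that involve only $\vect{g}$ and $\vect{z}$. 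The remaining task is to construct a bias $\vect{b}$ satisfying the bias inequalities of $\mathcal{F}_\mdp^{\mathsf{MP}}$. Taking $\vect{b}(\state)=\vect{b}'(\state)$ on $\states'$ and $\vect{b}(\state)=C$ on $\states\setminus\states'$ for a single scalar $C$, a direct calculation shows that the inequalities at $\state\in\states'$ transfer from the subsystem provided $C\geq\vect{b}'(\exit)$, while those at $\state\in\states\setminus\states'$ reduce to $C\leq \min_{\state' \in \states'}\vect{b}'(\state')$ on the relevant supports. The anticipated main obstacle is ensuring this interval is non-empty. I plan to resolve it by a preliminary normalization of $\vect{b}'$ inside $\mathcal{F}_{\mdp'}^{\mathsf{MP}}(\boldLambda)$, driving $\vect{b}'(\exit)$ down to at most $\min_{\state \in \states'}\vect{b}'(\state)$ by exploiting that $\vect{b}'(\exit)$ appears only in bias inequalities at $(\state,\action)$ with $\transMat'(\state,\action,\exit)>0$. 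A backup, if this normalization is blocked by tight constraints, is to allow $\vect{b}(\state)$ to vary across $\states\setminus\states'$ and invoke the mean-payoff analogue of Theorem~\ref{theorem:subsystems-and-lower-bounds} to guarantee $\mathcal{F}_\mdp^{\mathsf{MP}}(\boldLambda)\neq\emptyset$, then intersect with the constraints pinning the support.
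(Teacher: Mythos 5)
Your construction for part \ref{theorem:certificate-monotonicity-exists-mp} is essentially the paper's: the same $\vect{x}$, $\vect{y}$, and $\vect{z}$ (your extra $\initDistr(\state)$ term on $\states\setminus\states'$ is zero since $\init\in\states'$), and your observation that the support of $\vect{x}'$ cannot exit $\states'$ is the same fact the paper derives directly from the recurrent balance equation at $\exit$ (the $\exit$-self-loop contributions cancel, forcing the external $\exit$-inflow to vanish). That part is sound.

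Part \ref{theorem:certificate-monotonicity-forall-mp} has a genuine gap. Your $\vect{g}$, $\vect{z}$, the identity $\vect{g}'(\exit)=\sum_i\vect{z}'(i)\vect{r}_{\min}(i)$, and the verification of the gain inequalities all match the paper, but your plan for $\vect{b}$ does not go through. A single constant $C$ on $\states\setminus\states'$ requires simultaneously $C\geq\vect{b}'(\exit)$ (to transfer the bias inequalities at states inside $\states'$) and $C$ bounded above by weighted averages of $\vect{b}'$ over $\states'$ (to satisfy them at states outside $\states'$), and there is no reason this interval is nonempty. Your proposed normalization — ``driving $\vect{b}'(\exit)$ down'' — moves in the wrong direction: $\vect{b}'(\exit)$ appears only on the right-hand sides of bias inequalities at $(\state,\action)$ with $\transMat'(\state,\action,\exit)>0$, so decreasing it tightens those constraints and can push $(\vect{g}',\vect{b}',\vect{z}')$ out of $\mathcal{F}_{\mdp'}^{\mathsf{MP}}(\boldLambda)$. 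Shifting $\vect{b}'$ by a constant does not help either, since it shifts both sides and leaves the gap $\vect{b}'(\exit)-\min_{\state\in\states'}\vect{b}'(\state)$ unchanged. Your backup also fails to close the argument: establishing $\mathcal{F}_{\mdp}^{\mathsf{MP}}(\boldLambda)\neq\emptyset$ via the monotonicity theorem yields some certificate, but nothing forces its $\vect{g}$- and $\vect{z}$-components to equal the ones you have pinned to obtain the support inclusion. The paper instead proves the existence of a suitable $\vect{b}$ for the \emph{fixed} $\vect{g}$ and $\vect{z}$ by applying Farkas' lemma (\Cref{lemma:farkas}) to the system of bias inequalities alone: the Farkas alternative is a recurrent flow $\vect{x}\in\realsnn^\SA$, necessarily supported on MECs, and a strict violation would produce a memoryless scheduler that inside some MEC achieves an expected $\lrSup{\cdot}$-value strictly below $\vect{g}$ there, contradicting the fact (via \cite[Theorem~4.2.2]{kallenberg_linear_1983}) that the constructed $\vect{g}$ lower-bounds the optimal expected mean-payoff on $\mdp$. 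That optimality argument, rather than a closed-form formula for $\vect{b}$, is the ingredient missing from your proposal.
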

\begin{proof}
\noindent\textbf{Proof of \ref{theorem:certificate-monotonicity-exists-mp}}:
Let $\gamma_i \coloneqq \min_{(\state, \action) \in \SA} \vect{r}_i(\state, \action)$ for all $i \in [k]$. Suppose we have $(\vect{x}', \vect{y}', \vect{z}') \in \mathcal{H}_{\mdp'}^{\mathsf{MP}}$. Then for all $\state \in \states' \union \{\exit\}$ we have:
\begin{align}
\initDistr'(\state) + \sum_{(\state', \action') \in \SA'} \transMat'(\state', \action', \state) \cdot \vect{y}'(\state', \action') &= \sum_{\action \in \actions'(\state)} \vect{y}'(\state, \action) + \vect{x}'(\state, \action) + \vect{z}'(\state) \label{eq:monotonicity-proof-transient}\\
\sum_{(\state', \action') \in \SA'} \transMat'(\state', \action', \state) \cdot \vect{x}'(\state', \action') &= \sum_{\action \in \actions'(\state)} \vect{x}'(\state, \action) \label{eq:monotonicity-proof-recurrent}
\end{align}
and for all $i \in [k]$ we have:
\begin{align}
\sum_{(\state, \action) \in \SA'} \vect{x}'(\state, \action) \cdot \vect{r}'_i(\state,\action) + \sum_{\state \in \states' \union \{\exit\}} \vect{z}'(\state) \cdot \gamma_i &\geq \lambda_i \label{eq:monotonicity-proof-spec}
\end{align}
Let us define $\vect{x}, \vect{y} \in \realsnn^\SA$ as follows:
\[
\vect{x}(\state, \action) = 
\begin{cases}
    \vect{x}'(\state, \action),& \text{if } \state \in \states' \\
    0,              & \text{otherwise}
\end{cases}
\qquad
\vect{y}(\state, \action) = 
\begin{cases}
    \vect{y}'(\state, \action),& \text{if } \state \in \states' \\
    0,              & \text{otherwise}
\end{cases}
\]
Further, we define $\vect{z} \in \reals^\states$ for all $\state \in \states$ as follows:
\[
\vect{z}(\state) = 
\begin{cases}
    \vect{z}'(\state), & \text{if } \state \in \states' \\
    \sum_{\state' \in \states'} \sum_{\action' \in \actions(\state')} \transMat(\state', \action' , \state) \cdot \vect{y}(\state', \action'),              & \text{otherwise}
\end{cases}
\] 

By construction, we have $\stateSupp{\vect{x}} \union \stateSupp{\vect{y}} \subseteq \stateSupp{\vect{x}'} \union \stateSupp{\vect{y}'}$. Now it remains to be shown that $(\vect{x}, \vect{y}, \vect{z}) \in \mathcal{H}_{\mdp}^{\mathsf{MP}}$. To this end, we observe that $\vect{x}'(\state, \action) = 0$ for all states $\state \in \states'$ and $\action \in \actions(\state)$ if $\transMat'(\state, \action, \exit) > 0$ because otherwise \eqref{eq:monotonicity-proof-recurrent} would not be satisfied. Hence $\vect{x}(\state, \action) = 0$ if $\transMat(\state, \action, \state') > \transMat'(\state, \action, \state')$ for some $\state' \in \states$. We then get for all states $\state \in \states \setminus \states'$:
\begin{align*}
\sum_{(\state', \action') \in \SA} \transMat(\state', \action', \state) \cdot \vect{x}(\state', \action') = 0 = \sum_{\action \in \actions(\state)} \vect{x}(\state, \action)
\end{align*}
For all states $\state \in \states'$ we have:
\begin{align*}
\sum_{(\state', \action') \in \SA} \transMat(\state', \action', \state) \cdot \vect{x}(\state', \action') &= \sum_{(\state', \action') \in \SA'} \transMat(\state', \action', \state) \cdot \vect{x}(\state', \action')\\
&= \sum_{(\state', \action') \in \SA'} \transMat'(\state', \action', \state) \cdot \vect{x}(\state', \action')\\
\overset{\eqref{eq:monotonicity-proof-recurrent}}&{=} \sum_{\action \in \actions(\state)} \vect{x}(\state, \action)
\end{align*}
So in total, we have $\sum_{(\state', \action') \in \SA} \transMat(\state', \action', \state) = \sum_{\action \in \actions(\state)} \vect{x}(\state, \action)$ for all $\state \in \states$. Further, for all $\state \in \states'$ we have:
\begin{align*}
&\initDistr(\state) + \sum_{(\state', \action') \in \SA} \transMat(\state', \action', \state) \cdot \vect{y}(\state', \action')\\
={}&\initDistr(\state) + \sum_{(\state', \action') \in \SA'} \transMat'(\state', \action', \state) \cdot \vect{y}'(\state', \action')\\
\overset{\eqref{eq:monotonicity-proof-transient}}{=}{}&\sum_{\action \in \actions(\state)} \vect{y}'(\state, \action) + \vect{x}'(\state, \action) + \vect{z}'(\state)\\
={}&\sum_{\action \in \actions(\state)} \vect{y}(\state, \action) + \vect{x}(\state, \action) + \vect{z}(\state)
\end{align*}
For all $\state \in \states \setminus \states'$ we have:
\begin{align*}
&\initDistr(\state) + \sum_{(\state', \action') \in \SA} \transMat(\state', \action', \state) \cdot \vect{y}(\state', \action')\\
={}&\initDistr(\state) + \sum_{\state' \in \states'} \sum_{\action' \in \actions(\state')} \transMat(\state', \action', \state) \cdot \vect{y}(\state', \action')\\
={}&\vect{z}(\state)\\
={}&\sum_{\action \in \actions(\state)} \vect{y}(\state, \action) + \vect{x}(\state, \action) + \vect{z}(\state)
\end{align*}
Lastly, for all $i \in [k]$ we have:
\begin{align*}
&\sum_{(\state, \action) \in \SA} \vect{x}(\state, \action) \cdot \vect{r}_i(\state,\action) + \sum_{\state \in \states} \vect{z}(\state) \cdot \gamma_i\\
={}&\sum_{\state \in \states'} \sum_{\action \in \actions(\state)} \vect{x}'(\state, \action) \cdot \vect{r}_i'(\state, \action) + \\
&\sum_{\state \in \states \setminus \states'}\sum_{\state' \in \states'} \sum_{\action' \in \actions(\state')} \transMat(\state', \action' , \state) \cdot \vect{y}(\state', \action') \cdot \gamma_i +\\
&\sum_{\state \in \states'} \vect{z}'(\state) \cdot \gamma_i\\
={}&\sum_{\state \in \states'} \sum_{\action \in \actions(\state)} \vect{x}'(\state, \action) \cdot \vect{r}_i'(\state, \action) + \\
&\sum_{\state' \in \states'} \sum_{\action' \in \actions(\state')} \transMat'(\state', \action' , \exit) \cdot \vect{y}(\state', \action') \cdot \gamma_i +\\
&\sum_{\state \in \states'} \vect{z}'(\state) \cdot \gamma_i\\
={}&\sum_{(\state, \action) \in \SA'} \vect{x}'(\state, \action) \cdot \vect{r}_i'(\state, \action) + \sum_{\states' \union \{\exit\}} \vect{z}'(\state) \cdot \gamma_i \geq \lambda_i
\end{align*}
In total, we can therefore conclude that $(\vect{x}, \vect{y}, \vect{z}) \in \mathcal{H}_{\mdp}^{\mathsf{MP}}$.

\bigskip

\noindent\textbf{Proof of \ref{theorem:certificate-monotonicity-forall-mp}}: Let $(\vect{g}', \vect{b}', \vect{z}') \in \mathcal{F}_{\mdp'}^{\mathsf{MP}}(\boldLambda)$. Then the following holds for all $(\state, \action) \in \SA'$:
\begin{align}
\vect{g}'(\state) &\leq \sum_{\state' \in \states' \union \{\exit\}} \transMat'(\state, \action, \state') \cdot \vect{g}'(\state')\\
\vect{g}'(\state) + \vect{b}'(\state) &\leq \sum_{\state' \in \states' \union \{\exit\}} \transMat'(\state, \action, \state') \cdot 
\vect{b}'(\state') + \sum_{i=1}^k \vect{r}'_i(\state, \action) \cdot \vect{z}'(i) \label{eq:proof-monotonicity-forall-bias-gain}
\end{align}
and for all $\state \in \states' \union \{\exit\}$:
\begin{align*}
\vect{g}'(\state) \geq \sum_{i=1}^k \vect{z}'(i) \cdot {\vect{r}_{\min}}(i)
\end{align*}
and for all $i \in [k]$:
\begin{align}
\vect{g}'(\init) &\geq \sum_{i=1}^k \lambda_i \cdot \vect{z}'(i)
\end{align}
Let us define $\vect{z} = \vect{z}'$ and $\vect{g} \in \reals^\states$ as follows:
\[
\vect{g}(\state) = 
\begin{cases}
    \vect{g}'(\state),& \text{if } \state \in \states' \\
    \sum_{i=1}^k \vect{z}'(i) \cdot {\vect{r}_{\min}}(i),              & \text{otherwise}
\end{cases}
\]
By construction we have $\supp{\vect{g} - {\vect{R}_{\min}}\vect{z}} \subseteq \supp{\vect{g}' - {{\vect{R}'}_{\min}}\vect{z}'}$. Analogously, we now need to show that there exists a $\vect{b} \in \reals^\states$ such that $(\vect{g}, \vect{b}, \vect{z}) \in \dqPoly{\mdp}{\geq}(\boldLambda)$. 

\medskip

\noindent For the sake of readability, let us define $\vect{c} \in \realsnn^\SA$ as $\vect{c}(\state, \action) = \sum_{i=1}^k \vect{r}_i(\state, \action) \cdot \vect{z}(i)$ for all $(\state, \action) \in \SA$ and $\vect{c}' \in \realsnn^{\SA'}$ as $\vect{c}'(\state, \action) = \sum_{i=1}^k {\vect{r}_i}'(\state, \action) \cdot \vect{z}(i)$ for all $(\state, \action) \in \SA'$.

We observe that we have $\vect{g}'(\exit) \geq \sum_{i=1}^k \vect{z}'(i) \cdot {\vect{r}_{\min}}'(i) = \sum_{i=1}^k \vect{z}(i) \cdot {\vect{r}_{\min}}(i)$. From \eqref{eq:proof-monotonicity-forall-bias-gain}, we then have for all $\action \in \actions'(\exit)$ that
\[
\vect{g}'(\exit) + \vect{b}'(\exit) \leq \vect{b}'(\exit) + \sum_{i=1}^k \vect{c}'(\exit, \action) = \vect{b}'(\exit) + \sum_{i=1}^k \vect{z}(i) \cdot {\vect{r}_{\min}}(i)
\]
So in total we have $\vect{g}'(\exit) = \sum_{i=1}^k \vect{z}(i) \cdot {\vect{r}_{\min}}(i)$. Then, for all states $\state \in \states'$ and $\action \in \actions(\state)$ we get 
\begin{align*}
\vect{g}(\state) = \vect{g}'(\state) &\leq \sum_{\state' \in \states' \union \{\exit\}} \transMat'(\state, \action, \state') \cdot \vect{g}'(\state')\\
&= \bigl(\sum_{\state' \in \states'} \transMat'(\state, \action, \state') \cdot \vect{g}'(\state') \bigr) + \transMat'(\state, \action, \exit) \cdot \vect{g}'(\exit)\\
&= \bigl(\sum_{\state' \in \states'} \transMat'(\state, \action, \state') \cdot \vect{g}'(\state') \bigr) + \transMat'(\state, \action, \exit) \cdot (\sum_{i=1}^k \vect{z}(i) \cdot {\vect{r}_{\min}}(i))\\
&= \bigl(\sum_{\state' \in \states'} \transMat'(\state, \action, \state') \cdot \vect{g}'(\state') \bigr) + \sum_{\state' \in \states \setminus \states'} \transMat(\state, \action, \state') \cdot \vect{g}(\state')\\
&= \sum_{\state' \in \states} \transMat(\state, \action, \state') \cdot \vect{g}(\state')
\end{align*}
Because $\vect{g}(\state) = \sum_{i=1}^k \vect{z}'(i) \cdot {\vect{r}_{\min}}'(i)$ for $\state \in \states \setminus \states'$ and the $\vect{g}(\state') \geq \sum_{i=1}^k \vect{z}'(i) \cdot {\vect{r}_{\min}}'(i)$ for all $\state' \in \states'$, we can conclude that $\vect{g}(\state) \leq \sum_{\state' \in \states} \transMat(\state, \action, \state') \cdot \vect{g}(\state')$ for all $(\state, \action) \in \SA$. Further, observe that because $\init \in \states'$ we have $\vect{g}(\init) = \vect{g}'(\init) \geq \sum_{i=1}^k \lambda_i \cdot \vect{z}(i)$. 

Now it only remains to be shown that there exists a $\vect{b} \in \reals^\states$ such that for all $(\state, \action) \in \SA$ we have:
\begin{equation}
\vect{g}(\state) + \vect{b}(\state) \leq \sum_{\state' \in \states} \transMat(\state, \action, \state') \cdot \vect{b}(\state') + \vect{c}(\state, \action)
\label{eq:proof-monotonicity-gain-bias-original}
\end{equation}
For the sake of contradiction, suppose this was not the case. Then, by \Cref{lemma:farkas} \ref{lemma:farkas-1} there exists $\vect{x} \in \realsnn^\SA$ such that
\begin{align*}
\sum_{(\state' \action') \in \SA} \transMat(\state', \action', \state) \cdot \vect{x}(\state', \action') &= \sum_{\action \in \actions(\state)} \vect{x}(\state, \action) &\text{for all } \state \in \states\\
\sum_{(\state, \action) \in \SA} \vect{c}(\state, \action) \cdot \vect{x}(\state, \action) &< \sum_{(\state, \action) \in \SA} \vect{x}(\state, \action) \cdot \vect{g}(\state)&
\end{align*}
We note that the first equation describes a recurrent flow. In particular, $\vect{x}(\state, \action) = 0$ if $(\state, \action)$ is not contained in a MEC \cite[Lemma~3.8]{jantsch_certificates_2022}. This allows us to write the second inequality as follows:
\begin{align*}
\sum_{\mec \in \MECS(\mdp)} \sum_{(\state, \action) \in \mec} \vect{c}(\state, \action) \cdot \vect{x}(\state, \action) &= \sum_{(\state, \action) \in \SA} \vect{c}(\state, \action) \cdot \vect{x}(\state, \action)\\
&< \sum_{(\state, \action) \in \SA} \vect{x}(\state, \action) \cdot \vect{g}(\state)\\
&= \sum_{\mec \in \MECS(\mdp)} \sum_{(\state, \action) \in \mec} \vect{x}(\state, \action) \cdot \vect{g}(\state)
\end{align*}
In particular, there exists a MEC $\mec \in \MECS(\mdp)$ such that
\[
\sum_{(\state, \action) \in \mec} \vect{c}(\state, \action) \cdot \vect{x}(\state, \action) < \sum_{(\state, \action) \in \mec} \vect{x}(\state, \action) \cdot \vect{g}(\state)
\]
From \cite[Theorem~4.2.2]{kallenberg_linear_1983}, we know that $\vect{g}(\state) \leq \inf_{\scheduler \in \schedulers^{\mdp'}} \expectation[\mdp',\state]{\scheduler}{\lrSup{\vect{c}'}} \leq \inf_{\scheduler \in \schedulers^\mdp} \expectation[\mdp,\state]{\scheduler}{\lrSup{\vect{c}}}$ for all $\state \in \states'$. 
We write $\scheduler^* \in \schedulers^\mdp$ to denote such optimal scheduler for $\mdp$. Further, observe that all states in a MEC have the same optimal expected mean-payoff. Let us denote this common value by $\nu$, i.e. $\nu = \expectation[\mdp,\state]{\scheduler^*}{\lrSup{\vect{c}}}$ for some $\state \in \states(\mec)$. Then we get:
\[
\sum_{(\state, \action) \in \mec} \vect{c}(\state, \action) \cdot \vect{x}(\state, \action) < \sum_{(\state, \action) \in \mec} \expectation[\mdp,\state]{\scheduler^*}{\lrSup{\vect{c}}} \cdot \vect{x}(\state, \action) = \nu \cdot \sum_{(\state, \action) \in \mec} \vect{x}(\state, \action)
\]
However, this implies that there exists a scheduler $\scheduler \in \mSchedulers^\mdp$ that achieves a strictly lower value inside $\mec$ than $\scheduler^*$. More precisely, inside $\mec$ the scheduler $\scheduler$ ensures that states $\state$ with $\vect{x}(\state, \action) > 0$ for some $(\state, \action) \in \mec$ are reached almost surely and then switches to the strategy $\scheduler_\mec \in \mSchedulers^\mdp$ with $\scheduler_\mec(\state, \action) = \vect{x}(\state, \action) / \sum_{\action \in \actions(\state)} \vect{x}(\state, \action)$ (cf. \cite[Theorem~4.3.1]{kallenberg_linear_1983}). This contradicts the optimality of $\scheduler^*$ and we can conclude that such $\vect{x}$ cannot exist in the first place. Thus there exists $\vect{b} \in \reals^\states$ such that $(\vect{g}, \vect{b}, \vect{z}) \in \mathcal{F}_{\mdp}^{\mathsf{MP}}(\boldLambda)$. \qed
\end{proof}

\certificatesAndSubsystems*
\begin{proof}
The directions from left to right directly follow from \Cref{lemma:certificates-mp-exists}, \Cref{lemma:certificates-mp-universal} and \Cref{theorem:certificate-monotonicity}. Hence, we only need to show that if there exists a certificate for $\mdp$, then the corresponding support induces a subsystem that also satisfies the query. In the following, we write $\mdp' = \mdp_{\states'} = (\states' \union \{\exit\}, \actions, \init, \transMat')$ to denote the subsystem induced by $\states'$. Recall that $\SA' = \SA_{\mdp'} = \{(\state, \action) \in \SA \mid \state \in \states' \} \union \{(\exit, \action) \mid \action \in \actions \}$. 

\medskip

\noindent \textbf{Proof of \ref{theorem:certificate-subsystem-exists}}: Suppose there exists $(\vect{x}, \vect{y}, \vect{z}) \in \mathcal{H}_{\mdp}^{\mathsf{MP}}(\boldLambda)$ such that $\stateSupp{\vect{x}, \vect{y}} \subseteq \states'$. Then for all $\state \in \states' \subseteq \states$ we have:
\begin{align*}
\initDistr(\state) + \sum_{(\state', \action') \in \SA} \transMat(\state', \action', \state) \cdot \vect{y}(\state', \action') &= \sum_{\action \in \actions(\state)} \vect{y}(\state, \action) + \vect{x}(\state, \action) + \vect{z}(\state) \\
\sum_{(\state', \action') \in \SA} \transMat(\state', \action', \state) \cdot \vect{x}(\state', \action') &= \sum_{\action \in \actions(\state)} \vect{x}(\state, \action)
\end{align*}
and for all $i \in [k]$ we have:
\begin{align*}
\sum_{(\state, \action) \in \SA} \vect{x}(\state, \action) \cdot \vect{r}_i(\state,\action) + \sum_{\state \in \states} \vect{z}(\state) \cdot {\vect{r}_{\min}}(i) &\geq \lambda_i
\end{align*}
Let be an arbitrary $a' \in \actions(\exit)$ and let us define $\vect{x}', \vect{y}' \in \realsnn^{\SA'}$ as follows:
\begin{align*}
\vect{y}'(\state, \action) = 
\begin{cases}
    0,& \text{if } \state = \exit \\
    \vect{y}(\state, \action),              & \text{otherwise}
\end{cases}
\\
\vect{x}'(\state, \action) = 
\begin{cases}
    \sum_{(\state', \action') \in \SA'} \transMat'(\state', \action', \exit) \cdot \vect{y}'(\state', \action') , & \text{if } \state = \exit \land a = a' \\
    0 , & \text{if } \state = \exit \land a \neq a' \\
    \vect{x}(\state, \action), & \text{otherwise}
\end{cases}
\end{align*}
Further, let $\vect{z}' \in \realsnn^{\states' \union \{\exit\}}$ with $\vect{z}'(\state) = \vect{z}(\state)$ for $\state \in \states'$ and $\vect{z}'(\exit) = 0$. We now show that the constructed vectors $(\vect{x}', \vect{y}', \vect{z}') \in \mathcal{H}_{\mdp'}^{\mathsf{MP}}(\boldLambda)$. We then get for all $\state \in \states'$:
\begin{align*}
\initDistr'(\state) + \sum_{(\state', \action') \in \SA'} \transMat'(\state', \action', \state) \cdot \vect{y}'(\state', \action') &= \initDistr(\state) + \sum_{(\state', \action') \in \SA'} \transMat(\state, \action, \state) \cdot \vect{y}(\state', \action')\\
&= \sum_{\action \in \actions(\state)} \vect{y}(\state, \action) + \vect{x}(\state, \action)+ \vect{z}(\state) \\
&= \sum_{\action \in \actions(\state)} \vect{y}'(\state, \action) + \vect{x}'(\state, \action) + \vect{z}'(\state)
\end{align*}
Further, we have $\initDistr'(\exit) + \sum_{(\state', \action') \in \SA'} \transMat'(\state', \action', \exit) \cdot \vect{y}'(\state', \action') = \sum_{\action \in \actions(\bot)} \vect{y}'(\exit, \action) + \vect{x}'(\exit, \action)$. Similarly, for all states $\state \in \states'$:
\begin{align*}
\sum_{(\state', \action') \in \SA'} \transMat'(\state', \action', \state) \cdot \vect{x}'(\state', \action') &= \sum_{(\state', \action') \in \SA} \transMat(\state', \action', \state) \cdot \vect{x}(\state', \action')\\
&= \sum_{\action \in \actions(\state)} \vect{x}(\state, \action)\\
&= \sum_{\action \in \actions(\state)} \vect{x}'(\state, \action)
\end{align*}
Again, we have $\sum_{(\state', \action') \in \SA'} \transMat'(\state', \action', \exit) \cdot \vect{x}'(\state', \action') = \sum_{\action \in \actions(\exit)}\vect{x}'(\exit, \action)$. Lastly, we have for all $i \in [k]$:
\begin{align*}
{}&\sum_{(\state, \action) \in \SA'} \vect{x}'(\state, \action) \cdot \vect{r}'_i(\state,\action) + \sum_{\state \in \states' \union \{\exit\}} \vect{z}'(\state) \cdot {\vect{r}_{\min}}'(i) \\
={}&\sum_{(\state, \action) \in \SA} \vect{x}(\state, \action) \cdot \vect{r}_i(\state,\action) + \sum_{\state \in \states} \vect{z}(\state) \cdot {\vect{r}_{\min}}(i)\\
\geq{}& \lambda_i
\end{align*}
Hence $(\vect{x}', \vect{y}') \in \cqPoly{\mdp'}{\geq}(\boldLambda)$ and by \Cref{lemma:certificates-mp-exists} the statement follows.

\medskip

\noindent \textbf{Proof of \ref{theorem:certificate-subsystem-forall}}: Let $(\vect{g}, \vect{b}, \vect{z}) \in \dqPoly{\mdp}{\geq}(\boldLambda)$ such that $\supp{\vect{g} - {\vect{R}_{\min}} \vect{z}} \subseteq \states'$. Then we have:
\begin{align*}
\vect{g}(\state) &\leq \sum_{\state' \in \states} \transMat(\state, \action, \state') \cdot \vect{g}(\state') &\text{for all } (\state, \action) \in \SA\\
\vect{g}(\state) + \vect{b}(\state) &\leq \sum_{\state' \in \states} \transMat(\state, \action, \state') \cdot \vect{b}(\state') + \sum_{i=1}^k \vect{r}_i(\state, \action) \cdot \vect{z}(i) &\text{for all } (\state, \action) \in \SA\\
\vect{g}(\init) &\geq \sum_{i=1}^k \lambda_i \cdot \vect{z}(i) \qquad \sum_{i=1}^k \vect{z}(i) \geq 1 &\\
\vect{g}(\state) &\geq \sum_{i=1}^k \vect{z}(i) \cdot {\vect{r}_{\min}}(i) & \text{for all } \state \in \states
\end{align*}
We now construct corresponding $\vect{g}' \in \realsnn^{\states' \union \{\exit\}}$, $\vect{b}' \in \reals^{\states' \union \{\exit\}}$ and $\vect{z}' \in \realsnn^{[k]}$ and show that $(\vect{g}', \vect{b}', \vect{z}') \in \dqPoly{\mdp'}{\geq}(\boldLambda)$. We set $\vect{z}' = \vect{z}$ and define $\vect{g}'$ and $\vect{b}'$ as follows:
\[
\vect{g}'(\state) = 
\begin{cases}
    \sum_{i=1}^k \vect{z}(i) \cdot {\vect{r}_{\min}}(i), & \text{if } \state = \exit \\
    \vect{g}(\state), & \text{otherwise}
\end{cases}
\qquad
\vect{b}'(\state) = 
\begin{cases}
    \max_{\state' \in \states} \vect{b}(\state'),& \text{if } \state = \exit \\
    \vect{b}(\state),              & \text{otherwise}
\end{cases}
\]
We directly see that
\begin{align*}
\vect{g}'(\state) = \vect{g}(\state) &\leq \sum_{\state' \in \states} \transMat(\state, \action, \state') \cdot \vect{g}(\state')\\
&= \sum_{\state' \in \states'} \transMat(\state, \action, \state') \cdot \vect{g}(\state') + \sum_{\state' \in \states \setminus \states'}\transMat(\state, \action, \state') \cdot \vect{g}(\state')\\
&= \sum_{\state' \in \states'} \transMat(\state, \action, \state') \cdot \vect{g}(\state') + \sum_{\state' \in \states \setminus \states'}\transMat(\state, \action, \state') \cdot (\sum_{i=1}^k \vect{z}(i) \cdot {\vect{r}_{\min}}(i))\\
&= \sum_{\state' \in \states' \union \{\exit\}} \transMat'(\state, \action, \state') \cdot \vect{g}'(\state')
\end{align*}
for all $(\state, \action) \in \SA'$ with $\state \neq \exit$. Let us define $\vect{c}(\state, \action) = \sum_{i=1}^k \vect{r}_i(\state, \action) \cdot \vect{z}(i)$ for all $(\state, \action) \in \SA'$. Then, for all $(\state, \action) \in \SA'$ with $\state \neq \exit$ we have:
\begin{align*}
\vect{g}'(\state) + \vect{b}'(\state) &= \vect{g}(\state) + \vect{b}(\state)\\
&\leq \sum_{\state' \in \states} \transMat(\state, \action, \state') \cdot \vect{b}(\state') + \vect{c}(\state, \action)\\
&= \sum_{\state' \in \states'} \transMat(\state, \action, \state') \cdot \vect{b}(\state') + \sum_{\state' \in \states\setminus\states'} \transMat(\state, \action, \state') \cdot \vect{b}(\state') + \vect{c}(\state, \action)\\
&\leq \sum_{\state' \in \states'} \transMat(\state, \action, \state') \cdot \vect{b}(\state') + \sum_{\state' \in \states\setminus\states'} \transMat(\state, \action, \state') \cdot (\max_{\state'' \in \states} \vect{b}(\state''))+ \vect{c}(\state, \action)\\
&= \sum_{\state' \in \states'} \transMat(\state, \action, \state') \cdot \vect{b}'(\state') + \sum_{\state' \in \states\setminus\states'} \transMat(\state, \action, \state') \cdot \vect{b}'(\exit)+ \vect{c}(\state, \action)\\
&= \sum_{\state' \in \states' \union \{\exit\}} \transMat'(\state, \action, \state') \cdot \vect{b}'(\state') + \vect{c}(\state, \action)
\end{align*}
For $(\exit, \action) \in \SA'$ we have
\begin{align*}
\vect{g}'(\exit) + \vect{b}'(\exit)
&=\sum_{i=1}^k \vect{z}(i) \cdot {\vect{r}_{\min}}(i) + \transMat'(\exit, \action, \exit) \cdot \vect{b}'(\exit)\\
&\leq \sum_{i=1}^k \vect{z}'(i) \cdot \vect{r}'_i(\exit, \action) + \sum_{\state' \in \states' \union \{\exit\}} \transMat'(\exit, \action, \state') \cdot \vect{b}'(\state') 
\end{align*}
Lastly, we have $\vect{g}'(\init) = \vect{g}(\init)$. With that, we can conclude $(\vect{g}', \vect{b}', \vect{z}') \in \dqPoly{\mdp'}{\geq}(\boldLambda)$ and with \Cref{lemma:certificates-mp-universal} the statement follows. \qed
\end{proof}

\section{MILPs for Finding Witnessing Subsystems}
\label{appendix:big-m}
\subsection{MILPs for Reachability}
Recall the MILPs in \Cref{fig:reachability-milps}. We now touch upon the choice of $M$. 
\paragraph{MILPs for \universalDQ-queries.}
Firstly, we note that for the MILP of \universalDQ-queries, we can impose the additional constraint $\sum_{i \in [k]} \vect{z}(i) = 1$ if $\gtrsim \ = \ \geq$ and $\sum_{i \in [k]} \vect{z}(i) \leq 1$ if $\gtrsim \ = \ >$. For $\geq$, observe that given a certificate $(\vect{x}, \vect{z}) \in \dqPoly{\mdp}{\geq}(\pmb{\lambda})$, we can simply rescale with $\gamma = \frac{1}{\sum_{i \in [k]} \vect{z}(i)}$, i.e. $(\gamma \cdot \vect{x}, \gamma \cdot \vect{z}) \in \dqPoly{\mdp}{\gtrsim}(\pmb{\lambda})$, $\sum_{i \in [k]} \gamma \cdot \vect{z}(i) = 1$ and $\supp{\gamma \cdot \vect{x}} = \supp{\vect{x}}$. Analogously, we proceed for $>$. Imposing these additional constraints, ensures that $\vect{x}$ is bounded and an upper bound can be found via LP \cite[Theorem~3.4]{de_alfaro_formal_1997}. As a consequence of \cite[Theorem~3.4]{de_alfaro_formal_1997}, we can also simply choose $k$ as upper bound.
\paragraph{MILPs for \existsCQ-queries.}
Unlike in the single-objective setting \cite{funke_farkas_2020, jantsch_certificates_2022}, the set $\cqPoly{\mdp}{\gtrsim}(\boldLambda)$ is generally unbounded (see \cite[Example~4.3]{jantsch_certificates_2022}). Note that if a minimal witnessing subsystem is given, its certificate can be easily determined and can serve as upper bound. Obviously, it is thus difficult to determine an upper bound $M$ a priori. Here, we resort to indicator constraints, i.e.\ constraints of the form $\gamma(\state) = 0 \implies \vect{y}(\state, \action) = 0$. These constraints are supported by Gurobi \cite{gurobi_optimization_llc_gurobi_2023}.
\subsection{MILPs for Mean-Payoff}
To find minimal witnessing subsystem for mean-payoff queries, we can consider the MILPs shown in \Cref{subfig:mp-milps-universal}. Like for reachability, we use the Big-$M$ encoding. Let us briefly discuss the choice of $M$. Like for reachability, we can impose the additional constraints on $\vect{z}$. Then $\vect{g}$ can again be bounded, e.g. by considering the absolute sum of the smallest and largest rewards (see e.g. \cite{kallenberg_linear_1983}). It is well known that $\vect{x}$ is bounded from above by $\vect{1}$, see e.g. \cite{kretinsky_ltl-constrained_2021}. For the MILP for $\existsCQ$-queries, $\vect{y}$ is again generally unbounded. Here, we also resort to indicator constraints.
\begin{figure}[!t]
\centering
\scriptsize{
\begin{subfigure}[t]{0.475\textwidth}
\centering
$\begin{aligned}
&\text{min } \sum_{\state \in \states} \boldGamma(\state) \text{  subject to:} \\
&\boldGamma \in \{0, 1\}^\states \text{ and } (\vect{x}, \vect{y}, \vect{z}) \in \mathcal{H}_{\mdp}^{\mathsf{MP}}(\boldLambda)\\
&\forall (\state, \action) \in \SA \centerdot \vect{x}(\state, \action) \leq \boldGamma(\state) \cdot M\\
&\forall (\state, \action) \in \SA \centerdot \vect{y}(\state, \action) \leq \boldGamma(\state) \cdot M
\end{aligned}$
\caption{\scriptsize{MILP for $\existsCQ$-mean-payoff queries}}
\label{subfig:mp-milps-exists}
\end{subfigure}%
\hfill
\begin{subfigure}[t]{0.475\textwidth}
\centering
$\begin{aligned}
&\text{min } \sum_{\state \in \state} \boldGamma(\state)\text{  subject to:}\\
&\boldGamma \in \{0, 1\}^\states \text{ and } (\vect{g}, \vect{b}, \vect{z}) \in \mathcal{F}_{\mdp}^{\mathsf{MP}}(\boldLambda)\\
&\forall \state \in \states \centerdot \vect{g}(\state) - \sum_{i=1}^k \vect{z}(i) \cdot \vect{r}_{\min}(i) \leq \boldGamma(\state) \cdot M
\end{aligned}$
\caption{\scriptsize{MILP for $\universalDQ$-mean-payoff queries}}
\label{subfig:mp-milps-universal}
\end{subfigure}%
}
\caption{MILPs for finding minimal witnessing subsystems for mean-payoff queries.}
\end{figure}

\section{Supplementary Material for \Cref{section:experiments}}
\label{appendix:experiments}
Our implementation, experiments and results are made available on Zenodo~\cite{baier_certificates_2024}.

\medskip

\noindent\textbf{Storm results.} The runtimes of \textsc{Storm} in seconds are shown in \Cref{table:storm-results}. We remark that we verified \universalDQ-queries $\query$ by considering the dual \existsCQ-queries $\neg\query$. Note that for some queries, we encountered an error, denoted with \textbf{err}. Note that we were unable to verify the queries of \textsf{zero} with \textsc{Storm}, as the queries were not supported. We refer to the log files in \cite{baier_certificates_2024}. Lastly, we note that the \textsc{Storm} build time is faster than the build time of our implementation, because we have implemented the product construction in Python. The reason is that \textsc{Storm}'s product construction is not available through its Python API.
\begin{table}
\centering
\setlength{\tabcolsep}{4pt}
\def\arraystretch{1}%
\scriptsize
\begin{tabular}{llll|rr}
Model  & Type & $k$ & \# & \textsc{Storm} build time & \textsc{Storm} verification \\
\hline
\multirow[c]{2}{*}{\textsf{coin3}} & $(\exists, \land)$ & 2 & 5 & 0.012 & 0.002 \\
 & $(\forall, \lor)$ & 2 & 5 & 0.012 & 0.002 \\
\multirow[c]{2}{*}{\textsf{coin4}} & $(\exists, \land)$ & 2 & 5 & 0.013 & 0.004 \\
 & $(\forall, \lor)$ & 2 & 5 & 0.012 & 0.003 \\
\multirow[c]{2}{*}{\textsf{coin5}} & $(\exists, \land)$ & 2 & 5 & 0.012 & 0.005 \\
 & $(\forall, \lor)$ & 2 & 5 & 0.014 & 0.005 \\
 \hline
\multirow[c]{2}{*}{\textsf{csn3}} & $(\exists, \land)$ & 3 & 1 & 0.014 & 0.042 \\
 & $(\forall, \lor)$ & 3 & 1 & 0.015 & \textbf{err} \\
\multirow[c]{2}{*}{\textsf{csn4}} & $(\exists, \land)$ & 4 & 1 & 0.029 & 0.063 \\
 & $(\forall, \lor)$ & 4 & 1 & 0.030 & \textbf{err} \\
\multirow[c]{2}{*}{\textsf{csn5}} & $(\exists, \land)$ & 5 & 1 & 0.121 & 0.242 \\
 & $(\forall, \lor)$ & 5 & 1 & 0.115 & \textbf{err} \\
 \hline
\multirow[c]{2}{*}{\textsf{fire3}} & $(\exists, \land)$ & 2 & 5 & 0.051 & 0.022 \\
 & $(\forall, \lor)$ & 2 & 5 & 0.050 & 0.028 \\
\multirow[c]{2}{*}{\textsf{fire6}} & $(\exists, \land)$ & 2 & 5 & 0.094 & 0.042 \\
 & $(\forall, \lor)$ & 2 & 5 & 0.093 & 0.052 \\
\multirow[c]{2}{*}{\textsf{fire9}} & $(\exists, \land)$ & 2 & 5 & 0.154 & 0.073 \\
 & $(\forall, \lor)$ & 2 & 5 & 0.156 & 0.089 \\
\hline\hline
\multirow[c]{2}{*}{\textsf{csn3}} & $(\exists, \land)$ & 3 & 2 & 0.013 & 0.037 \\
 & $(\forall, \lor)$ & 3 & 2 & 0.013 & 0.037 \\
\multirow[c]{2}{*}{\textsf{csn4}} & $(\exists, \land)$ & 4 & 2 & 0.020 & 0.082 \\
 & $(\forall, \lor)$ & 4 & 2 & 0.020 & 0.081 \\
\multirow[c]{2}{*}{\textsf{csn5}} & $(\exists, \land)$ & 5 & 2 & 0.063 & 0.321 \\
 & $(\forall, \lor)$ & 5 & 2 & 0.067 & 0.318 \\
 \hline
\multirow[c]{2}{*}{\textsf{phil3}} & $(\exists, \land)$ & 2 & 3 & 0.016 & 0.036 \\
 & $(\forall, \lor)$ & 2 & 3 & 0.017 & 0.020 \\
\multirow[c]{2}{*}{\textsf{phil4}} & $(\exists, \land)$ & 2 & 3 & 0.069 & 0.773 \\
 & $(\forall, \lor)$ & 2 & 3 & 0.070 & 0.224 \\
 \hline
\multirow[c]{2}{*}{\textsf{sen1}} & $(\exists, \land)$ & 3 & 1 & 0.014 & 0.036 \\
 & $(\forall, \lor)$ & 3 & 1 & 0.014 & 0.035 \\
\multirow[c]{2}{*}{\textsf{sen2}} & $(\exists, \land)$ & 3 & 1 & 0.070 & 0.437 \\
 & $(\forall, \lor)$ & 3 & 1 & 0.068 & 0.368
\end{tabular}
\caption{\textsc{Storm} runtimes}
\label{table:storm-results}
\end{table}

\medskip

\noindent\textbf{Sizes of witnessing subsystems.} Recall that in our experiments we have considered queries with $5$ different bounds for the consensus and firewire models. More specifically, for firewire we consider the labels and queries:
\begin{verbatim}
label "done1" = (s1=8);
label "done2" = (s1=7);
\end{verbatim}
\begin{itemize}
\item $\exists \scheduler \in \schedulers \centerdot \prob^\scheduler(\eventually \texttt{"done1"}) \geq \lambda \land \prob^\scheduler(\eventually \texttt{"done2"}) \geq \lambda$
\item $\forall \scheduler \in \schedulers \centerdot \prob^\scheduler(\eventually \texttt{"done1"}) \geq \lambda \lor \prob^\scheduler(\eventually \texttt{"done2"}) \geq \lambda$
\end{itemize}
where $\lambda \in \{0.01, 0.1325, 0.255, 0.3775, 0.5\}$. For consensus, we consider the labels and queries:
\begin{verbatim}
label "finish1" = pc1=3 & pc2=3 & coin1=1 & coin2=1;
label "finish2" = pc1=3 & pc2=3 & coin1=0 & coin2=0;
\end{verbatim}
\begin{itemize}
\item $\exists \scheduler \in \schedulers \centerdot \prob^\scheduler(\eventually \texttt{"finish1"}) \geq \lambda \land \prob^\scheduler(\eventually \texttt{"finish2"}) \geq \lambda$
\item $\forall \scheduler \in \schedulers \centerdot \prob^\scheduler(\eventually \texttt{"finish1"}) \geq \lambda \lor \prob^\scheduler(\eventually \texttt{"finish2"}) \geq \lambda$
\end{itemize}
where $\lambda \in \{0.05, 0.1125, 0.175, 0.2375, 0.3\}$.
Our implementation computes the witnessing subsystems for theses queries using our MILP approach and returns the best solution that has been found after the time limit. The sizes of the subsystems (relative to the original MDP) are shown in \Cref{fig:witnessing-subsystem-sizes-firewire} and \Cref{fig:witnessing-subsystem-sizes-consensus}. We observe that the subsystems for \universalDQ-queries are significantly larger than for \existsCQ-queries. Additionally, the bound $\lambda$ has a significant influence on the size, particularly for \universalDQ-queries.
\begin{figure}[!h]
\centering
\begin{subfigure}{0.48\textwidth}
\centering
\includegraphics[scale=0.44]{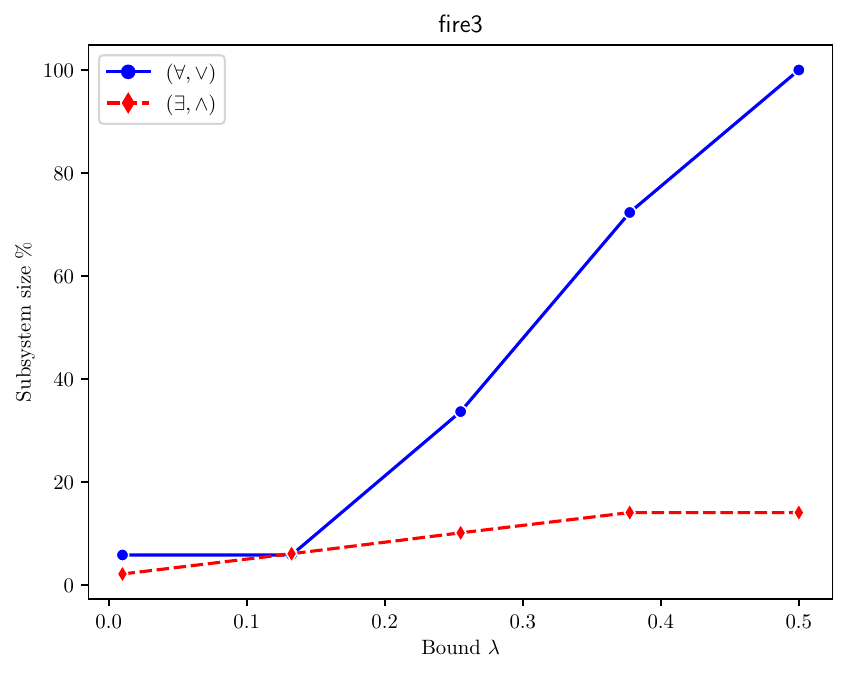}
\caption{\textsf{fire3}}
\end{subfigure}
\hfill
\begin{subfigure}{0.48\textwidth}
\centering
\includegraphics[scale=0.44]{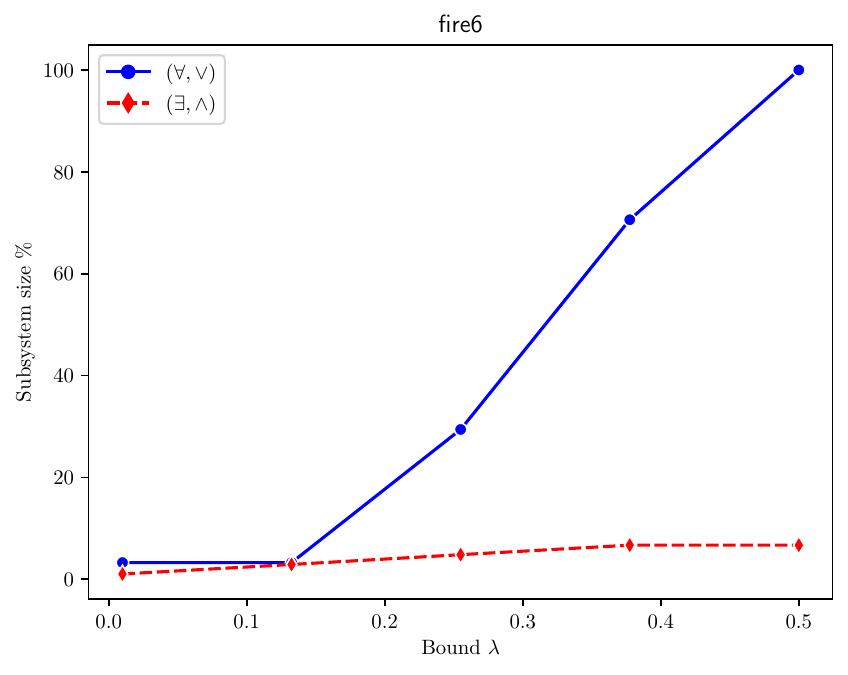}
\caption{\textsf{fire6}}
\end{subfigure}

\begin{subfigure}{1\textwidth}
\centering
\includegraphics[scale=0.44]{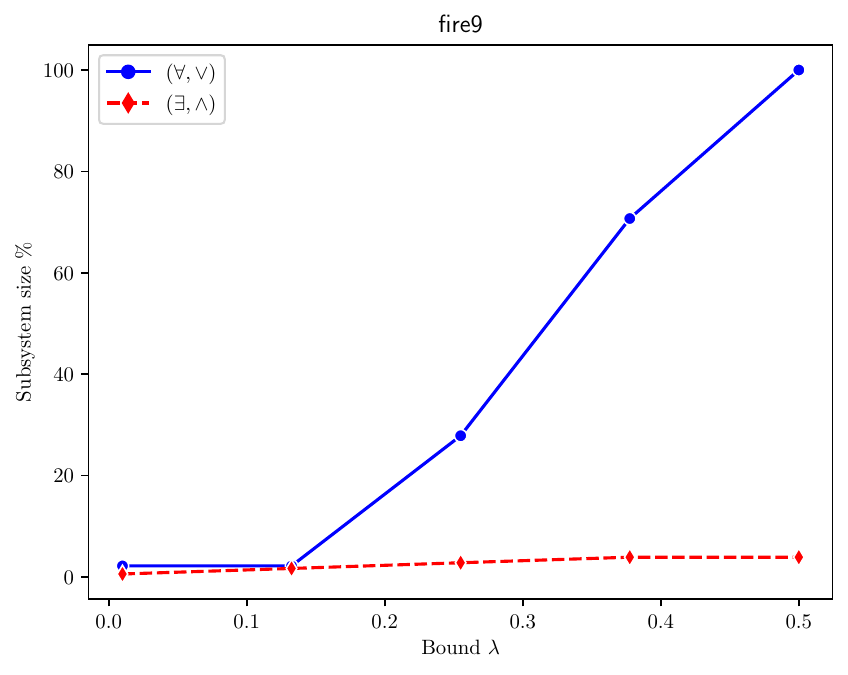}
\caption{\textsf{fire9}}
\end{subfigure}
\caption{Sizes of witnessing subsystems (relative to original MDP) for firewire.}
\label{fig:witnessing-subsystem-sizes-firewire}
\end{figure}
\begin{figure}[!h]
\centering
\begin{subfigure}{0.48\textwidth}
\centering
\includegraphics[scale=0.44]{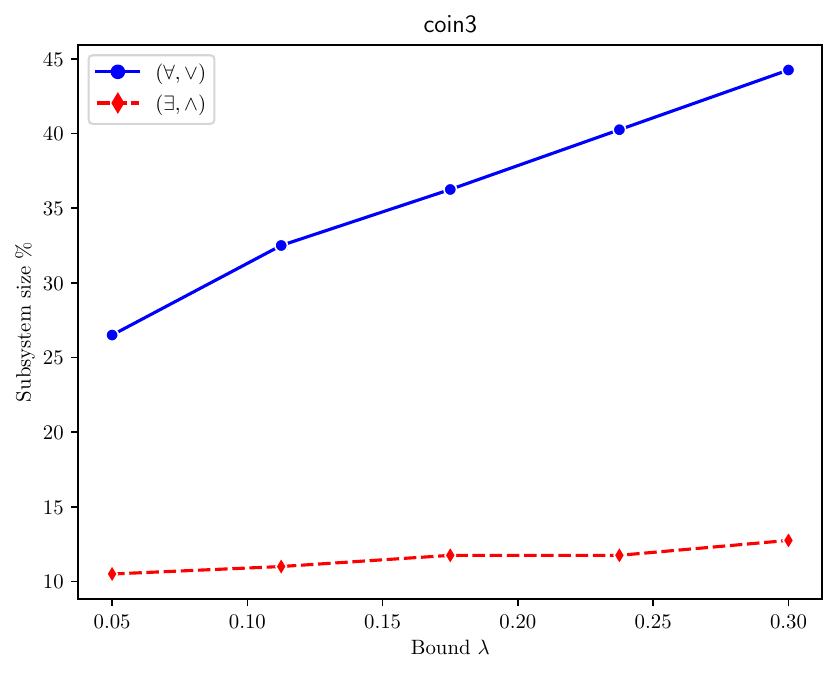}
\caption{\textsf{coin3}}
\end{subfigure}
\hfill
\begin{subfigure}{0.48\textwidth}
\centering
\includegraphics[scale=0.44]{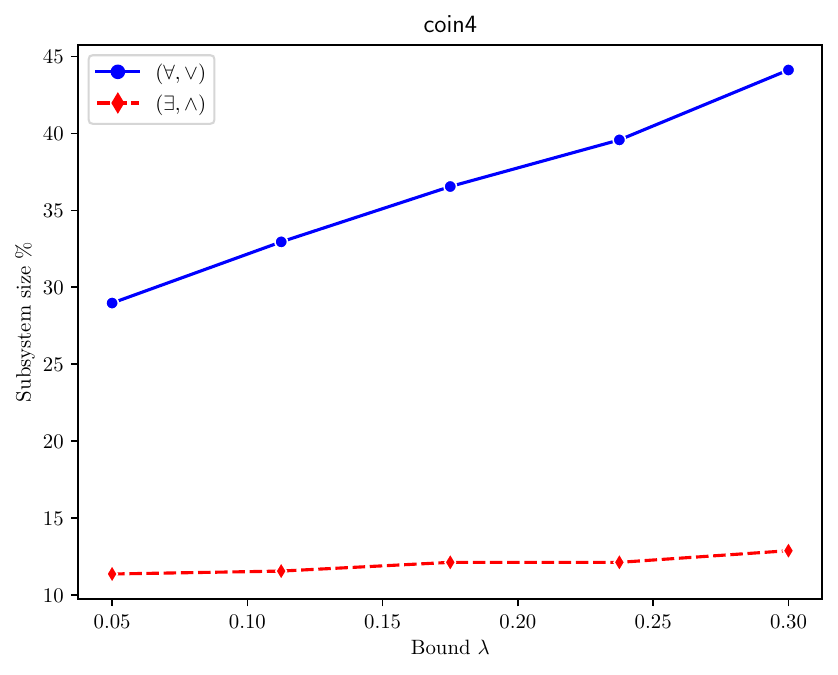}
\caption{\textsf{coin4}}
\end{subfigure}

\begin{subfigure}{1\textwidth}
\centering
\includegraphics[scale=0.44]{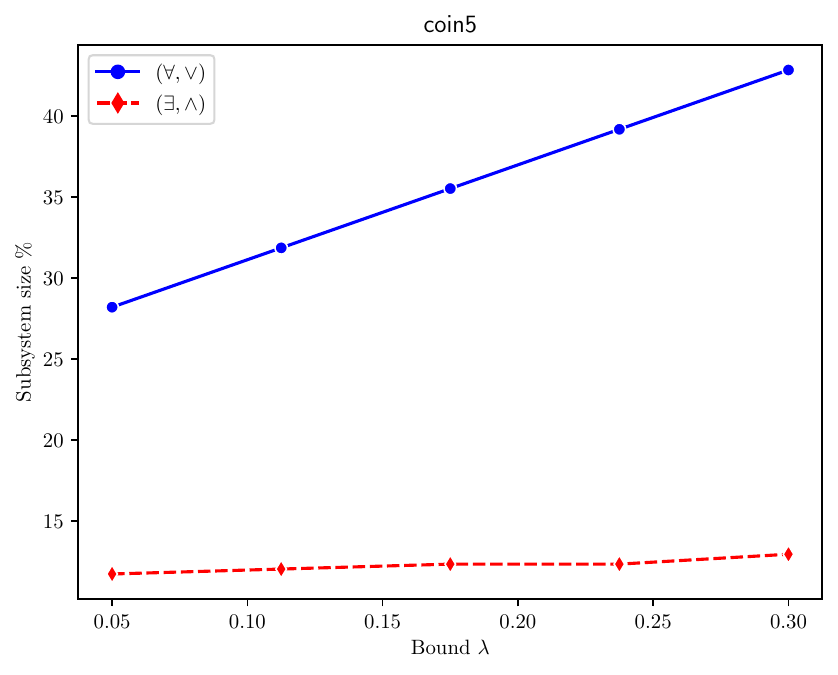}
\caption{\textsf{coin5}}
\end{subfigure}
\caption{Sizes of witnessing subsystems (relative to original MDP) for consensus.}
\label{fig:witnessing-subsystem-sizes-consensus}
\end{figure}

\bigskip

\noindent\textbf{Certification of dual queries.} In our experiments, we consider queries that are satisfied and, thus, for which certificates exist. We also investigate the time it takes for the solver to determine that no certificate exists for the dual query, e.g.\ for a satisfied \existsCQ-query $\query$ we measure the time it takes for the tool to conclude that no certificates exist for the \universalDQ-query $\neg \query$. The results are shown in \Cref{table:runtimes-verification}. The column \textsf{BuildDual} describes the time for building the model for the dual. The column \textsf{CertDual} describes the time for concluding that no certificate exists for the dual query. The column \textsf{CertTotal} describes the total time of \textsf{Cert} and \textsf{CertDual}. We observe that the time for determining that no certificate exists seems to be slightly faster than the time for computing the certificate.

\begin{table}[!h]
\centering
\setlength{\tabcolsep}{4pt}
\def\arraystretch{1}%
\scriptsize
\begin{tabular}{llllll|ccccc}
\toprule
 &  &  &  &  &  & \textsf{Build} & \textsf{BuildDual} & \textsf{Cert} & \textsf{CertDual} & \textsf{CertTotal} \\
Model & $\card{\states}$ & $\card{\SA}$ & Type & $k$ & \# & mean & mean & mean & mean & mean\\
\hline
\multirow[c]{2}{*}{\textsf{coin3}} & \multirow[c]{2}{*}{400} & \multirow[c]{2}{*}{592} & $(\exists,\land)$ & 2 & 5 & 0.250 & 0.086 & 0.012 & 0.008 & 0.021 \\
 &  &  & $(\forall,\lor)$ & 2 & 5 & 0.199 & 0.085 & 0.005 & 0.003 & 0.007 \\
\multirow[c]{2}{*}{\textsf{coin4}} & \multirow[c]{2}{*}{528} & \multirow[c]{2}{*}{784} & $(\exists,\land)$ & 2 & 5 & 0.347 & 0.112 & 0.024 & 0.015 & 0.039 \\
 &  &  & $(\forall,\lor)$ & 2 & 5 & 0.264 & 0.112 & 0.019 & 0.005 & 0.025 \\
\multirow[c]{2}{*}{\textsf{coin5}} & \multirow[c]{2}{*}{656} & \multirow[c]{2}{*}{976} & $(\exists,\land)$ & 2 & 5 & 0.424 & 0.144 & 0.017 & 0.021 & 0.038 \\
 &  &  & $(\forall,\lor)$ & 2 & 5 & 0.326 & 0.140 & 0.012 & 0.006 & 0.018 \\
 \hline
\multirow[c]{2}{*}{\textsf{csn3}} & \multirow[c]{2}{*}{410} & \multirow[c]{2}{*}{913} & $(\exists,\land)$ & 3 & 1 & 0.229 & 0.094 & 0.024 & 0.003 & 0.026 \\
 &  &  & $(\forall,\lor)$ & 3 & 1 & 0.158 & 0.092 & 0.024 & 0.002 & 0.026 \\
\multirow[c]{2}{*}{\textsf{csn4}} & \multirow[c]{2}{*}{2115} & \multirow[c]{2}{*}{5749} & $(\exists,\land)$ & 4 & 1 & 1.529 & 0.714 & 0.038 & 0.007 & 0.045 \\
 &  &  & $(\forall,\lor)$ & 4 & 1 & 0.944 & 0.701 & 0.029 & 0.042 & 0.071 \\
\multirow[c]{2}{*}{\textsf{csn5}} & \multirow[c]{2}{*}{10610} & \multirow[c]{2}{*}{33493} & $(\exists,\land)$ & 5 & 1 & 13.544 & 7.866 & 0.063 & 0.057 & 0.120 \\
 &  &  & $(\forall,\lor)$ & 5 & 1 & 8.859 & 7.793 & 0.058 & 0.078 & 0.137 \\
 \hline
\multirow[c]{2}{*}{\textsf{fire3}} & \multirow[c]{2}{*}{4093} & \multirow[c]{2}{*}{5519} & $(\exists,\land)$ & 2 & 5 & 2.454 & 0.782 & 0.033 & 0.034 & 0.068 \\
 &  &  & $(\forall,\lor)$ & 2 & 5 & 1.996 & 0.792 & 0.078 & 0.026 & 0.105 \\
\multirow[c]{2}{*}{\textsf{fire6}} & \multirow[c]{2}{*}{8618} & \multirow[c]{2}{*}{12948} & $(\exists,\land)$ & 2 & 5 & 6.045 & 1.724 & 0.044 & 0.061 & 0.105 \\
 &  &  & $(\forall,\lor)$ & 2 & 5 & 4.463 & 1.723 & 0.218 & 0.068 & 0.287 \\
\multirow[c]{2}{*}{\textsf{fire9}} & \multirow[c]{2}{*}{14727} & \multirow[c]{2}{*}{24229} & $(\exists,\land)$ & 2 & 5 & 12.353 & 3.262 & 0.073 & 0.105 & 0.178 \\
 &  &  & $(\forall,\lor)$ & 2 & 5 & 8.317 & 3.277 & 0.506 & 0.093 & 0.599 \\
\end{tabular}

\caption{Runtimes for concluding non-existence of dual certificates.}
\label{table:runtimes-verification}
\end{table}

\end{document}